\documentclass[11pt,twoside]{article}

\usepackage{fancyhdr}
\usepackage{epsfig}
\usepackage{latexsym}
\usepackage{microtype}
\usepackage{lastpage}
\usepackage{graphicx}
\usepackage{lipsum}
\usepackage[latin9]{inputenc}
\usepackage{amsmath}
\usepackage{amssymb}
\usepackage{amsfonts}
\usepackage{enumitem}
\usepackage[ruled, lined, longend]{algorithm2e}
\SetKwInOut{KwIn}{Input}
\SetKwInOut{KwOut}{Output}

\usepackage[hidelinks]{hyperref}
\usepackage[dvipsnames*,svgnames]{xcolor}
\hypersetup{colorlinks=true,allcolors=DarkBlue,linkcolor=black}



\setlength{\textwidth}{15cm}
\setlength{\textheight}{18.5cm}
\setlength{\evensidemargin}{1.9cm}
\setlength{\oddsidemargin}{1.9cm}

\newtheorem{theorem}{Theorem}
\newtheorem{definition}[theorem]{Definition}

\newtheorem{corollary}[theorem]{Corollary}
\newtheorem{lemma}[theorem]{Lemma}
\newtheorem{remark}[theorem]{Remark}

\newtheorem{case}{Case}

\newtheorem{example}[theorem]{Example}

\newenvironment{proof}{\par\noindent {\bf Proof:}\ \ \ }{\hfill$\Box$\hspace{1ex}}



\setcounter{page}{1}

\pagestyle{fancy}
\thispagestyle{empty}

\newcommand{\TheAuthor}{}
\newcommand{\Author}[1]{\renewcommand{\TheAuthor}{#1}}

\newcommand{\TheTitle}{}
\newcommand{\Title}[1]{\renewcommand{\TheTitle}{#1}}

\Author{Eleni Mandrali}
\Title{Describing Weighted Safety with Weighted LTL\\ over Product $\omega$-valuation Monoids}
\fancyhead{} 

\fancyhead[CE]{\TheAuthor}
\fancyhead[CO]{\TheTitle}

\fancyhead[LE,RO]{\thepage}
\fancyfoot{} 


\newcommand\blfootnote[1]{%
	\begingroup
	\renewcommand\thefootnote{}\footnote{#1}%
	\addtocounter{footnote}{-1}%
	\endgroup
}

\begin{document}

	\blfootnote{This postdoctoral research is cofinanced by
		Greece and European Union (European Social Fund) through the Operational
		program "Development of human resources, education and lifelong learning " in
		the context of the Project "REINFORCEMENT OF POSTDOCTORAL RESEARCHERS- Second
		Cycle " (MIS\ 5033021) that is implemented by State Scholarship Foundation
		(IKY). }   	
	
	\parindent=8mm

	\vspace{1cm}
	\begin{center}
		{\Large\bf {Describing weighted safety with weighted LTL}
			
			\vspace{2mm}
			{\Large\bf  over product $\omega$-valuation monoids }
		}
	\end{center}
	\vspace{4mm}

	\begin{center}
		{\large Eleni \textsc{Mandrali}}\footnote{This work is a part of the author's post-doctoral research at the Department of Mathematics of the Aristotle University of Thessaloniki, 54124 Thessaloniki,
			\c Greece, Email: {\tt elemandr@math.auth.gr, elemandr@gmail.com}}
	\end{center}
	\vspace{3ex}
	
	\date{}
	
	\begin{abstract}
		
		We define the notion of $k$-safe infinitary series over idempotent ordered
		totally generalized product $\omega $-valuation monoids that satisfy
		specific properties. For each element $k$ of the underlying structure
		(different from the neutral elements of the additive, and the multiplicative
		operation) we determine two syntactic fragments of the weighted $LTL$ with
		the property that the semantics of the formulas in these fragments are $k$%
		-safe infinitary series. For specific idempotent ordered totally generalized
		product $\omega $-valuation monoids we provide algorithms that given a
		weighted B\"{u}chi automaton and a weighted $LTL$ formula in these
		fragments, decide whether the behavior of the automaton coincides with the
		semantics of the formula.

		\smallskip
		
		\noindent
		{\bf Keywords:} weighted safety, valuation monoids, weighted LTL
		
	\end{abstract}

	\section{\protect\bigskip Introduction}
	
	Verifying the reliability of information and communication systems with the
	use of mathematical logic has been an important research field over the last
	years. A successful paradigm for achieving formal verification with the use
	of mathematical logic is model checking. Linear Temporal Logic (\textit{LTL}
	for short), introduced by Pnueli in \cite{Pn-Th}, has been successfully used
	in checking the correctness of programs with important applications in
	industry \cite{Ba-Ka}. One of the most efficient ways to develop
	model-checking tools incorporating \textit{LTL} is the automata-theoretic
	approach proposed by Wolper and Vardi in \cite{Va-Wo}. In addition, \textit{%
		LTL} is a logical formalism capable of describing properties of the linear
	behavior of systems like safety and liveness. The former expresses the
	fact that "nothing bad" can happen and is a very important part of verifying
	the reliability of programs \cite{Ba-Ka}. Safety $LTL$ formulas (cf. \cite%
	{Si-Sa}) over infinite words define safety infinitary languages. An
	infinitary language $L$ is a safety language if for every infinite word $w$
	the following holds: if every non-empty prefix of $w$ can be extended to a
	word in $L,$ then $w$ belongs to $L$ as well \cite{Al - Re}.
	
	Designing effective algorithms
	capable of reasoning about quantitative aspects of correctness of systems'
	behavior, as well as identifying and studying quantitative properties that generalize the meaning of their boolean counterparts has been, and is an active research field \cite{No-Pa,Ca-Ap,Dr-Mo,He-Qu,Sa-Bo,Bo-Av,Al-Bo,Ha-Ku}. The notion of weighted safety was for the first time
	introduced in \cite{Ha-Ku}. More precisely, given a rational value $v$, the authors defined the notion of $v$-safety for series, over finite words over an alphabet $A$,
	with rational values. A series $L$ is $v$-safe, if every word $w$, to which
	$L$ assigns a value greater or equal to $v$, has a prefix that cannot be extended to a word over $A$ that is assigned through $L$ to a value less than $%
	v$. A series $L$ is weighted safe if there exists a rational value $v$ such that
	$L$ is $v$-safe. In that paper the authors studied the relationship of
	weighted safe series and weighted automata whose behavior is defined as
	follows: the weight of each path is obtained by adding the rational weights
	of the transitions forming the path, and the value that the behavior of the
	automaton assigns to each word is defined to be the minimum weight of all
	accepting paths of the weighted automaton over this specific word. By determining the
	structural properties of deterministic weighted automata, the authors answered
	decidability questions related to the weighted safety of their behavior. Multi-valued safety and liveness linear-time properties have been introduced in $\cite{Dr-Mo}$ where automata based model-checking techniques have been developed. Quantitative safety and liveness has also been considered recently, in $\cite{He-Qu}$, for infinitary series over complete lattices. In that work, the authors proved important results related to the safety-liveness decomposition of the infinitary series under consideration. They also provided results related to the monitorability of infinitary series that satisfy specific approximation conditions related to safety, and co-safety. It is proved in that work that the two definitions of safe infinitary series, the one presented in $\cite{He-Qu}$, and the one presented in $\cite{Dr-Mo}$, are equivalent. However the definition of liveness in the two works significantly differs, and in consequence different safety-liveness decomposition results are obtained. In $\cite{Sa-Bo}$ the authors further considered a definition of threshold safety for infinitary series over complete lattices. According to that definition, an infinitary series $s$ is threshold safe if for every value $v$ of the domain the infinitary language consisting of all infinite words whose image through $s$ is greater or equal to $v$ is a safety property. They prove that for totally ordered domains the definition of threshold safety and the definition of safety presented in $\cite{He-Qu}$ are equivalent. For such domains they provide a topological characterization of their safe infinitary series. They also provide decidability results for the safety of well known classes of quantitative automata.
	
	In this paper, we firstly aim to define, given a specific threshold $k$ of a weight domain, a notion of weighted safety with respect to the given threshold for
	infinitary series that is capable of applying to a variety of fields. For the
	domain of weights we choose idempotent ordered totally generalized product $%
	\omega$-valuation monoids that satisfy specific properties. Infinitary
	series express the quantitative behavior of infinite executions of systems,
	and the $k$-safe infinitary series that we introduce in this work can
	be understood as series that describe systems whose behavior is
	characterized by the existence of a "quantitative safe" area. This area can
	be described as an infinitary safety language that is formed by all infinite
	words whose coefficient (on the series describing the system's behavior)
	satisfy a specific quantitative property, more specifically exceeds the given
	threshold (according to the order relation induced by the idempotency of the
	monoid). We aim to describe this notion of safety with weighted $LTL$
	formulas, and for specific idempotent ordered totally generalized product $%
	\omega $-valuation monoids provide algorithms that given a weighted B\"{u}%
	chi automaton and a weighted $\mathit{LTL}$ formula whose semantics is $k$-safe, decide whether the behavior of the automaton coincides with the
	semantics of the formula.
	
	To motivate the choice of the underlying weight structure we recall that the theory
	of weighted automata and weighted logics over semirings has important practical applications \cite{Dr-Ku}, however the
	spectrum of applications of weighted models can be broadened by considering
	weighted models and weighted specification languages over abstract algebraic
	structures that are not necessarily semirings. To that direction in \cite%
	{Ch-Do}, Chatterjee, Doyen and Henzinger considered weighted automata over
	the real numbers where the weight of a run over a finite (infinite) word is
	computed with the use of a function that assigns a real value to every
	finite (infinite) run. For finite words, examples of such functions are the
	functions that compute the average value of real numbers and the function of
	sum of real numbers, while for infinite words they used functions as Sup,
	LimSup, LimInf, Limit Average and Discounted Sum. With the use of the above
	functions we can model a wide range of procedures of systems. As is stated
	in \cite{Ch-Do} the peak of energy consumption can be attributed as the
	maximum of a sequence of weights that represent energy consumption, energy
	use can be attributed through the sum of real numbers, while for average
	response time we use the function of Limit Average. In~\cite{Ch-Do} the
	authors presented decidability results and results concerning the expressive
	power of the model for different functions. Similar results were presented
	in \cite{Ch-Al,Ch-Ex,Ch-Pr} where other kinds of automata that use functions
	for the computation of the weight of a run were presented. In \cite{Dr-De,Dr-Me}, Droste and Meinecke, proposed the structures of valuation
	monoids, and $\omega$-valuation monoids as formalisms capable of describing how these automata work for
	different functions. Valuation monoids (resp. $\omega$-valuation monoids) can be described as monoids (resp. complete monoids) equipped with a valuation function (resp. $\omega$-valuation function) such that additionally the additive operation of the monoid and the valuation function (resp. the $\omega$-valuation function) satisfy specific properties. In \cite{Dr-Me} the authors introduced a weighted MSO over
	product valuation monoids, and over product $\omega$-valuation monoids (i.e., valuation, and $\omega$-valuation monoids further equipped with a product operation) and under the
	consideration of specific algebraic properties for the weight structure proved expressive equivalences of fragments of that logic
	with weighted automata over finite, and infinite words. In \cite{Ma-At} a weighted $LTL$\textit{\ }over
	product $\omega $-valuation monoids, and generalized product $\omega$%
	-valuation monoids was presented. In that paper, an effective translation of formulas of a syntactic fragment of the weighted $LTL$ to weighted generalized B\"{u}chi automata with $\varepsilon $%
	-transitions was presented following the construction of \cite{Ma-We}. We recall that the classes of generalized product $\omega$-valuation monoids, and of product $\omega$-valuation monoids introduced in \cite{Ma-At} are subclasses of the class of product $\omega$-valuation monoids defined in \cite[Definition 5.1]{Dr-Me}. Moreover, the class of generalized product $\omega$-valuation monoids presented in \cite{Ma-At}, properly contains the class of product $\omega$-valuation monoids considered in the same work \cite{Ma-At}.
	
	We now present how the content of this paper is structured. In Section~\ref{Related_work}, we compare the work presented in this paper with related literature. In Section \ref{Preliminaries}, we present some preliminary
	notions. In Section \ref{Totally gen pr omega val mon}, we define the class of totally generalized product $\omega $%
	-valuation monoids that properly contains the class of generalized product $%
	\omega $-valuation monoids. We conclude that idempotent totally generalized
	product $\omega $-valuation monoids preserve basic properties of idempotent
	generalized product $\omega $-valuation monoids. We adopt the definitions of weighted generalized B\"{u}chi automata with $%
	\varepsilon $-transitions, and weighted B\"{u}chi automata with $\varepsilon
	$-transitions introduced in \cite{Ma-At}, whithin the framework of totally generalized product $\omega $%
	-valuation monoids, and also conclude that the result of their expressive
	equivalence with weighted B\"{u}chi automata holds for these weight structures as well. In Section \ref{Desribing ws with wltl}, we present the weighted $%
	LTL$ over idempotent ordered totally generalized product $\omega $-valuation
	monoids, which is defined in the same way as the weighted \textit{LTL }in
	\cite{Ma-At}. We identify a syntactic fragment of the weighted $LTL$, the
	fragment of $\vee$-totally restricted $U$-nesting weighted $LTL$ formulas,
	such that each formula of this fragment can be effectively translated to a weighted B\"{u}chi automaton recognizing its semantics. For this translation we follow
	the construction and proof of the corresponding result presented in~\cite%
	{Ma-At} for totally restricted $U$-nesting $LTL$ formulas over idempotent
	ordered generalized product $\omega $-valuation monoids. If $K$ is an
	idempotent ordered totally generalized product $\omega $-valuation monoid,
	and $k\in K\backslash \left\{ \mathbf{0},\mathbf{1}\right\} ,$ we define the
	notion of $k$-safe infinitary series, and for all $k\in K\backslash \left\{
	\mathbf{0},\mathbf{1}\right\} ,$ we determine two syntactic fragments of
	the weighted $LTL$, namely the fragment of $k$-safe totally restricted $U$%
	-nesting weighted $LTL$ formulas, and the fragment of $k$-safe $\vee $%
	-totally restricted $U$-nesting weighted $LTL$ formulas, with the property
	that the semantics of the formulas in these fragments are $k$-safe
	infinitary series. Moreover, we conclude that for every formula in these
	fragments we can effectively construct a weighted B\"{u}chi automaton
	recognizing its semantics. In Section \ref%
	{A_motivating_example}, we present an example on how weighted $LTL$
	formulas with $k$-safe semantics can be used to argue about aspects of the quantitative behavior of
	systems. In Section \ref{Alg_for_ws}, for two specific idempotent
	ordered totally generalized product $\omega $-valuation monoids (resp. a
	specific idempotent ordered generalized product $\omega $-valuation monoid),
	we present an algorithm that given a weighted B\"{u}chi automaton and a $k$%
	-safe $\vee $-totally restricted $U$-nesting weighted $LTL$ formula (resp. $k
	$-safe totally restricted $U$-nesting weighted $LTL$ formula), decides if
	the semantics of the formula coincides with the behavior of the automaton.
	The core of the algorithm is based on the fact that the weighted $LTL$
	formula can be effectively translated to a weighted B\"{u}chi automaton, and
	the quantitative language equivalence problem is decidable for weighted B%
	\"{u}chi automata over these specific structures. For the proof of this last
	decidability result we follow the approach of the constructive proof of the
	corresponding result in \cite{Ch-Do}. We obtain as a corollary that the results of this section can be generalized for specific families of totally generalized, and generalized product $\omega$-valuation monoids. In Section \ref{Conclusion}, we
	present conclusions and future work directions.
	
	\section{Related Work\label{Related_work}}
	
	As it is mentioned in the introduction, the notion of weighted safety was
	introduced for finitary series over the set of rational numbers in \cite%
	{Ha-Ku}. As the authors proved, it follows by that definition that for every
	rational number $v,$ a series $s$ is $v$-safe iff the language consisting of
	all words whose coefficient on $s$ is less than $v$ is a safety language. In
	the current work, given an idempotent ordered totally generalized product $%
	\omega $-valuation monoid $K,$ and a $k\in K\backslash \left\{ \mathbf{0,1}%
	\right\} ,$ we obtain that an infinitary series $r$ is $k$-safe iff the
	language consisting of all infinite words whose coefficient on $r$ is greater or
	equal to $k$ (according to the order relation induced by the idempotency of
	the monoid) is an infinitary safety language. We see this as a
	generalization of the notion of safety introduced in \cite{Ha-Ku}, since a
	corresponding result for infinitary series, as the one that is proved in
	\cite{Ha-Ku}, could be obtained by our definition if the domain of $K$ is $%
	\mathbb{Q}
	\cup \left\{ \infty ,-\infty \right\} ,$ and the natural order $\geq _{K}$
	induced by the idempotency of $K$ coincides with the following order
	relation: for every $k_{1},k_{2}\in K,$ $k_{1}\geq _{K}k_{2}$ iff $\min
	\left( k_{1},k_{2}\right) =k_{1}.$
	Moreover, if we adopt the definition of threshold safety introduced in \cite{Sa-Bo} in the framework of our weight structure, then by Example \ref{example_infinite_image2} we conclude that if an infinitary series is $k$-safe for some specific $k$, then it is not necessarily threshold safe. However, the converse statement is true, i.e., threshold safety implies
	$k$-safety for every element $k$ of the value domain for which $k$-safety is defined. It is the author's understanding that in the framework of weighted $LTL$ the definition of $k$-safety allows the description of more refined quantitative~properties.
	
	A weighted \textit{LTL} over infinite words over the max-plus semiring with
	discounting parameters was defined in \cite{Ma-Co,Ma-We}. In that work, the
	author presented an effective translation of a syntactic fragment of the
	formulas of that logic to equivalent weighted B\"{u}chi automata. For this translation the author followed the
	approach of \cite{Va-Wo}, i.e., the weighted $LTL$ formula was firstly effectively translated to a generalized weighted B\"{u}chi automaton with $\varepsilon$-transitions that simulates the inductive
	computation of the semantics of the formula by relating the states of the
	automaton with the syntax of the given formula. Then, the generalized weighted B\"{u}chi automaton with $\varepsilon$-transitions is effectively translated to an equivalent weighted B\"{u}chi automaton. We note that generalized weighted B\"{u}chi automata with $\varepsilon$-transitions, and weighted B\"{u}chi automata with $\varepsilon$-transitions over the max-plus semiring with discounting were introduced in \cite{Ma-Co,Ma-We}, and their expressive equivalence with weighted B\"{u}chi automata was proved. This result, and the result of the effective translation of weighted $LTL$ formulas to equivalent weighted B\"{u}chi automata have been
	generalized for a subclass of idempotent (ordered generalized) product $%
	\omega $-valuation monoids in \cite{Ma-At}, and in the current work we
	further generalize it for a subclass of idempotent ordered totally
	generalized product $\omega $-valuation monoids following the arguments of
	the proof of \cite{Ma-At}. To our knowledge a quantitative \textit{%
		LTL} was for the first time introduced in \cite{Ku-Lu} where the authors
	defined a weighted\textit{\ LTL} and weighted automata over de Morgan
	Algebras and an effective translation of the formulas of that logic to
	weighted automata was presented. We note that the core of the definition of
	the weighted \textit{LTL }introduced in \cite{Ma-Co,Ma-We,Ma-At} goes back to the definition of weighted MSO logics over semirings by
	Droste, and Gastin in \cite{Dr-We}.
	
	In \cite{Ch-Do} the authors deal with weighted automata over infinite words
	that use as valuation functions the classical $\lim \sup ,$ $\lim \inf ,$
	and $\sup $ of infinitary sequences of rational numbers, and prove that the
	language equivalence problem is decidable for these types of automata. In
	Section \ref{Alg_for_ws}, we prove a corresponding result for weighted B%
	\"{u}chi automata over $K_{2}$,$\ K_{1},$ $K_{3}$, where $K_{1},$ $K_{2},$ $%
	K_{3}$ are specific totally generalized product $\omega $-valuation monoids
	over the domain $%
	\mathbb{Q}
	\cup \left\{ \infty ,-\infty \right\} $ that incorporate the functions $\lim
	\inf ,$ $\lim \sup ,$ $\sup $ respectively in the definition of their $%
	\omega $-valuation functions. As it is stated in the introduction, for the
	proof of this result, we follow the constructive approach of the proof of the
	corresponding result in \cite{Ch-Do}. Core steps of the algorithms presented in Section \ref{Alg_for_ws} are based on basic constructive steps of that proof.
	
	Finally, in \cite{Sm-Op} the authors presented an algorithm that given a
	weighted transition system that models the motion of a robot, and an $LTL$
	formula that specifies the mission of the robot, determines a path of the
	robot that satisfies the specification, and at the same time minimizes the
	value of a given cost function. Weights on the transitions of the weighted
	transition system represent time, and the cost function is defined so that
	it expresses, for every possible route of the robot, the maximum time needed
	between instances of its movement that satisfy a specific optimizing atomic
	proposition. The $LTL$ formula is defined in a way, so that every path
	that satisfies it, satisfies the optimizing atomic proposition infinitely
	often. The example that we present in Section \ref{A_motivating_example}
	is motivated by the need to argue about other quantitative aspects of the
	behavior of systems as the one presented in \cite{Sm-Op}, and to answer questions related to correctness, and trustworthiness \cite{Si-Me}.
	
	\section{Preliminaries\label{Preliminaries}}
	
	Let $A$ be an alphabet, i.e., a finite non-empty set. As usually, we denote
	by~$A^{\ast }$ the set of all finite words over $A$ and we let $A^{+}=A^{\ast
	}\left\backslash \left\{ \varepsilon \right\} \right. ,$ where $\varepsilon $
	is the empty word. The set of all infinite sequences with elements in $A$,
	i.e., the set of all infinite words over $A,$ is denoted by $A^{\omega }.$
	Let $w\in A^{\omega }.$ A word $v\in A^{\omega }$ is called a suffix of $w$,
	if $w=uv$ for some $u\in A^{\ast }.$ Similarly, a word $u\in A^{\ast }$ is a
	called a prefix of $w$ if $w=uv$ for some $v\in A^{\omega }.$ \ Every
	infinite word $w=a_{0}a_{1}\ldots $ with $a_{i}\in A\left( i\geq 0\right) $
	is written also as $w=w\left( 0\right) w\left( 1\right) \ldots $ where $%
	w\left( i\right) =a_{i}\left( i\geq 0\right) $. The word $w_{\geq i}$
	denotes the suffix of $w$ that starts at position $i,$ i.e., $w_{\geq
		i}=w\left( i\right) w\left( i+1\right) \ldots ,$ while the word $w_{<i}$
	denotes the prefix of $w$, $w_{<i}=w\left( 0\right) \ldots w\left(
	i-1\right) .$ Clearly, $w_{<0}=\varepsilon .$
	
	Let $C,K$ be sets. We denote by $B\subseteq
	_{fin}C$ the fact that $B$ is a finite subset of $C$ and we let $%
	\left(C_{fin}\right)^{\omega }=\underset{B\subseteq _{fin}C}{\bigcup }%
	B^{\omega }.$ We denote by $\mathcal{P}\left( C\right) $ the powerset of $C$%
	. An index set $I$ of $C$ is a subset of $C$. A family of elements of $K$
	over the index set $I$, denoted by $\left( k_{i}\right) _{i\in I}$, is a
	mapping $f$ \ from $I$ to $K $ where $k_{i}=f\left( i\right) $ for all $i\in
	I$. We shall denote by $\mathbb{N}$ the set of non-negative integers.
	
	We recall now the definition of B\"{u}chi, Muller, and Rabin automata over $%
	A.$ A B\"{u}chi automaton (Ba for short) over the alphabet $A$ is a tuple $%
	\mathcal{A=}\left( Q,A,I,\Delta ,F\right) $ where $Q$ is a finite state set,
	$\Delta \subseteq Q\times A\times Q$ is the set of transitions, $I\subseteq Q$ is the
	set of initial states, and $F\subseteq Q$ is the set of final states. Let $%
	w=a_{0}a_{1}\ldots \in A^{\omega }$ with $a_{i}\in A\left( i\geq 0\right) .$
	A path $P_{w}$ of $\mathcal{A}$ over $w$ is an infinite sequence of
	transitions $P_{w}=\left( q_{j},a_{j},q_{j+1}\right) _{j\geq 0}$. We let $%
	In^{Q}\left( P_{w}\right) $ be the set of states that appear infinitely
	often along~$P_{w}.$ We will call $P_{w}$ successful if $q_{0}\in I,$ and at
	least one final state appears infinitely often along $P_{w},$ i.e., if $%
	q_{0}\in I,$ and $In^{Q}\left( P_{w}\right) \cap F\neq \emptyset .$ The
	behavior of $\mathcal{A}$ is defined to be the infinitary language $%
	\left\Vert \mathcal{A}\right\Vert =\left\{ w\in A^{\omega }\mid \text{there
		exists a successful path of }\mathcal{A}\text{ over }w\right\} .$ A
	(nondeterministic) Muller automaton (Ma for short) over the alphabet $A$ is
	a tuple $\mathcal{A}=\left( Q,A,p_{0},\Delta ,\mathcal{F}\right) $, where $Q$
	is a finite set of states, $p_{0}\in Q$ is the initial state, $\Delta
	\subseteq Q\times A\times Q$ is the set of transitions, and $\mathcal{F}\subseteq \mathcal{P}\left(Q\right) $ is
	the set of final subsets$.$ A path $P_{w}$ of $\mathcal{A}$ over $w$, and
	the set $In^{Q}\left( P_{w}\right) $ are defined as in the case of B\"{u}chi
	automata. We say that the path $P_{w}$ is successful, if $q_{0}=p_{0}$ and $In^{Q}\left(
	P_{w}\right) \in \mathcal{F}.$ The behavior of $\mathcal{A}$ is again
	defined to be the infinitary language $\left\Vert \mathcal{A}\right\Vert
	=\left\{ w\in A^{\omega }\mid \text{there exists a successful path of }%
	\mathcal{A}\text{ over }w\right\} .$ \ Finally, a (nondeterministic) Rabin
	automaton (Ra for short) is a tuple $\mathcal{A}=\left( Q,A,p_{0},\Delta ,%
	\mathcal{R}\right) $, where $Q$ is a finite set of states, $p_{0}\in Q$ is
	the initial state, $\Delta \subseteq Q\times A\times Q$ is the set of
	transitions, and $\mathcal{R=}\left( L_{j},U_{j}\right) _{j\in J}$ is a
	finite family of pairs of sets of states. Again, a path $P_{w}$ of $\mathcal{A}$
	over $w$, and the set $In^{Q}\left( P_{w}\right) $ are defined as in the
	case of B\"{u}chi automata. We will say that the path $P_{w}$ is successful
	if $q_{0}=p_{0}$, and there exists an index $j\in J$ such that $In^{Q}\left( P_{w}\right) \cap
	L_{j}=\emptyset ,$ and $In^{Q}\left( P_{w}\right) \cap U_{j}\neq \emptyset .$
	The behavior of $\mathcal{A}$ is defined as in the previous two cases of B%
	\"{u}chi, and Muller automata. We recall the following theorem that states
	the expressive equivalence of Muller, B\"{u}chi and Rabin automata.
	
	\begin{theorem} \label{B-M-R}
		\begin{enumerate}[label=(\roman*)]	
			\item For every Muller automaton over $A$ we can construct a B%
			\"{u}chi automaton over $A$ with the same behavior, and vice-versa.
			
			\item For every Rabin automaton over $A$ we can construct a B\"{u}chi
			automaton over $A$ with the same behavior, and vice-versa.
		\end{enumerate}
	\end{theorem}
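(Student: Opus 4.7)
The plan is to establish both equivalences by explicit constructions. I would first prove part (ii) and then reduce part (i) to it via the intermediate Rabin model. The common technical device throughout is the addition of a fresh initial state $p_{0}$ together with transitions $(p_{0},a,q')$ for every $(q,a,q')\in\Delta$ with $q\in I$, which converts a set of Büchi initial states into the single initial state required by the Muller and Rabin models; the reverse passage is trivial since one can always take $I=\{p_{0}\}$.

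For the Büchi-to-Rabin direction in (ii), given $\mathcal{A}=(Q,A,I,\Delta,F)$ I would, after the initial-state reduction, use the single-pair Rabin condition $\mathcal{R}=\{(\emptyset,F)\}$: the requirement $In^{Q}(P_{w})\cap\emptyset=\emptyset$ is automatic, so acceptance collapses to $In^{Q}(P_{w})\cap F\neq\emptyset$, which is precisely the Büchi requirement. For the Rabin-to-Büchi direction, I would build for each pair $(L_{j},U_{j})$ a Büchi automaton $\mathcal{B}_{j}$ with state set $Q\sqcup((Q\setminus L_{j})\times\{\ast\})$: the ``upper'' copy simulates $\mathcal{A}$ freely, while the ``lower'' copy is a clone with transitions restricted to $Q\setminus L_{j}$. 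A nondeterministic one-way switch from the upper to the lower copy encodes the guess ``from here on, $L_{j}$ is never visited'', and the Büchi final set is $(U_{j}\setminus L_{j})\times\{\ast\}$. The disjoint union of the $\mathcal{B}_{j}$'s, with the union of their initial sets, is a Büchi automaton recognizing $\Vert\mathcal{A}\Vert$.

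For part (i), the Büchi-to-Muller direction is immediate: after the initial-state reduction set $\mathcal{F}=\{S\subseteq Q\cup\{p_{0}\}\mid S\cap F\neq\emptyset\}$. The Muller-to-Büchi direction is where the main obstacle lies, since the Muller condition $In^{Q}(P_{w})\in\mathcal{F}$ requires set equality and must be decomposed into: every state of some $F_{i}\in\mathcal{F}$ is visited infinitely often, and no state outside $F_{i}$ is visited infinitely often. My plan is to handle each $F_{i}$ by a separate Büchi automaton $\mathcal{C}_{i}$ with a ``preamble'' phase, where arbitrary states are allowed, and a ``tracking'' phase, entered nondeterministically at some position and restricted to $F_{i}$. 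In the tracking phase the state is enriched by a subset $S\subseteq F_{i}$ recording which elements of $F_{i}$ have been seen since the last reset; when a transition into $q'$ would make $S\cup\{q'\}=F_{i}$, a reset to $\{q'\}$ is performed and the corresponding configuration is declared accepting. One then takes the union of the $\mathcal{C}_{i}$'s over $F_{i}\in\mathcal{F}$ and applies the fresh-initial-state device to collapse to a single initial state.

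The technical crux to be verified is that a successful run of $\mathcal{C}_{i}$ corresponds exactly to a run of $\mathcal{A}$ with $In^{Q}(P_{w})=F_{i}$: the tracking-phase restriction enforces that no state outside $F_{i}$ appears in the tail, and the cycling reset mechanism enforces that each state of $F_{i}$ appears infinitely often, because infinitely many resets can occur only if each element of $F_{i}$ is hit between consecutive resets. Conversely, any run of $\mathcal{A}$ with $In^{Q}(P_{w})=F_{i}$ admits a switching point past which the run lies entirely in $F_{i}$, providing the nondeterministic choice needed to enter the tracking phase. This careful bookkeeping is the only non-routine part of the proof; everything else is structural.
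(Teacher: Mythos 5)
Your constructions are correct, but note that the paper does not prove this statement at all: Theorem~\ref{B-M-R} is recalled as a classical result and the reader is referred to the book of Perrin and Pin \cite{Pe-In} for the proofs. What you give is essentially the standard textbook argument found there --- the single-pair Rabin condition $(\emptyset,F)$ for the Büchi-to-Rabin direction, the ``guess the point after which $L_{j}$ is avoided'' two-copy automaton for Rabin-to-Büchi, and the subset-accumulation (reset) gadget for Muller-to-Büchi --- and the key correctness claim you isolate (infinitely many resets iff every state of $F_{i}$ is seen infinitely often in a tail confined to $F_{i}$) is exactly the right crux and holds for the reason you state. Two cosmetic remarks: the final application of the fresh-initial-state device in the Muller-to-Büchi direction is unnecessary, since the paper's Büchi automata already admit a set of initial states; and the degenerate case $F_{i}=\emptyset$ is harmless because no infinite run over a finite state set has empty infinity set, and your gadget correctly accepts nothing there.
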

	
	\noindent For a detailed reference on the theory of B\"{u}chi, Muller, and Rabin
	automata we refer to \cite{Pe-In}.
	
	\bigskip{} Next, we recall the definition of classical $LTL$, and the
	definition of safety properties (also equivalently called safety languages) and safety formulas (see \cite{Si-Sa}).
	
	\begin{definition}[\cite{Al - Re}]
		\label{Def safe}Let $A$ be an alphabet and $L\subseteq
		A^{\omega }.$ $L$ is a safety property iff for every $w\in A^{\omega }$ the
		following hold: If $\forall i>0,\exists u\in A^{\omega }$ such that $%
		w_{<i} u\in L,$ then $w\in L.$
	\end{definition}

	\begin{example}
		Let $A=\left\{ a,b\right\} $ be an alphabet, $L=a^{+}b^{+}A^{\omega }\subseteq A^{\omega},$
		and $F=a^{+}A^{\omega }\subseteq A^{\omega}$. We consider
		the infinite word $w=a^{\omega }.$ Then, for every $i>0,$ $w_{<
			i} ba^{\omega }\in L,$ however $w\notin L.$ This implies that $L$ is not a
		safety property. We can easily conclude that $F$ is a safety
		property. More precisely, let $\overline{w}\in A^{\omega }$ such
		that for every $i>0,\exists u\in A^{\omega }$ such that $\overline{w}_{<i} u\in F.$
		Then, $w\left( 0\right) =a,$ which implies that $\exists$$j>0,\overline{u}%
		\in A^{\omega }$ such that $\overline{w}=a^{j}\overline{u},$ i.e., $%
		\overline{w}\in F.$
	\end{example}
	
	Let $AP$ be a finite set of atomic propositions. We shall denote the
	elements of $AP$ by $a,b,c,\ldots.$ The syntax of classical $LTL$ over $AP$
	is given by the grammar
	\begin{equation*}
		\varphi ::= true\mid a\mid \lnot \varphi \mid \varphi \wedge \varphi \mid \varphi \vee
		\varphi \mid \bigcirc \varphi \mid \varphi U\varphi \mid \square \varphi
	\end{equation*}%
	where $a\in AP.$ We shall denote by $LTL\left( AP\right) $ the class of all $%
	LTL$ formulas over $AP.$ As usual we identify $\lnot \lnot \varphi $ with $%
	\varphi $ for every $\varphi \in LTL(AP).$ Let now $\varphi \in LTL\left(
	AP\right) ,$ and $w\in \left( \mathcal{P}\left( AP\right) \right) ^{\omega
	}.$ Then, for every $i\geq 0,$ it is defined what it means that $\varphi $
	holds in $w$ at position $i,$ which is denoted by $w,i\models \varphi ,$ inductively
	in the following way:
	
	\begin{itemize}
		\item[-] $w,i\models true$ \ \ \ \ \ \ \ \ \ \ \ \ \ \ \ \ \ \ \ \ \ \ \
		\ \ - $w,i\models \varphi \vee \psi ,$ if $w,i\models \varphi $ or $%
		w,i\models \psi $
		
		\item[-] $w,i\models a,$ if $a\in w\left( i\right) $ \ \ \ \ \ \ \ \ \ \ \ \
		\ \ \ - $w,i\models \varphi \wedge \psi ,$ if $w,i\models \varphi $ and $%
		w,i\models \psi $
		
		\item[-] $w,i\models \lnot \varphi ,$ if $w,i\nvDash \varphi $ \ \ \ \ \ \ \
		\ \ \ \ \ \ \ - $w,i\models \square \varphi ,$ if for every $j\geq i,$ $%
		w,j\models \varphi $
		
		\item[-] $w,i\models \bigcirc \varphi ,$ if $w,i+1\models \varphi $ \ \ \ \
		\ \
		
		\item[-] $w,i\models \varphi U\psi ,$ if there exists $j\geq i$ such that $%
		w,i^{\prime }\models \varphi $ for all $i\leq i^{\prime }<j$, and $%
		w,j\models \psi .$
	\end{itemize}
	
	For an infinite word $w\in \left( \mathcal{P}\left( AP\right) \right) ^{\omega },$ we
	will say that $\varphi $ holds in $w,$ and we will denote by $w\models
	\varphi ,$ iff $w,0\models \varphi.$ We say that two $LTL$ formulas $%
	\varphi ,\psi $ are equivalent and denote by $\varphi \equiv \psi $ iff for
	every $w\in \left( \mathcal{P}\left( AP\right) \right) ^{\omega },$ $%
	w\models\varphi $ iff $w\models \psi $. We also recall that the \textit{weak
		until} operator\textit{\ }$\widetilde{U}$ is defined in the following way $%
	\varphi \widetilde{U}\psi :=\square \varphi \vee \left( \varphi U\psi
	\right) .$ For every $\varphi \in LTL\left( AP\right) $ we let the\textit{\
		language of} $\varphi $, denoted by $\mathcal{L}\left( \varphi \right) ,$ be
	defined as follows: $\mathcal{L}\left( \varphi \right) =\left\{ w\in \left(
	\mathcal{P}\left( AP\right) \right) ^{\omega }\mid w\models \varphi \right\}
	.$
	
	A formula $\varphi$ is a safety formula iff $\mathcal{L}\left( \varphi
	\right) $ is a safety property over $\mathcal{P}\left( AP\right) $. We let $%
	sLTL\left( AP\right) $ be the fragment of $LTL\left( AP\right) $ given by
	the following grammar%
	\begin{equation*}
		\varphi ::= true\mid a\mid \lnot a\mid \varphi \wedge \varphi \mid \varphi \vee \varphi
		\mid \bigcirc \varphi \mid \varphi \widetilde{U}\varphi \mid \square \varphi
	\end{equation*}%
	where $a\in AP.$
	
	\begin{theorem}[\cite{Si-Sa}] For every $a \in AP, a, \lnot a$ are safety formulas and if $%
		\varphi ,\psi $ are safety formulas, then so are $\varphi \wedge \psi
		,\varphi \vee \psi ,\bigcirc \varphi ,\varphi \widetilde{U}\psi ,$ and $%
		\square \varphi .$
	\end{theorem}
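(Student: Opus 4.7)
The plan is to proceed by structural induction on the safety formula, verifying in each case that $\mathcal{L}(\varphi) \subseteq (\mathcal{P}(AP))^\omega$ is a safety property in the sense of Definition \ref{Def safe}; equivalently, by contraposition, whenever $w \notin \mathcal{L}(\varphi)$ the word $w$ admits a \emph{bad prefix}, i.e.\ some $w_{<i}$ (with $i>0$) no extension of which lies in $\mathcal{L}(\varphi)$. The base cases $a$ and $\lnot a$ are immediate, since membership in $\mathcal{L}(a)$ or $\mathcal{L}(\lnot a)$ depends only on $w(0)$: choosing $i = 1$, if any extension $w_{<1} u$ lies in $\mathcal{L}(a)$, then $a \in w(0)$ and hence $w \in \mathcal{L}(a)$, and symmetrically for $\lnot a$.

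The conjunction case follows because every extension of $w_{<i}$ lying in $\mathcal{L}(\varphi \wedge \psi) = \mathcal{L}(\varphi) \cap \mathcal{L}(\psi)$ serves simultaneously as an extension into $\mathcal{L}(\varphi)$ and into $\mathcal{L}(\psi)$, so the inductive hypothesis applied separately yields $w \in \mathcal{L}(\varphi) \cap \mathcal{L}(\psi)$. For the disjunction I would argue by contradiction via the bad-prefix formulation: if $w \notin \mathcal{L}(\varphi) \cup \mathcal{L}(\psi)$, the inductive hypothesis produces bad-prefix lengths $i_\varphi, i_\psi > 0$ for $\varphi$ and $\psi$, and any extension of $w_{<\max(i_\varphi, i_\psi)}$ extends both bad prefixes simultaneously, contradicting the assumed extensibility into $\mathcal{L}(\varphi \vee \psi)$. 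For $\bigcirc \varphi$, the shift identity $(w_{<i+1} u)_{\geq 1} = (w_{\geq 1})_{<i} u$ transfers the extension condition from $w$ with respect to $\bigcirc \varphi$ to $w_{\geq 1}$ with respect to $\varphi$, so the inductive hypothesis applied to $\varphi$ gives $w_{\geq 1} \in \mathcal{L}(\varphi)$, i.e.\ $w \in \mathcal{L}(\bigcirc \varphi)$. The $\square \varphi$ case is analogous: shift by shift, the extension condition for $w$ and $\square \varphi$ yields the extension condition for each $w_{\geq j}$ and $\varphi$ via the identity $(w_{<j+i'} u)_{\geq j} = (w_{\geq j})_{<i'} u$, whence $w_{\geq j} \in \mathcal{L}(\varphi)$ for every $j \geq 0$ by the inductive hypothesis.

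The main obstacle is the weak-until operator $\varphi \widetilde{U} \psi$, which unfolds as $\square \varphi \vee (\varphi U \psi)$. I would argue by contradiction: assume every $w_{<i}$ admits an extension lying in $\mathcal{L}(\varphi \widetilde{U} \psi)$ while $w \notin \mathcal{L}(\varphi \widetilde{U} \psi)$. Since $w \not\models \square \varphi$, pick the smallest $j_0 \geq 0$ with $w_{\geq j_0} \notin \mathcal{L}(\varphi)$; unpacking $w \not\models \varphi U \psi$ together with the minimality of $j_0$ forces $w_{\geq j} \notin \mathcal{L}(\psi)$ for every $0 \leq j \leq j_0$, because otherwise $\varphi$ would hold on $\{0,\ldots,j-1\}$ (by minimality) together with $\psi$ at $j$, giving $w \models \varphi U \psi$. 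Applying the inductive hypotheses produces a bad-prefix length $m_\varphi$ for $\varphi$ relative to $w_{\geq j_0}$ and bad-prefix lengths $m_j$ for $\psi$ relative to $w_{\geq j}$; setting $N = \max\bigl(\{j_0 + m_\varphi\} \cup \{j + m_j : 0 \leq j \leq j_0\}\bigr)$, any witnessing extension $w_{<N} u \in \mathcal{L}(\varphi \widetilde{U} \psi)$ must either satisfy $\varphi$ at every position (impossible at $j_0$, since the prefix $(w_{\geq j_0})_{<m_\varphi}$ of $(w_{<N}u)_{\geq j_0}$ is $\varphi$-bad) or satisfy $\psi$ at some index $j$ with $\varphi$ holding strictly earlier; the latter forces $j \leq j_0$ (else the $\varphi$-bad prefix is again extended), contradicting the $\psi$-bad prefix at position $j$ and completing the proof.
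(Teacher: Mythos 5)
Your argument is correct, but note that the paper itself offers no proof of this statement: it is imported verbatim from Sistla's work (the citation \cite{Si-Sa} attached to the theorem), so there is nothing in the text to compare your route against. What you supply is a sound, self-contained verification of each closure property directly from Definition \ref{Def safe}, using throughout the contrapositive ``bad prefix'' reading of safety ($w \notin L$ yields some $i>0$ with $w_{<i}u \notin L$ for all $u$). The base cases, $\wedge$, $\vee$, $\bigcirc$ and $\square$ are all handled correctly, and you rightly identify $\widetilde{U}$ as the only delicate case: since $\varphi U \psi$ alone is not safety-preserving, the decomposition $\varphi\widetilde{U}\psi \equiv \square\varphi \vee (\varphi U\psi)$ cannot be dispatched by composing the $\vee$ and $\square$ cases, and your direct argument --- taking the minimal $j_0$ where $\varphi$ fails, deducing that $\psi$ fails at every $j \le j_0$, and then choosing $N$ large enough that $w_{<N}$ simultaneously extends the $\varphi$-bad prefix at $j_0$ and every $\psi$-bad prefix at $j \le j_0$ --- is exactly the right way to rule out both disjuncts for any extension of $w_{<N}$. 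Two cosmetic points: the statement is a list of closure properties rather than an induction, so ``inductive hypothesis'' should just read ``the hypothesis that $\varphi,\psi$ are safety formulas'' (the induction is only needed later, to conclude that every formula of $sLTL(AP)$ is a safety formula); and in the $\widetilde{U}$ case it is worth saying explicitly that $N > j_0$, which holds because the bad-prefix lengths produced by Definition \ref{Def safe} are strictly positive.
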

	
	It follows by the previous lemma that every formula in $sLTL\left(AP\right)$ is a safety formula.
	
	We turn now to monoids and infinitary series. A monoid $\left( K,+,\mathbf{0}%
	\right) $ is an algebraic structure equipped with a non-empty set $K$ and an
	associative additive operation $+$ with a zero element\textbf{\ }$\mathbf{0}%
	, $ i.e., $\mathbf{0}+k=k+\mathbf{0}=k$ for every $k\in K.$ The monoid $K$
	is called commutative if $+$ is commutative.
	
	A monoid $\left( K,+,\mathbf{0}\right) $ is called complete if it is
	equipped, for every index set $I$, with an infinitary sum operation $%
	\sum_{I}:K^{I}\rightarrow K$ such that for every family $\left( k_{i}\right)
	_{i\in I}$ of elements of $K$ we have
	\begin{equation*}
		\underset{i\in\emptyset}{\sum}k_{i}=\mathbf{0},\underset{i\in\left\{
			j\right\} }{\sum}k_{i}=k_{j},\underset{i\in\left\{ j,l\right\} }{\sum }%
		k_{i}=k_{j}+k_{l}\text{ for }j\neq l
	\end{equation*}
	and
	\begin{equation*}
		\underset{j\in J}{\sum }\left( \underset{i\in I_{j}}{\sum}k_{i}\right) =%
		\underset{i\in I}{\sum }k_{i},
	\end{equation*}
	if $\bigcup _{j\in J}I_{j}=I$ and $I_{j}\bigcap I_{j^{\prime}}=\emptyset$
	for $j\neq j^{\prime}.$ We note that every complete monoid is commutative.
	
	Let $K$ be a complete monoid. $K$ is called additively idempotent (or simply
	idempotent), if $k+k=k$ for every $k\in K.$ Furthermore, $K$ is zero-sum
	free if $k+k^{\prime }=\mathbf{0}$ implies $k=k^{\prime }=\mathbf{0}$. It is
	well known that if $K$ is idempotent, then $K$ is necessarily zero-sum free
	\cite{Ak-Ga}. We recall (cf. \cite{Gu-An}) that idempotency gives rise
	to a natural partial order in $K$ defined in the following way. Let $%
	k,k^{\prime }\in K,$ then $k\leq k^{\prime }$ iff $k^{\prime }=k^{\prime
	}+k. $ Equivalently, it holds $k\leq k^{\prime }$ iff $k^{\prime }=k^{\prime
		\prime }+k$ for some $k^{\prime \prime }\in K$ (cf.  {\cite[Chapter 5]{Dr-Ku}}). We shall denote by $k<k^{\prime }$ the fact that $k\leq k^{\prime
	}$, and $k\neq k^{\prime }$. We shall call the natural order of $K$ a total
	order, if $k\leq k^{\prime }$, or $k^{\prime }\leq k$ for all $k,k^{\prime
	}\in K.$ In this case we will call $K$ an idempotent ordered complete
	monoid. Clearly, if~$K$ is an idempotent ordered complete monoid it holds:
	For every $k,k^{\prime }\in K$, if $k\leq k^{\prime }$ is not true, then $%
	k^{\prime }<k$.
	Let $A$ be an alphabet and\textit{\ }$K$ be an idempotent ordered complete
	monoid. An infinitary series over $A$ and $K$ is a mapping $s:A^{\omega
	}\rightarrow K.$ For every $w\in A^{\omega }$ we write $\left( s,w\right) $
	for the value $s\left( w\right) $ and refer to it as the coefficient of $s$
	on $w.$ We denote by $K\left\langle \left\langle A^{\omega }\right\rangle
	\right\rangle $ the class of all infinitary series over $A$ and $%
	K. $ For a series $s\in K\left\langle \left\langle A^{\omega }\right\rangle
	\right\rangle $ we let $Im\left( s\right) =\left\{ \left( s,w\right) \mid
	w\in A^{\omega }\right\} ,$ $supp\left( s\right) =\left\{ w\in A^{\omega
	}\mid \left( s,w\right) \neq \mathbf{0}\right\} $, and $L_{\geq k}\left(
	s\right) =\left\{ w\in A^{\omega }\mid \left( s,w\right) \geq k\right\} $
	for every $k\in K.$ Moreover, for every $k\in K,$ we shall denote by $k,$
	the infinitary series over $A$ and $K$ defined by $\left( k,w\right) =k$ for
	every $w\in A^{\omega }.$ Given two series $s_{1},s_{2}\in K\left\langle
	\left\langle A^{\omega }\right\rangle \right\rangle $ we write $%
	s_{1}\sqsubseteq s_{2}$ if $\left( s_{1},w\right) \leq \left( s_{2},w\right)
	$ for all $w\in A^{\omega },$ and $s_{1}=s_{2}$ if $\left( s_{1},w\right)
	=\left( s_{2},w\right) $ for all $w\in A^{\omega }.$ The quantitative
	inclusion problem, for the series $s_{1},s_{2},$ is to decide whether $%
	s_{1}\sqsubseteq s_{2}$, while the quantitative language-equivalence problem
	decides whether $s_{1}=s_{2}.$
	
	Next, we recall from \cite{Sm-Op} the definition of weighted transition
	systems. $%
	\mathbb{R}
	_{>0}$ stands for the set of positive real numbers.
	
	\begin{definition}
		A weighted transition system is a tuple $T=\left( Q,Q_{0},R,AP,L,we\right) ,$
		where $Q$ is finite set of states, $Q_{0}\subseteq Q$ is the set of initial
		states, $R\subseteq Q\times Q$ is the transition relation, $AP$ is a finite
		set of atomic propositions, $L:Q\rightarrow \mathcal{P}\left( AP\right) $ is
		a labeling function, and $we:R\rightarrow
		\mathbb{R}
		_{>0}$ is a weight function.
	\end{definition}
	
	A run $r$ of $T$ is an infinite sequence of states $r=q_{0}q_{1}q_{2}\ldots $
	with $q_{0}\in Q_{0},$ and $\left( q_{i},q_{i+1}\right) \in R$ for every $%
	i\geq 0.$ Every run $r=q_{0}q_{1}\ldots $ defines a unique word $%
	w_{r}=L\left( q_{0}\right) L\left( q_{1}\right) L\left( q_{2}\right) \ldots
	\in \left( \mathcal{P}\left( AP\right) \right) ^{\omega },$ and a unique
	sequence of weights $\omega _{r}=\left( we\left( q_{i},q_{i+1}\right)
	\right) _{i\geq 0}.$
	
	In the rest of the paper $A$ will stand for an alphabet, and $AP$ for a
	finite set of atomic propositions.
	
	\section{Totally Generalized Product $\protect\omega $-valuation \\Monoids
		\label{Totally gen pr omega val mon}}
	
	We aim to identify structures equipped with $\omega$-valuation functions that are well studied and can be used for practical applications. To this end, we consider a larger subclass of product $\omega$-valuation monoids, than this of generalized product $\omega$-valuation monoids (for short GP-$\omega$-valuation monoids) introduced in \cite[Definition 2]{Ma-At}. More precisely, we define totally generalized product $\omega$-valuation monoids. These are defined as GP-$\omega$-valuation monoids with the only difference that we restrict the requirement of the validity of the distributivity of the $\omega$-valuation function over finite sums to a smaller range of families of elements of the value domain. This allows the identification of examples of totally generalized product $\omega$-valuation monoids that incorporate classical $\omega$-valuation functions like $sup$ and $\text{limsup}$ of countably infinite families of rational numbers. In \cite{Ma-At}, the author considered GP-$\omega$-valuation monoids that satisfy additional properties, sufficient to establish the effective translation of formulas of fragments of the weighted $LTL$, with weights over GP-$\omega$-valuation monoids, to weighted B\"{u}chi automata. In this work, we will identify sufficient properties that totally generalized product $\omega$-valuation monoids should satisfy, so that we can determine fragments of the weighted $LTL$ whose formulas' can be effectively translated to equivalent weighted B\"uchi automata and their semantics are $k$-safe infinitary series ($k \in K\backslash \left\{ \mathbf{0},\mathbf{1}\right\}$). For the translation of our weighted $LTL$-formulas to equivalent weighted B\"uchi automata the properties considered in \cite{Ma-At} are sufficient in the case of totally generalized product $\omega$-valuation monoids as well. We further equip the structure with properties related to the total ordering of the monoid and the monotononicity of the $\omega$-valuation function, so that we can identify fragments of the weighted $LTL$ whose formulas' semantics are $k$-safe infinitary series. Whether the properties considered to obtain the aforementioned results are necessary it remains open.
	\begin{definition}
		\bigskip An $\omega$-valuation monoid $\left( K,+,Val^{\omega },\mathbf{0}\right) $
		is a complete monoid $\left(K,+,\mathbf{0}\right)$ equipped with an $\omega $%
		-valuation function $Val^{\omega }:\left( K_{fin}\right) ^{\omega
		}\rightarrow K$ such that $Val^{\omega }\left( k_{i}\right) _{i\in
			\mathbb{N}
		}=\mathbf{0}$ whenever $k_{i}=\mathbf{0}$ for some $i\geq 0.$ A totally generalized product $\omega $-valuation monoid (for short TGP-$\omega$-valuation monoid) $\left(
		K,+,\cdot ,Val^{\omega },\mathbf{0},\mathbf{1}\right) $ is an $\omega $%
		-valuation monoid $\left( K,+,Val^{\omega },\mathbf{0}\right) $ further
		equipped with a product operation $\cdot :K^{2}\rightarrow K,$ with $\mathbf{%
			1}\in K$ such that $Val^{\omega }\left( \mathbf{1}^{\omega }\right) =\mathbf{%
			1}$ and $\mathbf{0}\cdot k=k\cdot \mathbf{0=0,}$ $\mathbf{1\cdot }k=k\cdot
		\mathbf{1}=k$ for all $k\in K.$ Additionally, for every index set $I$ and $%
		k\in K,$ $\underset{I}{\sum }\left( k\cdot \mathbf{1}\right) =k\cdot $ $%
		\underset{I}{\sum }\mathbf{1}$, and for every $L\subseteq _{fin}K,$ finite
		index sets $I_{j}\left( j\geq 0\right) $ such that for all $j\geq 0,$ it
		holds $k_{i_{j}}\in L\backslash \left\{ \mathbf{0,1}\right\} $ for all $%
		i_{j}\in I_{j},$ or $k_{i_{j}}\in \left\{ \mathbf{0,1}\right\} $ for all $%
		i_{j}\in I_{j},$ we have
		\begin{equation}
			Val^{\omega }\left( \underset{i_{j}\in I_{j}}{\sum }k_{i_{j}}\right) _{j\in
				\mathbb{N}
			}=\underset{\left( i_{j}\right) _{j}\in I_{0}\times I_{1}\times \ldots }{%
				\sum }Val^{\omega }\left( k_{i_{j}}\right) _{j\in
				\mathbb{N}
			} \label{Property Dist}
		\end{equation}
	\end{definition}
	
	Equation 1 (in the sequel called Property 1) states the distributivity of $Val^{\omega}$ over finite sums that are taken over finite families of elements of $K$ that satisfy the following property: For each family, either every element of the family belongs to $\left\{\mathbf{0},\mathbf{1}\right\}$, or every element of the family belongs to $L\backslash \left\{\mathbf{0},\mathbf{1}\right\}$
	(where $L$ is a finite subset of $K$). We recall that Property
	\ref{Property Dist} has also been considered
	in \cite{Me-Do} (for a larger range of families of elements of $K$) for the definition of Cauchy $\omega$-indexed valuation monoids. Every GP-$\omega$-valuation monoid is also a
	TGP-$\omega $-valuation monoid. A GP-$\omega$-valuation monoid (resp. TGP-$%
	\omega $-valuation monoid) $\left( K,+,\cdot ,Val^{\omega },\mathbf{0},%
	\mathbf{1}\right)$ is called idempotent (resp. ordered) if $\left( K,+,%
	\mathbf{0}\right)$ is an idempotent monoid (resp. an ordered monoid). The results in \cite[Lemmas 1 and 2]{Ma-At} can be proved for idempotent TGP-$\omega$-valuation monoids by adopting the arguments of the corresponding proofs. In fact, in \cite[Lemma 1]{Ma-At} we prove properties of the infinitary sum operation~$\sum$ of an idempotent GP-$\omega$-valuation monoid $K$. More specifically, we prove complete idempotence under the condition that the cardinality of the index set, over which $\sum$ is applied, is at most continuum. We further prove that when $\sum$ is applied over all elements of $K$, and over all elements of any subset~$K^{\prime}$ of $K$, resulting sums $\underset{k\in K}\sum k$, $\underset{k\in K^{\prime}}\sum k$ preserve the order indicated by the subset relation, i.e., $\underset{k\in K^{\prime}}\sum k$$\leq$$\underset{k\in K}\sum k$. Lastly, we prove that $\sum$ is compatible with the addition of subsets of $K$.\footnote{For any two subsets $K^{\prime}$,$K^{\prime \prime}$ of $K$ we let the sum of $K^{\prime}$ with $K^{\prime \prime}$ be the set $K^{\prime}+K^{\prime \prime}=\left\{k^{\prime}+k^{\prime \prime} \mid k^{\prime} \in K^{\prime}, k^{\prime \prime} \in K^{\prime \prime}\right\}$.} In \cite[Lemma 2]{Ma-At}, (resp. \cite[Lemma 3ii]{Ma-At}) we identify conditions the elements of two subsets $K^{\prime}, K^{\prime \prime}$ of $K$ (resp. two countably infinite families $\left( k_{i}^{1}\right) _{i\geq 0}$, $\left(
	k_{i}^{2}\right) _{i\geq 0}$ of elements of $K$) should satisfy, so that we can conclude the order relation between $\underset{k\in K^{\prime}}\sum k$ and $\underset{k \in K^{\prime \prime}}\sum k$ (resp. between $Val^{\omega }\left(
	k_{i}^{1}\right) _{i\geq 0}$ and $Val^{\omega }\left( k_{i}^{2}\right) _{i\geq
		0}$). Following the arguments presented in the proof of
	\cite[Lemma 3ii]{Ma-At} we obtain a corresponding result for idempotent TGP-$\omega$-valuation monoids presented in the following lemma.
	
	\begin{lemma}
		\label{Valuation inequality}
		Let $\left( K,+,\cdot ,Val^{\omega },\mathbf{0},\mathbf{1}\right) $ be
		an idempotent TGP-$\omega$-valuation monoid, and $%
		L\subseteq _{fin}K.$ If $\left( k_{i}^{1}\right) _{i\geq 0}$ and $\left(
		k_{i}^{2}\right) _{i\geq 0}$ are families of elements of $L$ such that for
		every $i\geq 0,k_{i}^{1}\leq k_{i}^{2},$ and for all $i\geq 0,$ it holds $%
		\left\{ k_{i}^{1},k_{i}^{2}\right\} \subseteq L\backslash \left\{ \mathbf{0},%
		\mathbf{1}\right\} ,$ or $\left\{ k_{i}^{1},k_{i}^{2}\right\} \subseteq
		\left\{ \mathbf{0},\mathbf{1}\right\} ,$ then $Val^{\omega }\left(
		k_{i}^{1}\right) _{i\geq 0}\leq Val^{\omega }\left( k_{i}^{2}\right) _{i\geq
			0}$.
	\end{lemma}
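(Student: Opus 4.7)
The plan is to exploit the standard characterization of the order in an idempotent monoid, namely that $k_i^1 \leq k_i^2$ is equivalent to $k_i^2 = k_i^1 + k_i^2$, in order to rewrite each $k_i^2$ as a two-element sum and then invoke Property \ref{Property Dist} (the distributivity of $Val^\omega$ over finite sums) to expand $Val^\omega(k_i^2)_{i\geq 0}$ as an infinitary sum whose summands include $Val^\omega(k_i^1)_{i\geq 0}$. The conclusion will then follow from the order-preservation property of the infinitary sum established in \cite[Lemma 1]{Ma-At}.

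More concretely, the first step is to fix, for every $i \geq 0$, the finite index set $I_i = \{1,2\}$ together with the family defined by $k_{i,1} = k_i^1$ and $k_{i,2} = k_i^2$, so that by idempotency $\sum_{j \in I_i} k_{i,j} = k_i^1 + k_i^2 = k_i^2$. Next, I would verify the hypothesis needed to apply Property~\ref{Property Dist}: the finite subset $L \subseteq_{fin} K$ is already provided by the statement, and for each $i$ the assumption $\{k_i^1,k_i^2\} \subseteq L \setminus \{\mathbf{0},\mathbf{1}\}$ or $\{k_i^1,k_i^2\} \subseteq \{\mathbf{0},\mathbf{1}\}$ is exactly the dichotomy required by the definition of a TGP-$\omega$-valuation monoid.

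Applying Property \ref{Property Dist} then yields
\begin{equation*}
Val^{\omega}(k_i^2)_{i\in \mathbb{N}} \;=\; Val^{\omega}\!\Bigl(\underset{j\in I_i}{\sum} k_{i,j}\Bigr)_{i\in \mathbb{N}} \;=\; \underset{(j_i)_i \in I_0 \times I_1 \times \ldots}{\sum} Val^{\omega}(k_{i,j_i})_{i\in \mathbb{N}}.
\end{equation*}
The constant tuple $(1,1,1,\ldots)$ is one particular element of $I_0 \times I_1 \times \ldots$, and the corresponding summand is precisely $Val^{\omega}(k_i^1)_{i \geq 0}$. Invoking the order-preservation property of the infinitary sum proved in \cite[Lemma 1]{Ma-At} (the sum over a subset of an index set is $\leq$ the sum over the whole index set) with respect to the singleton subset $\{(1,1,1,\ldots)\}$ of $I_0 \times I_1 \times \ldots$ gives $Val^{\omega}(k_i^1)_{i\geq 0} \leq Val^{\omega}(k_i^2)_{i\geq 0}$, as required.

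The main subtlety, rather than an obstacle, lies in justifying the hypothesis of Property~\ref{Property Dist}, since the property only applies to families whose elements are \emph{uniformly} in $L \setminus \{\mathbf{0},\mathbf{1}\}$ or \emph{uniformly} in $\{\mathbf{0},\mathbf{1}\}$; this is exactly the reason the corresponding dichotomy has been built into the statement of the lemma, and it is what allows the two-element-sum trick to go through without splitting cases. Everything else reduces to the same arguments used in the proof of \cite[Lemma~3ii]{Ma-At}, which transfer to the TGP setting verbatim because the finite distributivity and the order-preservation of $\sum$ needed there hold in idempotent TGP-$\omega$-valuation monoids as well.
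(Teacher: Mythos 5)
Your proposal is correct and follows essentially the same route the paper takes: the paper proves this lemma by transferring the argument of \cite[Lemma 3ii]{Ma-At}, which is exactly the idempotency trick $k_i^2=k_i^1+k_i^2$ followed by the distributivity of $Val^{\omega}$ over finite sums and the subset-monotonicity of $\sum$, and you correctly identify that the dichotomy hypothesis on $\left\{ k_i^1,k_i^2\right\}$ is precisely what licenses the application of Property \ref{Property Dist} in the TGP setting.
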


	\begin{lemma}
		\label{Lemma-strict-inequality_finite} Let $\left( K,+,\cdot ,Val^{\omega },%
		\mathbf{0},\mathbf{1}\right) $ be an idempotent ordered TGP-$\omega $-valuation monoid, and $k,k_{1},k_{2}\in K$. If $k_{1}<k$,
		and $k_{2}<k$, then $k_{1}+k_{2}<k$.
	\end{lemma}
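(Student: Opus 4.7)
The plan is to exploit the fact that in an idempotent ordered complete monoid the natural order is total and fully characterized by the addition, so that $k_1+k_2$ must coincide with the larger of the two summands. Then the strict inequality is inherited directly.

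First, I would invoke the totality of $\leq$ on $K$: either $k_1\leq k_2$ or $k_2\leq k_1$. By commutativity of $+$ (which holds in every complete monoid, as recalled in the Preliminaries), the two cases are symmetric, so I may assume without loss of generality that $k_1\leq k_2$. By the characterization of the natural order induced by idempotency, this means $k_2=k_2+k_1$. Combining with commutativity gives
\begin{equation*}
k_1+k_2 \;=\; k_2+k_1 \;=\; k_2.
\end{equation*}

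Since by hypothesis $k_2<k$, the chain of equalities above immediately yields $k_1+k_2<k$, completing the argument. The symmetric case $k_2\leq k_1$ produces $k_1+k_2=k_1<k$ in exactly the same way.

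There is no real obstacle here: the lemma is essentially a direct consequence of totality together with the standard characterization $a\leq b \iff b=b+a$ of the natural order on an idempotent monoid. The only subtlety worth flagging is the use of totality, which is precisely the extra hypothesis distinguishing \emph{ordered} TGP-$\omega$-valuation monoids from general idempotent ones; without it, one could only conclude $k_1+k_2\leq k$ in the sense that $k$ is an upper bound via each summand, but not the strict inequality.
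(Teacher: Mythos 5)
Your proof is correct, and it is slightly more direct than the one in the paper. The paper proceeds in two stages: it first derives the weak inequality $k_{1}+k_{2}\leq k$ from $k_{1}\leq k$ and $k_{2}\leq k$ via the completeness/associativity of $+$, and then rules out $k_{1}+k_{2}=k$ by contradiction, using totality to reduce to $k_{2}=k_{1}+k_{2}$ (or symmetrically $k_{1}=k_{1}+k_{2}$). You instead observe up front that totality together with the characterization $a\leq b\iff b=b+a$ forces $k_{1}+k_{2}$ to coincide with the larger of the two summands, so the strict bound $k_{1}+k_{2}<k$ is inherited immediately and the preliminary weak-inequality step becomes unnecessary. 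Both arguments rest on exactly the same two ingredients (totality of the natural order and its defining equation), so the difference is one of packaging rather than substance; your version is shorter and makes explicit the useful fact that in a totally ordered idempotent monoid the sum of finitely many elements is their maximum, while the paper's two-stage version would still yield the non-strict inequality $k_{1}+k_{2}\leq k$ even without totality, as you correctly flag at the end.
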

	
	\begin{proof}
		It holds $k_{1}<k$, and $k_{2}<k$. Thus, $k_{1}\leq k$, and $k_{2}\leq k$,
		Then, $k_{1}\leq k$, implies $k_{1}+k=k$ and since $k_{2}\leq k$, we get $k_{1}+\left(k_{2}+k\right)=\left(k_{1}+k_{2}\right)+k=k$, i.e., $k_{1}+k_{2}\leq k$.
		Assume now that $k_{1}+k_{2}=k$. Since~$K$ is ordered, it holds $k_{1}\leq
		k_{2}$, or $k_{2}\leq k_{1}$. If the first case is true, then by definition
		of the natural order induced by idempotency we get $k_{2}=k_{1}+k_{2}$, and
		thus $k_{2}=k$, which is a contradiction. If the second case is true, i.e.,
		if $k_{1}\leq k$, then following the arguments of the previous case we
		conclude $k_{1}=k$ which is a contradiction. Thus, $k_{1}+k_{2}\leq k$, and $%
		k_{1}+k_{2}\neq k$, i.e., $k_{1}+k_{2}< k$, as desired.
	\end{proof}
	
	In the rest of this paper we will consider idempotent ordered TGP-$\omega$-valuation monoids $\left( K,+,\cdot
	,Val^{\omega },\mathbf{0},\mathbf{1}\right)$ that further satisfy the
	following properties. For all $k,k_{i}\in K$ $(i\geq 1)$ it holds
	\begin{equation}
		Val^{\omega }\left( \mathbf{1,}k_{1},k_{2},k_{3},\ldots \right) =Val^{\omega
		}\left( k_{i}\right) _{i\geq 1},  \label{Property 1}
	\end{equation}
	\begin{equation}
		k=Val^{\omega }\left( k,\mathbf{1},\mathbf{1},\mathbf{1},\ldots \right) ,
		\label{Property 3}
	\end{equation}
	\begin{equation}
		k\leq \mathbf{1.}  \label{Property 5}
	\end{equation}%
	Moreover, for all $k\in K,$ and all $\left(k_{i}\right)_{i\geq 0}\in \left(K_{fin}\right)^{\omega}$ with $k_{i}\geq k$ $(i\geq 0)$ it~holds
	\begin{equation}
		Val^{\omega }\left( k_{0},k_{1},k_{2},\ldots \right) \geq k.
		\label{Property 6}
	\end{equation}
	
	Properties $\ref{Property 1}$, $\ref{Property 3}$ have also been considered in~$\cite{Ma-At}$. As it was stated in~$\cite{Ma-At}$ they express a notion of neutrality of $\mathbf{1}$ with respect to the $\omega$-valuation function, and are sufficient to establish the effective translation of formulas of fragments of the weighted $LTL$ over GP-$\omega$-valuation monoids introduced in that work to weighted B\"uchi automata. Property \ref{Property 5} states the fact that~$\mathbf{1}$ is the maximal element of the total order, while Property \ref{Property 6} expresses a restricted notion of monotonicity.
	
	\begin{example}\cite[Example 2]{Ma-At} \label{Example 1}
		Let $K_{1}=\left( \overline{%
			\mathbb{Q}
		},\sup ,\inf ,\text{liminf,}-\infty ,\infty \right) $ where $\overline{%
			\mathbb{Q}
		}=%
		\mathbb{Q}
		\cup \left\{ \infty ,-\infty \right\} $ and liminf is an $\omega$-valuation
		function from $\left( \overline{%
			\mathbb{Q}
		}_{fin}\right) ^{\omega }$ to $\overline{%
			\mathbb{Q}
		}$ defined by
		\begin{equation*}
			\text{liminf}\left( \left( d_{i}\right) _{i\geq 0}\right) =\left\{
			\begin{array}{ll}
				-\infty &
				\begin{array}{l}
					\text{if there exists }i\geq 0\\\text{with }d_{i}=-\infty%
				\end{array}
				\\
				&  \\
				\infty &
				\begin{array}{l}
					\text{if for all }i\geq 0\text{, }\\d_{i}=\infty \\
				\end{array}
				\\
				&  \\
				\underset{i\geq 0}{\sup}\left( \inf \left\{ d_{k}\mid k\geq
				i,d_{k}\neq \infty \right\} \right) &
				\begin{array}{l}
					\text{if }d_{j}\neq -\infty \text{ for all }\\j\geq 0,
					\text{and there exist}\\\text{infinitely }
					\text{many }i\geq 0\\\text{with }d_{i}\neq \infty%
				\end{array}
				\\
				&  \\
				\inf \left\{ d_{i}\mid i\geq 0\text{ with }d_{i}\neq \infty \right\} & \text{
					otherwise}%
			\end{array}%
			\right.
		\end{equation*}%
		$K_{1} $ is an idempotent ordered  GP-$\omega $%
		-valuation monoid that satisfies Properties \ref{Property 1}, \ref{Property 3}%
		, \ref{Property 5}, and \ref{Property 6}.
	\end{example}

	\begin{example}\label{Example 2}
		We consider the structure $K_{2}=\left( \overline{%
			\mathbb{Q}
		},\sup ,\inf ,\text{limsup,}-\infty ,\infty \right) $ where $\overline{%
			\mathbb{Q}
		}=%
		\mathbb{Q}
		\cup \left\{ \infty ,-\infty \right\} $ and limsup is an $\omega$-valuation
		function from $\left( \overline{%
			\mathbb{Q}
		}_{fin}\right) ^{\omega }$ to $\overline{%
			\mathbb{Q}
		}$ defined by
		\begin{equation*}
			\text{limsup}\left( \left( d_{i}\right) _{i\geq 0}\right) =\left\{
			\begin{array}{ll}
				-\infty &
				\begin{array}{l}
					\text{if there exists }i\geq 0\\\text{with }d_{i}=-\infty%
				\end{array}
				\\
				&  \\
				\infty &
				\begin{array}{l}
					\text{if for all }i\geq 0\text{, }\\d_{i}=\infty \\
				\end{array}
				\\
				&  \\
				\underset{i\geq 0}{\inf}\left( \sup \left\{ d_{k}\mid k\geq
				i,d_{k}\neq \infty \right\} \right) &
				\begin{array}{l}
					\text{if }d_{j}\neq -\infty \text{ for all }\\j\geq 0,
					\text{and there exist}\\\text{infinitely }
					\text{many }i\geq 0\\\text{with }d_{i}\neq \infty%
				\end{array}
				\\
				&  \\
				\sup \left\{ d_{i}\mid i\geq 0\text{ with }d_{i}\neq \infty \right\} & \text{
					otherwise}%
			\end{array}%
			\right.
		\end{equation*}%
		$K_{2} $ is an idempotent ordered  TGP-$\omega $%
		-valuation monoid that satisfies Properties \ref{Property 1}, \ref{Property 3}%
		, \ref{Property 5}, and \ref{Property 6}.
	\end{example}
	
	\begin{example}\label{Example 3}
		We consider the structure $K_{3}=\left( \overline{%
			\mathbb{Q}
		},\sup ,\inf ,\sup_{-\infty },-\infty ,\infty \right) $ where $\overline{%
			\mathbb{Q}
		}=%
		\mathbb{Q}
		\cup \left\{ \infty ,-\infty \right\} $ and $\sup_{-\infty }$ is an $\omega $%
		-valuation function from $\left( \overline{%
			\mathbb{Q}
		}_{fin}\right) ^{\omega }$ to $\overline{%
			\mathbb{Q}
		}$ defined by
		\begin{equation*}
			\sup_{-\infty }\left( \left( d_{i}\right) _{i\geq 0}\right) =\left\{
			\begin{array}{ll}
				-\infty & \text{ if }\exists i\geq 0\text{ with }d_{i}=-\infty \\
				&  \\
				\infty & \text{ if }d_{i}=\infty \text{ for all }i\geq 0 \\
				&  \\
				\sup \left\{ d_{k}\mid k\geq 0,d_{k}\neq \infty \right\} &
				\begin{array}{l}
					\text{if }d_{i}\neq -\infty \text{ for all }i\geq 0,\\\text{and }
					\exists i\geq 0\text{ with }d_{i}\neq \infty .%
				\end{array}%
			\end{array}%
			\right.
		\end{equation*}%
		$K_{3}$ is an idempotent ordered  TGP-$\omega $%
		-valuation monoid that satisfies Properties \ref{Property 1}, \ref{Property 3}%
		, \ref{Property 5}, and \ref{Property 6}.
	\end{example}
	
	The proofs that $K_{2},$ $K_{3}$ are idempotent ordered TGP-$\omega $-valuation monoids that satisfy Properties \ref{Property 1}%
	, \ref{Property 3}, \ref{Property 5}, and \ref{Property 6}
	can be found in the Appendix. We also note that $\omega$-valuation functions over the reals that incorporate classical $sup$, limsup in their definitions have also appeared in
	\cite{Me-Do,Dr-De,Dr-Me}. As it is stated in $\cite{Ch-Do}$, for $ \left( d_{i}\right) _{i\geq 0}$$\in \left( \overline{%
		\mathbb{Q}
	}_{fin}\right) ^{\omega }$ the value $\underset{i\geq 0}{\inf}\left(\sup\left\{ d_{k} \mid k \geq i \right\}\right)=\underset{i\geq 0}{\text {lim}}\left(\sup\left\{ d_{k} \mid k \geq i \right\}\right)$ corresponds to the greatest element of $ \overline{%
		\mathbb{Q}
	}$ that appears infinitely often along the sequence $ \left( d_{i}\right) _{i\geq 0}$, while sup${\left( \left( d_{i}\right) _{i\geq 0}\right)}$ equals the greatest element of $ \overline{%
		\mathbb{Q}
	}$ that appears in $ \left( d_{i}\right) _{i\geq 0}$. Finally, the value $\underset{i\geq 0}{\sup}$$\left(\inf\left\{ d_{k} \mid k \geq i \right\}\right)=\underset{i\geq 0}{\text {lim}}\left(\inf\left\{ d_{k} \mid k \geq i \right\}\right)$ corresponds to the least element of $ \overline{%
		\mathbb{Q}
	}$ that appears infinitely often along the sequence~$ \left( d_{i}\right) _{i\geq 0}$.
	
	\begin{remark}
		\label{Remark tos}
		Let $\left( K,\leq \right) $ be a totally ordered set with a minimum, and a
		maximum element which are respectively denoted by $\mathbf{0},$ and $\mathbf{%
			1}$. Then, $\left( K,\sup_{\leq },\mathbf{0}\right) ,$ $\left( K,\inf_{\leq
		},\mathbf{1}\right) $ are idempotent complete monoids. $\sup_{\leq },$ $%
		\inf_{\leq }$ stand for the supremun, and infimum respectively of possibly
		infinite families of elements of $K,$ and are defined in the classical way
		with respect to $\leq$-ordering of $K$. We define the $\omega$-valuation functions liminf$_{\leq }$, limsup$%
		_{\leq }$, $\sup_{0}$ from $\left( K_{fin}\right) ^{\omega }$ to $K$ in a
		corresponding way with the way the $\omega $-valuation functions liminf,
		limsup, and sup$_{-\infty }$ are defined in Examples \ref{Example 1}, \ref{Example 2}%
		, and \ref{Example 3} respectively. More precisely, the definitions of liminf%
		$_{\leq }$, limsup$_{\leq }$, $\sup_{0}$ are obtained by the definitions of
		liminf, limsup, and sup$_{-\infty }$ by substituting $\inf ,$ $\sup ,$ -$%
		\infty ,$ $\infty $ by $\inf_{\leq},$ $\sup_{\leq},$ $\mathbf{0},$ $\mathbf{1}$
		respectively. We consider the structures $M_{1}=\left( K,\sup_{\leq
		},\inf_{\leq },\text{liminf}_{\leq },\mathbf{0},\mathbf{1}\right) $, $%
		M_{2}=\left( K,\sup_{\leq },\inf_{\leq },\text{limsup}_{\leq },\mathbf{0,1}%
		\right) ,$ $M_{3}=\left( K,\sup_{\leq },\inf_{\leq },\sup_{\mathbf{0}},%
		\mathbf{0,1}\right).$ Then, by adopting the arguments of the proof of \cite[Example 2]{Ma-At}, and of Examples \ref{Example 2}, \ref{Example 3}
		presented in the Appendix we get that $M_{1}$ (resp.  $M_{2},$ $M_{3}$) is an (resp. are)
		idempotent ordered GP-$\omega $-valuation monoid (resp. idempotent ordered TGP-$\omega$-valuation monoids) that satisfies (resp. satisfy) Properties \ref%
		{Property 1}, \ref{Property 3}, \ref{Property 5}, and \ref%
		{Property 6}.
	\end{remark}
	
	In the following example we present totally ordered sets with maximum, and minimum element.
	
	\begin{example}
		Let $N=%
		\mathbb{Z}
		_{>}\cup \left\{ 0\mathbf{,+\infty }\right\} ,$ and $K=N\times N$. We consider the pair $\left(
		K,\preceq \right) $ where $\preceq $ stands for the order relation
		defined in the following way: For every $\left( a_{1},a_{2}\right) ,\left(
		b_{1},b_{2}\right) \in K$ we let $\left( a_{1},a_{2}\right) \preceq
		\left( b_{1},b_{2}\right) $ if $a_{1}<b_{1},$ or if $a_{1}=b_{1}$ and $%
		a_{2}\leq b_{2}.$ Clearly, $\left( K,\preceq \right) $ is totally
		ordered with minimum element $\left( 0,0\right) ,$ and maximum element $\left(
		+\infty ,+\infty \right) .$
		
		We consider now the pair $\left(
		K,\leqq \right) $ where $\leqq $ stands for the order relation
		defined in the following way: For every $\left( a_{1},a_{2}\right) ,\left(
		b_{1},b_{2}\right) \in K$ we let $\left( a_{1},a_{2}\right) \leqq
		\left( b_{1},b_{2}\right) $ if $a_{1}+a_{2}<b_{1}+b_{2},$ or if $%
		a_{1}+a_{2}=b_{1}+b_{2}$ and $a_{1}\leq b_{1}.$ Clearly, $\left( K,\leqq \right) $ is totally ordered with minimum element $\left( 0,0\right) ,$
		and maximum element $\left( +\infty ,+\infty \right) .$
	\end{example}

	In the previous example elements of $K$ could represent two distinct indications of energy
	consumption of a system. If $\left( d_{i}\right) _{i\geq 0}$ is a
	sequence of measurements of these indications at distinct time moments, $\sup_{\leqq }\left( \left( d_{i}\right) _{i\geq 0}\right) $
	corresponds to the least pair $\left( a_{1},a_{2}\right) $ (according to $\leqq$-ordering) with the property that
	at every time moment the overall energy consumption is bounded by the sum $%
	a_{1}+a_{2}$, and whenever it reaches the value $a_{1}+a_{2}$ the first
	indication of energy consumption is bounded by $a_{1}.$ Alternatively,
	elements of $K$ could represent two independent function indications
	of a device. In this case, if $\left( d_{i}\right) _{i\geq 0}$ is a sequence
	of measurements of these indications at distinct time moments, then $%
	\sup_{\preceq }\left( \left( d_{i}\right) _{i\geq 0}\right) $ corresponds to
	the least pair $\left( b_{1},b_{2}\right) $ (according to $\preceq$-ordering) with the property that at every time
	moment the first indication is bounded by $b_{1},$ and whenever it reaches $%
	b_{1},$ the second indication is bounded by $b_{2}.$ Clearly, $\preceq $%
	-ordering, and $\leqq$-ordering of $K$ can be used to argue about quantitative properties
	of different aspects of the behavior of a system whenever priority is of
	interest.
	
	\bigskip
	
	In the rest of the paper, we will refer to
	idempotent ordered GP-$\omega $-valuation monoids (resp. idempotent ordered TGP-$\omega $-valuation monoids) that satisfy Properties \ref%
	{Property 1}, \ref{Property 3}, \ref{Property 5}, and \ref%
	{Property 6} as idempotent ordered GP-$\omega$-valuation monoids (resp. idempotent ordered TGP-$\omega$-valuation monoids).
	
	\bigskip
	
	In \cite{Ma-At} weighted generalized B\"{u}chi automata with $%
	\varepsilon $-transitions ($\varepsilon $-wgBa for short), and weighted B\"{u}chi automata with $%
	\varepsilon $-transitions ($\varepsilon$-wBa for short) over idempotent ordered GP-$%
	\omega$-valuation monoids were introduced, and their expressive equivalence with weighted B\"{u}chi automata (wBa for short) over idempotent ordered GP-$\omega$-valuation monoids was proved (see \cite[Lemmas~4 and 5ii]{Ma-At}).
	The definitions of $\varepsilon$-wgBa, $\varepsilon $-wBa, and wBa over GP-$\omega$-valuation monoids in \cite[Definition 3]{Ma-At} can be generalized for idempotent ordered TGP-$\omega$-valuation monoids in a straightforward way. The expressive equivalence of $\varepsilon$-wgBa, $\varepsilon$-wBa, and wBa over idempotent ordered TGP-$\omega$-valuation monoids can be proved with the same arguments that are used in the proofs of \cite[Lemmas 4 and 5ii]{Ma-At}. We recall the definition of wBa. Let $K$ be an idempotent ordered TGP-$\omega$-valuation monoid.
	
	\begin{definition}
		\label{definition2} A weighted B\"{u}chi automaton over $A$ and $K$
		is a quadruple $\mathcal{A}=\left( Q,wt,I,F\right)$, where $Q$ is
		the finite set of states, $wt:Q\times A \times Q\rightarrow K$ is a mapping assigning weights to
		the transitions of the automaton, $I\subseteq Q$ is the set of initial states, and $F\subseteq Q$ is the set of final states.
	\end{definition}
	
	Let $%
	w=a_{0}a_{1}\ldots \in A^{\omega }$ with $a_{i}\in A\left( i\geq 0\right) .$
	A path $P_{w}$ of $\mathcal{A}$ over $w$ is an infinite sequence of
	transitions $P_{w}=\left( q_{j},a_{j},q_{j+1}\right) _{j\geq 0}$. We let the weight of $%
	P_{w}$ be the value
	\begin{equation*}
		weight_{\mathcal{A}}\left( P_{w}\right) =Val^{\omega }\left( wt\left( q_{j},a_j,q_{j}\right) \right)_{j\geq0}
	\end{equation*}
	The set $In^{Q}\left( P_{w}\right)$ is defined as
	for Ba. The path $P_{w} $ is called \textit{successful} if $q_{0}\in I,$ and $%
	In^{Q}\left( P_{w}\right) \cap F\neq \emptyset $. We shall denote by $succ_{\mathcal{A}}\left(
	w\right)$ the set of all successful paths of $\mathcal{A}$ over $w.$ The
	behavior of $\mathcal{A}$ is the infinitary series $\left\Vert \mathcal{A}%
	\right\Vert :A^{\omega }\rightarrow K$ with coefficients specified, for
	every $w\in A^{\omega }$,
	\begin{equation*}
		\left( \left\Vert \mathcal{A}\right\Vert ,w\right) =\underset{P_{w}\in succ_{%
				\mathcal{A}}\left( w\right) }{\sum }weight_{\mathcal{A}}\left( P_{w}\right).
	\end{equation*}
	
	\begin{example}
		A wBa automaton $\mathcal{M}$ over $K_{2}$ and $A=\left\{a,b\right\}$ is presented in Figure 1. The behavior of $\mathcal{M}$ is defined as follows. $\left( \left\Vert
		\mathcal{M}\right\Vert ,w\right) =\infty $ if $w=b^{\omega },$ $\left(
		\left\Vert \mathcal{M}\right\Vert ,w\right) =2$ if $w=a\overline{w}$ for
		some $\overline{w}\in A^{\omega }$ with a finite number of appearances of $b,$
		$\left( \left\Vert \mathcal{M}\right\Vert ,w\right) =3$ if $w=a\overline{w}$
		for some $\overline{w}\in A^{\omega }$ with an infinite number of appearances
		of $b,$ and $\left( \left\Vert \mathcal{M}\right\Vert ,w\right) =-\infty $
		otherwise. We note that in the graphical representation of the $\mathcal{M}$ presented in Figure 1, we only draw transitions with weight different from $-\infty$.
	\end{example}

	Two wBa are called equivalent if they have the same behavior. A wBa $\mathcal{M=}\left( Q,wt,I,F\right)$ will be called normalized if
	the set of initial states $I$ is a singleton, i.e., if $I=\left\{q_{0}\right%
	\}$. In this case we denote it by $\mathcal{M=}\left(Q,wt,q_{0},F\right)$.
	
	\begin{lemma}
		For every wBa over $A$ and $K$, we can effectively construct an equivalent
		normalized wBa over $A$ and $K$.
	\end{lemma}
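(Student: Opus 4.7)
The plan is to adjoin a fresh initial state $q_{0}\notin Q$ which mimics in one step the behaviour available from any $q\in I$. Given $\mathcal{M}=(Q,wt,I,F)$, define $\mathcal{M}^{\prime}=(Q\cup\{q_{0}\},wt^{\prime},\{q_{0}\},F)$ by letting $wt^{\prime}$ agree with $wt$ on $Q\times A\times Q$, setting $wt^{\prime}(q_{0},a,q^{\prime})=\sum_{q\in I}wt(q,a,q^{\prime})$ for every $a\in A$ and $q^{\prime}\in Q$, and $wt^{\prime}(p,a,q_{0})=\mathbf{0}$ for all $p\in Q\cup\{q_{0}\}$ and $a\in A$. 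This construction is manifestly effective, and $\mathcal{M}^{\prime}$ is normalized by design.

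For $\|\mathcal{M}^{\prime}\|=\|\mathcal{M}\|$ I would fix $w=a_{0}a_{1}\ldots\in A^{\omega}$ and observe that successful paths of $\mathcal{M}^{\prime}$ over $w$ are the sequences $(q_{0},a_{0},p_{1})(p_{j},a_{j},p_{j+1})_{j\geq 1}$ whose tail $(p_{j})_{j\geq 1}$ is in $Q^{\omega}$ and visits $F$ infinitely often; each such path is in natural correspondence with $|I|$ successful paths of $\mathcal{M}$ over $w$, one per choice of first state $q\in I$. The weight of the path in $\mathcal{M}^{\prime}$ is $Val^{\omega}(\sum_{q\in I}wt(q,a_{0},p_{1}),wt(p_{1},a_{1},p_{2}),\ldots)$, and the heart of the proof is to distribute $Val^{\omega}$ across the sum at position~$0$ via the distributivity axiom~(\ref{Property Dist}), applied with $I_{0}=I$ and singleton $I_{j}$ for $j\geq 1$, obtaining $\sum_{q\in I}Val^{\omega}(wt(q,a_{0},p_{1}),wt(p_{1},a_{1},p_{2}),\ldots)$. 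Summing over admissible tails and exchanging sums then yields $(\|\mathcal{M}\|,w)$.

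The main obstacle is the per-position compatibility hypothesis of~(\ref{Property Dist}): at position~$0$ the family $\{wt(q,a_{0},p_{1}):q\in I\}$ must lie entirely in $\{\mathbf{0},\mathbf{1}\}$ or entirely outside, which in general it need not. The cleanest way around this is to route through an intermediate $\varepsilon$-wBa having $q_{0}$ as its sole initial state and $\varepsilon$-transitions $(q_{0},\varepsilon,q)$ of weight $\mathbf{1}$ to each $q\in I$; equivalence with $\mathcal{M}$ is immediate from Property~\ref{Property 1}, since $Val^{\omega}(\mathbf{1},k_{1},k_{2},\ldots)=Val^{\omega}(k_{i})_{i\geq 1}$ makes the leading $\mathbf{1}$ inert. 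Invoking the $\varepsilon$-elimination adapted from \cite[Lemma~5ii]{Ma-At} to the TGP setting (as announced in Section~\ref{Totally gen pr omega val mon}) then produces an equivalent wBa; since every $\varepsilon$-transition emanates from $q_{0}$, the elimination introduces no new initial states, so the resulting wBa is normalized with $q_{0}$ as its unique initial state.
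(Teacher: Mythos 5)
You correctly identify the central difficulty: with $wt^{\prime}(q_{0},a,q^{\prime})=\sum_{q\in I}wt(q,a,q^{\prime})$ the distributivity axiom~(\ref{Property Dist}) need not apply, because the family $\{wt(q,a_{0},p_{1}):q\in I\}$ may mix elements of $\{\mathbf{0},\mathbf{1}\}$ with elements outside it. This is exactly the obstacle the paper's construction is built to circumvent, and it does so directly: the new initial state sends, for each letter $a$ and target $\overline{p}\in Q$, a transition whose weight is the sum of only those $wt(p^{\prime},a,\overline{p})$ with $p^{\prime}\in I$ that lie in $K\setminus\{\mathbf{0},\mathbf{1}\}$ (so~(\ref{Property Dist}) applies to a homogeneous family), while initial transitions of weight $\mathbf{1}$ are simulated separately by weight-$\mathbf{1}$ transitions from $q_{0}$ into a disjoint copy $S$ of $Q$, where Property~\ref{Property 1} makes the leading $\mathbf{1}$ inert; weight-$\mathbf{0}$ transitions contribute nothing.

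Your proposed repair, however, has a genuine gap. The $\varepsilon$-elimination result you invoke (the TGP analogue of \cite[Lemma~5ii]{Ma-At}) guarantees only an \emph{equivalent} wBa, not a \emph{normalized} one, and your conclusion rests on the unverified assertion that the elimination procedure ``introduces no new initial states.'' That is a claim about the internals of a construction you have not examined, and it is exactly where the difficulty you diagnosed reappears: to eliminate the $\varepsilon$-transitions $(q_{0},\varepsilon,q)$, $q\in I$, one must merge the outgoing $a$-transitions of all states in $I$ into transitions leaving $q_{0}$, which requires summing their weights inside $Val^{\omega}$ --- the very step that~(\ref{Property Dist}) does not license for inhomogeneous families. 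Moreover, a standard elimination that closes the initial-state set under $\varepsilon$-reachability would output the initial set $\{q_{0}\}\cup I$, which is not normalized, making the argument circular. To complete the proof you would either have to verify that the cited elimination construction does produce a single initial state in this situation, or carry out the splitting of initial transitions into the $\{\mathbf{0},\mathbf{1}\}$ and $K\setminus\{\mathbf{0},\mathbf{1}\}$ classes explicitly, as the paper does.
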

	
	\begin{sketch}
		Let $\mathcal{M=}\left( Q,wt,I,F\right)$ be a wBa over $A$ and $K$. We
		define the normalized wBa $\mathcal{A=}\left(
		Q^{\prime},wt^{\prime},q_{0},F^{\prime}\right)$ by setting $%
		Q^{\prime}=Q\cup\left\{q_{0}\right\}\cup S$ where $q_{0}$ is a new state,
		and $S=\left\{s_{q}\mid q\in Q\right\}$ is a set of copies of $Q$, $%
		F^{\prime}=F\cup\left\{s_{q}\in S\mid q\in F\right\}$, and $%
		wt^{\prime}:Q^{\prime}\times A \times Q^{\prime}\rightarrow K$ is defined in
		the following way:
		\begin{itemize}
			\item[]$\begin{array}{ll}
				\bullet wt^{\prime}\left(q_{0},a,\overline{p}\right)=\underset{%
					p^{\prime}\in I, wt(p^{\prime},a,\overline{p})\neq \mathbf{1}, \mathbf{0}}{%
					\sum} wt(p^{\prime},a,\overline{p}) & {\text{  if  } } \overline{p}\in Q,
				\\
				&
				\\
				\bullet wt^{\prime}\left( q_{0},a,s_{q}\right)=\mathbf{1} &
				\begin{array}{l}
					\text{if }%
					s_{q}\in S,\\\text{and there exists }\widetilde{%
						q}\in I,\\\text{with } wt\left( \widetilde{q},a,q \right) =\mathbf{1%
					}
				\end{array}
				\\
				&\\
				\bullet wt^{\prime}\left(q,a,\overline{q}\right)=wt\left(q,a,\overline{q}%
				\right)& \text{  if  } q, \overline{q} \in Q,
				\\&\\
				\bullet wt^{\prime}\left(s_{q},a,s_{%
					\overline{q}}
				\right)=wt\left(q,a,\overline{q}\right)& \text{  if  } s_{q},  s_{%
					\overline{q}} \in S,
				\\
				&\\
				\bullet wt^{\prime}\left(p,a,\overline{p}\right)=\mathbf{0} & \text{ otherwise}
			\end{array}$
		\end{itemize}
		Then, for every $w=a_{0}a_{1}\ldots \in A^{\omega}$, using the completeness axioms of the monoid we get that $\left( \left\Vert \mathcal{A}\right\Vert ,w\right)$ is equal to the sum of all successful paths of $\mathcal{A}$ over $w$ with non-zero weight, which, by the definition of $wt^{\prime}$ and the completeness axioms of the monoid, is in turn equal to the sum $\overline{\sum}+\widetilde{\sum}$~where
		\begin{gather*}
			\overline{\sum}=\underset{weight_{%
					\mathcal{A}}\left( P_{w}\right) \neq \mathbf{0}}{\underset{q_{i}\in Q,\text{
					}i\geq 1}{\underset{\in
						succ_{\mathcal{A}}\left( w\right)}{\underset{P_{w}=\left( q_{i},a_{i},q_{i+1}\right) _{i\geq 0} }{\sum }}}}weight_{\mathcal{A}}\left(
			P_{w}\right), \text{   and }
		\end{gather*}
		\begin{gather*}
			\widetilde{\sum}=\underset{weight_{\mathcal{A}}\left( P_{w}\right) \neq \mathbf{%
					0}}{\underset{q_{i}\in S,\text{ }i\geq 1}{\underset{\in succ_{\mathcal{A}}\left( w\right) }{\underset{P_{w}=\left(
							q_{i},a_{i},q_{i+1}\right) _{i\geq 0}}{%
							\sum }}}}weight_{\mathcal{A}}\left( P_{w}\right).
		\end{gather*}
		For every non-zero weight successful path $P_{w}=\left( q_{i},a_{i},q_{i+1}\right) _{i\geq 0}$ of $\mathcal{A}$ over~$w$, with $q_{i}\in Q,i\geq 1$, we obtain, using Property 1 and the definition of $wt^{\prime}$, the~following
		\begin{gather*}
			weight_{\mathcal{A}}\left(P_{w}\right)\\=Val^{\omega }\left( \underset{wt\left(
				p^{\prime },a_{0},q_{1}\right) \neq \mathbf{0,1}}{\sum\limits_{p^{\prime }\in I}}wt\left( p^{\prime
			},a_{0},q_{1}\right),wt^{\prime }\left( q_{1},a_{1},q_{2}\right)
			,wt^{\prime }\left( q_{2},a_{2},q_{3}\right) ,\ldots \right)
			\\= \sum\limits_{p^{\prime }\in I,wt\left( p^{\prime
				},a_{0},q_{1}\right) \neq \mathbf{0,1}}Val^{\omega }\left( wt\left(
			p^{\prime },a_{0},q_{1}\right) ,wt\left( q_{1},a_{1},q_{2}\right)
			,wt\left( q_{2},a_{2},q_{3}\right) ,\ldots \right).
		\end{gather*}
		Moreover, for every non-zero weight successful path $P_{w}=\left( q_{i},a_{i},q_{i+1}\right) _{i\geq 0}$ of~$\mathcal{A}$ over $w$ with $q_{i}\in S$ for every $i\geq 1$, there exist $p\in I, p_{i}\in Q\left(i\geq 0\right)$ such that $q_{i}=s_{p_{i}}$ for every $i\geq 1$ and
		\begin{gather*}
			weight_{\mathcal{A}}\left(P_{w}\right)=Val^{\omega }\left(\mathbf{1}, wt\left( p_{1},a_{1},p_{2}\right)
			,wt\left( p_{2},a_{2},p_{3}\right) ,\ldots \right)\\=Val^{\omega }\left(wt\left(p,a_{0},p_{1}\right), wt\left( p_{1},a_{1},p_{2}\right)
			,wt\left( p_{2},a_{2},p_{3}\right) ,\ldots \right).
		\end{gather*}
		We conclude that each term of $\overline{\sum}$ and $\widetilde{\sum}$ is equal to the weight of a non-zero weight successful path of $\mathcal{M}$ over $w$. Also, the weight of each non-zero weight successful path $P_{w}$ of $\mathcal{M}$ over $w$ appears as a term in at least one of $\overline{\sum}$, $\widetilde{\sum}$. More precisely, if $P_{w}=\left( p_{i},a_{i},p_{i+1}\right) _{i\geq 0}$ is a non-zero weight successful path of $\mathcal{M}$ over $w$ with $wt\left(p_{0},a_{0},p_{1}\right)=\mathbf{1}$, then $weight_{\mathcal{M}}\left(P_{w}\right)$ equals the weight of the successful path $\left(q_{0},a_{0},s_{p_{1}}\right)\left(s_{p_{i}},a_{i},s_{p_{i+1}}\right)_{i\geq 1}$ of $\mathcal{A}$ over $w$. Now, if $wt\left(p_{0},a_{0},p_{1}\right)\neq \mathbf{1}$, then the weight of the successful path $\left(q_{0},a_{0},p_{1}\right)\left( p_{i},a_{i},p_{i+1}\right) _{i\geq 1}$ of $\mathcal{A}$ over $w$ is equal to a finite sum with one of its terms being $weight_{\mathcal{M}}\left(P_{w}\right)$. We thus, by the completeness axioms of the monoid and the generalization of \cite[ Lemma 1(ii)]{Ma-At} for idempotent TGP-$\omega$-valuation monoids, can conclude that $\left( \left\Vert \mathcal{A}\right\Vert ,w\right)=\left( \left\Vert \mathcal{M}\right\Vert ,w\right)$ as~desired.
		
	\end{sketch}
	
	\setlength{\unitlength}{3mm}
	\thinlines
	\begin{picture}(15,15)
		\put(0.5,0.5){\circle{3}}
		\put(8,1){\circle{3}}
		\put(8,1){\circle{3.3}}
		\put(8.9,8.7){\circle{3}}
		\put(8.9,8.7){\circle{3.3}}
		\put(0.5,0.5){\makebox(0,0)[cc]{$q_{0}$}}
		\put(8,1){\makebox(0,0)[cc]{$q_{1}$}}
		\put(8.9,8.7){\makebox(0,0)[cc]{$q_{2}$}}
		\put(2,1){\vector(1,1){6}}
		\put(4,5){\makebox(0,0)[cc]{$b,\infty$}}
		\put(2,1){\vector(1,0){4}}
		\put(4,0.3){\makebox(0,0)[cc]{$a,2$}}
		\put(1,4){\vector(0,-1){2.3}}
		\put(10,1){\oval(5,5)[rt]}
		\put(10,10){\oval(5,5)[rt]}
		\put(10,12.4){\vector(-2,-3){1.5}}
		\put(10,3.4){\vector(-2,-3){1}}
		\put(13,3){\makebox(0,0)[cc]{$a,2$}}
		\put(13.1,2){\makebox(0,0)[cc]{$b,3$}}
		\put(12.5,13){\makebox(0,0)[cc]{$b,\infty$}}
		\put(12.6,10){\line(-1,0){2.8}}
		\put(12.6,1){\line(-1,0){2.9}}
	\end{picture}
	\begin{picture}(15,5)
		\put(11,5){\makebox(0,0)[cc]{\shortstack{
					\textbf{Figure 1}}}}
	\end{picture}
	
	\bigskip
	\bigskip
	In all following sections, $K$ will stand for an idempotent ordered TGP-$\omega$-valuation monoid unless specified otherwise.

	\section{Describing Weighted Safety with Weighted LTL over Idempotent
		Ordered TGP-$\protect\omega $-valuation \\Monoids\label%
		{Desribing ws with wltl}}
	
	In this section we firstly aim to introduce a notion of weighted safety for infinitary
	series over idempotent ordered TGP-$\omega $%
	-valuation monoids. More precisely, given an idempotent ordered TGP-$\omega $-valuation monoid $K,$ we will define for every
	$k\in K\backslash \left\{ \mathbf{0,1}\right\} $ the notion of $k$-safe
	infinitary series. We prove that an infinitary series $s$
	over $K$ and an alphabet $A,$ is $k$-safe iff the infinitary language that
	contains all $w\in A^{\omega }$ whose coefficient on $s$ is greater or equal
	to $k$ is an infinitary safety language.
	We provide fragments of a weighted specification
	language whose formulas' semantics are infinitary series that express this
	notion of weighted safety. More specifically, we will extend the definition of the weighted \textit{LTL}
	over idempotent ordered GP-$\omega $-valuation monoids
	presented in \cite{Ma-At} to idempotent ordered TGP-$%
	\omega $-valuation monoids, and determine for every $k\backslash \left\{
	\mathbf{0,1}\right\} $ syntactic fragments of this logic such that the
	semantics of each formula in these fragments is a $k$-safe infinitary series. By presenting a counterexample we conclude that
	for arbitrary idempotent ordered TGP-$\omega $%
	-valuation monoids $K,$ and $k\in K\backslash \left\{ \mathbf{0},\mathbf{1}%
	\right\} ,$ the class of semantics of formulas in the aforementioned fragments does not
	coincide\ with the class of $k$-safe infinitary series.
	
	\begin{definition}
		\label{Def k-safe}\bigskip Let $k\in K\backslash \left\{ \mathbf{0},\mathbf{1%
		}\right\} $ and $s\in K\left\langle \left\langle A^{\omega }\right\rangle
		\right\rangle .$ Series $s$ will be called $k$-safe if for every $w\in A^{\omega }$
		the following hold: If $\forall i>0,\exists u\in A^{\omega }$ such that $%
		\left( s,w_{<i} u\right) \geq k,$ then $\left( s,w\right) \geq k.$
	\end{definition}
	
	\begin{example}
		\label{example_infinite_image} Let $A=\left\{ a,b\right\} $. We consider the
		infinitary series $S\in K_{2}\left\langle \left\langle A^{\omega
		}\right\rangle \right\rangle $ defined in the following way. For every $w\in
		A^{\omega }$, we let $\left( S,w\right) =3^{k}$ if $w\in a^{k}b^{+}A^{\omega
		} $ where $k>0,$ or $\left( S,w\right) =3$ if $w=a^{\omega }$, or $\left(
		S,w\right) =-\infty $ otherwise. Clearly,
		\begin{gather*}
			\left( S,w\right) \geq 3 \text{  iff  }
			w\in a^{\omega }\cup \left( \underset{k>0}{\bigcup }%
			a^{k}b^{+}A^{\omega }\right)
		\end{gather*}
		where $a^{\omega }\cup \left( \underset{k>0}{\bigcup }%
		a^{k}b^{+}A^{\omega }\right) =a^{+}b^{+}A^{\omega }\cup a^{\omega }=aA^{\omega}$. We prove that $S$ is $3$-safe. Let $w\in A^{\omega }$ such that $\forall
		i>0,\exists u\in A^{\omega }$ with $\left( S,w_{<i} u\right) \geq 3$. Then, $%
		\forall i>0,$ $w_{<i} u=aw^{\prime }$ for some $%
		w^{\prime }\in A^{\omega }$. Then, $w\left(0\right)=a$, which implies $w\in aA^{\omega}.$ Thus, $\left(S,w\right) \geq 3$, which implies that $S$ is $3$-safe as desired.
	\end{example}
	
	\begin{example}
		\label{example_infinite_image2} Let $A=\left\{ a,b\right\} $. We consider
		three infinitary series $R\in K_{2}\left\langle \left\langle A^{\omega
		}\right\rangle \right\rangle $ defined in the following way. For every $w\in
		A^{\omega }$ we let $\left( R,w\right) =3^{k+2}$ if $w\in
		a^{k}b^{+}A^{\omega }$ where $k>0,$ or $\left( R,w\right) =3$ if $%
		w=a^{\omega }$, or $\left( R,w\right) =-\infty $ otherwise. Clearly,
		\begin{gather*}
			\left(
			R,w\right) \neq -\infty
			\text{  iff  }
			w\in aA^{\omega }.
		\end{gather*}
		We can prove that $R$ is $3$-safe, with the same arguments we used in the
		previous example. We prove that $R$ is not $4$-safe. To this end, let $%
		w=a^{\omega }$. Then, $\forall i>0,$ $\left( R,w_{<i} b^{\omega }\right)
		=3^{i+1}\geq 3^{2}>4$, however $\left( R,w\right) =3<4,$ and this concludes
		our claim.
	\end{example}

	\begin{lemma}
		\label{Lemma k}Let $k\in K\backslash \left\{ \mathbf{0},\mathbf{1}\right\}, k^{\prime}\in K. $
		Then, $k^{\prime }\in K\left\langle \left\langle A^{\omega }\right\rangle
		\right\rangle $ is $k$-safe.
	\end{lemma}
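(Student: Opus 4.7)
The plan is to verify Definition~\ref{Def k-safe} directly for the constant series $k'$, splitting into two cases based on how $k'$ compares to $k$ in the total order of $K$.

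First, I recall that for the constant series $k'$ we have $(k',w) = k'$ for every $w \in A^{\omega}$, so both the hypothesis and the conclusion of the implication in Definition~\ref{Def k-safe} reduce to statements about the single element $k'$ compared with $k$. Since $K$ is idempotent \emph{ordered}, exactly one of $k' \geq k$ or $k' < k$ holds.

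In the first case, $k' \geq k$, the conclusion $(k',w) = k' \geq k$ is satisfied for every $w \in A^{\omega}$, so the implication in Definition~\ref{Def k-safe} is trivially valid, regardless of the hypothesis. In the second case, $k' < k$, I would argue that the hypothesis can never be satisfied: for any $w \in A^{\omega}$, any $i > 0$, and any $u \in A^{\omega}$, we have $(k', w_{<i}u) = k' < k$, so $(k', w_{<i}u) \geq k$ fails. Hence the hypothesis of the implication is false, and the implication holds vacuously.

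Combining the two cases, the implication required by Definition~\ref{Def k-safe} holds for every $w \in A^{\omega}$, so $k'$ is $k$-safe. There is no real obstacle here; the statement is a direct sanity check that constant series are $k$-safe for every threshold $k \in K\setminus\{\mathbf{0},\mathbf{1}\}$, and the only structural fact used is the totality of the natural order on $K$.
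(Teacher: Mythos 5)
Your proof is correct and matches the paper's argument: both split on whether $k' \geq k$ or $k' < k$ (using the totality of the natural order), observing that in the first case the conclusion of the implication in Definition~\ref{Def k-safe} always holds and in the second case the hypothesis is vacuously false. No issues.
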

	
	\begin{proof}
		If $k^{\prime }<k,$ then our claim trivially holds. If $k^{\prime }\geq k,$
		then for every $w\in A^{\omega },$ $\left( k^{\prime},w\right) \geq k,$ and it holds:
		$\forall i>0,\exists u\in A^{\omega }$ such that $\left( k^{\prime},w_{<i} u\right)
		\geq k,$ and $\left( k^{\prime},w\right) \geq k$ which proves our claim.$\ $
	\end{proof}
	\begin{theorem}
		\label{Theorem weighted safety - boolean safety}Let $k\in K\backslash
		\left\{ \mathbf{0},\mathbf{1}\right\} $ and $s\in K\left\langle \left\langle
		A^{\omega }\right\rangle \right\rangle .$ Then, $s$ is $k$-safe iff $L_{\geq
			k}\left( s\right) $ is a safety language.
	\end{theorem}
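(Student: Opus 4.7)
The plan is a direct definition chase, since by construction the safety condition on $L_{\geq k}(s)$ matches clause-for-clause the $k$-safety condition on the series $s$. The key observation is the trivial equivalence $(s,v) \geq k \iff v \in L_{\geq k}(s)$ for every $v \in A^{\omega}$, which is precisely the defining property of $L_{\geq k}(s)$.

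For the forward direction, I would assume $s$ is $k$-safe and verify that $L_{\geq k}(s)$ satisfies Definition \ref{Def safe}. Given any $w \in A^{\omega}$ with the property that for every $i > 0$ there is $u \in A^{\omega}$ with $w_{<i}u \in L_{\geq k}(s)$, I would translate each membership into the inequality $(s, w_{<i}u) \geq k$, apply the $k$-safety hypothesis on $s$ to obtain $(s,w) \geq k$, and translate back to conclude $w \in L_{\geq k}(s)$. The converse direction is the same argument read in reverse: starting from safety of $L_{\geq k}(s)$ and the hypothesis that for every $i > 0$ there is $u \in A^{\omega}$ with $(s, w_{<i}u) \geq k$, I would rewrite each such inequality as $w_{<i}u \in L_{\geq k}(s)$, invoke Definition \ref{Def safe} to get $w \in L_{\geq k}(s)$, and finally rewrite this membership as $(s,w) \geq k$, thereby verifying the $k$-safety condition of Definition \ref{Def k-safe}.

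There is no substantial obstacle to overcome here: the theorem is essentially a reformulation of Definition \ref{Def k-safe} in the language of infinitary safety languages, and the proof reduces to the elementary equivalence on which the set $L_{\geq k}(s)$ is defined. In particular, neither idempotency, nor the totality of the order, nor Properties \ref{Property 1}--\ref{Property 6} of $K$ need to be invoked; the statement holds at the level of the raw definitions, which is why I expect the entire argument to be short and purely syntactic.
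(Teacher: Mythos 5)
Your proposal is correct and follows exactly the same route as the paper's proof: both directions are the pure definition chase through the equivalence $(s,v)\geq k \iff v\in L_{\geq k}(s)$, with no appeal to idempotency, the total order, or Properties \ref{Property 1}--\ref{Property 6}. Nothing is missing.
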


	\begin{proof}
		Let $s$ be $k$-safe. Then, for every $w\in A^{\omega },$ it holds: If $%
		\forall i>0,\exists u\in A^{\omega }$ such that $\left( s,w_{<i} u\right)
		\geq k,$ then $\left( s,w\right) \geq k$. This implies that for every $w\in
		A^{\omega },$ it holds: If $\forall i>0,\exists u\in A^{\omega }$ such that $%
		w_{<i} u\in L_{\geq k}\left( s\right) ,$ then $w\in L_{\geq k}\left( s\right)
		$, which in turn implies that $L_{\geq k}\left( s\right) $ is a safety
		language.
		
		Now, we let $L_{\geq k}\left( s\right)$ be a safety language. Then, for
		every $w\in A^{\omega },$ it holds: If $\forall i>0,\exists u\in A^{\omega }$
		such that $w_{<i} u\in L_{\geq k}\left( s\right) ,$ then $w\in L_{\geq
			k}\left( s\right) .$ Thus, for every $w\in A^{\omega },$ it holds: If $%
		\forall i>0,\exists u\in A^{\omega }$ such that $\left( s,w_{<i} u\right)
		\geq k,$ then $\left( s,w\right) \geq k,$ and so $s$ is $k$-safe as desired.
	\end{proof}

	\begin{remark}
		We can obtain a to Definition \ref{Def k-safe} equivalent definition for $k$-safe infinitary series by
		considering the contrapositive condition. More specifically, for $k\in
		K\backslash \left\{ \mathbf{0,1}\right\} $, $s\in K\left\langle \left\langle
		A^{\omega }\right\rangle \right\rangle $ is $k$-safe if for every $w\in
		A^{\omega }$ it holds: if $\left( s,w\right) <k,$ then there exists an $i>0$ such that
		for all $u\in A^{\omega },$ $\left( s,w_{<i} u\right) <k.$ The equivalence of
		the two definitions can be obtained in a straightforward way using Theorem $\ref{Theorem weighted safety - boolean safety}$.
	\end{remark}

	We recall now from \cite{Ma-At} that the syntax of the weighted $LTL$ over $%
	AP$ and $K$ is given by the grammar
	\begin{equation*}
		\varphi ::=k\mid a\mid \lnot a\mid \varphi \vee \varphi \mid \varphi \wedge
		\varphi \mid \bigcirc \varphi \mid \varphi U\varphi \mid \square \varphi
	\end{equation*}%
	where $k\in K$ and $a\in AP.$ As usual we identify $\lnot \lnot a$ with $a$
	for every $a\in AP.$ We shall denote by $LTL\left( K,AP\right)$ the class
	of all weighted $LTL$ formulas over $AP$ and $K.$ We note that formulas in $LTL\left( K,AP\right)$ are by construction in Negation Normal Form, i.e., negation is applied only to atomic propositions.
	
	\begin{definition}
		The semantics $\Vert \varphi \Vert$ of formulas $\varphi \in LTL\left(
		K,AP\right) $ are represented as infinitary series in $K\left\langle
		\left\langle \left( \mathcal{P}\left( AP\right) \right) ^{\omega
		}\right\rangle \right\rangle $ inductively defined in the following way. \
		
		\begin{itemize}
			\item[-] $\left( \left\Vert k\right\Vert ,w\right) =k$ \ \ \ \ \ \ \ \ \ \ \
			\ \ \ \ \ \ \ \ \ \ \ \ \ \ \ - $\left( \left\Vert \varphi \wedge \psi
			\right\Vert ,w\right) =\left( \left\Vert \varphi \right\Vert ,w\right) \cdot
			\left( \left\Vert \psi \right\Vert ,w\right) $
			
			\item[-] $\left( \left\Vert a\right\Vert ,w\right) =\left\{
			\begin{array}{cc}
				\mathbf{1} & \text{if }a\in w\left( 0\right) \\
				\mathbf{0} & \text{otherwise}%
			\end{array}%
			\right. $\ \ \ \ \ - $\left( \left\Vert \varphi \vee \psi \right\Vert
			,w\right) =\left( \left\Vert \varphi \right\Vert ,w\right) +\left(
			\left\Vert \psi \right\Vert ,w\right) $
			
			\item[-] $\left( \left\Vert \lnot a\right\Vert ,w\right) =\left\{
			\begin{array}{ll}
				\mathbf{1} & \text{if }a\notin w\left( 0\right) \\
				\mathbf{0} & \text{otherwise}%
			\end{array}%
			\right. $ \ -\ $\left( \left\Vert \bigcirc \varphi \right\Vert ,w\right)
			=\left( \left\Vert \varphi \right\Vert ,w_{\geq 1}\right) $
			
			\item[-] $\left( \left\Vert \varphi U\psi \right\Vert ,w\right) $
			
			$=\sum\limits_{i\geq 0}Val^{\omega }\left( \left( \left\Vert \varphi
			\right\Vert ,w_{\geq 0}\right) ,\ldots ,\left( \left\Vert \varphi
			\right\Vert ,w_{\geq i-1}\right) ,\left( \left\Vert \psi \right\Vert
			,w_{\geq i}\right) ,\mathbf{1},\mathbf{1},\ldots \right) $
			
			\item[-] $\left( \left\Vert \square \varphi \right\Vert ,w\right)
			=Val^{\omega }\left( \left( \left\Vert \varphi \right\Vert ,w_{\geq
				i}\right) \right) _{i\geq 0}$
		\end{itemize}
	\end{definition}
	
	We shall denote by $true$ the formula $\mathbf{1}\in K.$ Two formulas $%
	\varphi ,\psi \in LTL\left( K,AP\right) $ will be called equivalent, and we
	will denote this by $\varphi \equiv \psi ,$ iff $\left( \left\Vert \varphi
	\right\Vert ,w\right) =\left( \left\Vert \psi \right\Vert ,w\right) $ for
	every $w\in \left( \mathcal{P}\left( AP\right) \right) ^{\omega }.$ We
	define the \textit{weak until} operator $\widetilde{U}$ by letting $\varphi
	\widetilde{U}\psi :=\square \varphi \vee \left( \varphi U\psi \right) .$ The
	syntactic boolean fragment $bLTL\left( K,AP\right) $ of $LTL\left(
	K,AP\right) $ is given by the grammar
	\begin{equation*}
		\varphi ::=\mathbf{0}\mid true\mid a\mid \lnot a\mid \varphi \vee \varphi
		\mid \varphi \wedge \varphi \mid \bigcirc \varphi \mid \varphi \mathit{U}%
		\varphi \mid \square \varphi
	\end{equation*}%
	where $a\in AP.$ Inductively, we can prove that $Im\left( \left\Vert \varphi
	\right\Vert \right) \subseteq \left\{ \mathbf{0,1}\right\} $ for every $%
	\varphi \in bLTL\left( K,AP\right), $ and the semantics of the formulas in $%
	bLTL\left( K,AP\right) $ and corresponding classical $LTL$ formulas
	coincide. The syntactic fragment $sbLTL\left( K,AP\right) $ of $bLTL\left(
	K,AP\right) $ is given by the grammar
	\begin{equation*}
		\varphi ::=true\mid a\mid \lnot a\mid \varphi \vee \varphi \mid \varphi
		\wedge \varphi \mid \bigcirc \varphi \mid \varphi \widetilde{\mathit{U}}%
		\varphi \mid \square \varphi .
	\end{equation*}%
	We observe that for every $\varphi \in sbLTL\left( K,AP\right) $ it holds $%
	\varphi $ in $sLTL\left( AP\right) $, and for every $w\in \left( \mathcal{P}%
	\left( AP\right) \right) ^{\omega },$ $\left( \left\Vert \varphi \right\Vert
	,w\right) =\mathbf{1}$ iff $w\models \varphi .$ We will call every formula
	in $sbLTL\left( K,AP\right) $ a boolean safety formula.
	
	As in {\cite{Ma-At}} we let a restricted $LTL$\textit{-step formula }be an $%
	LTL\left( K,AP\right) $-formula of the form $\underset{1\leq i\leq n}{%
		\bigvee }\left( k_{i}\wedge \varphi _{i}\right) $ with $k_{i}\in K\backslash
	\left\{ \mathbf{0},\mathbf{1}\right\} $, $\varphi _{i}\in bLTL\left(
	K,AP\right) $ for all $1\leq i\leq n.$ We denote by $r$-$stLTL\left(
	K,AP\right) $ the class of all restricted $LTL$-step formulas over $AP$ and $%
	K.$ We now let $L_{k}=\left\{ k^{\prime }\in K\mid k^{\prime }\geq k\right\}
	$ for every $k\in K.$ For a given $k\in K\backslash \left\{ \mathbf{0,1}%
	\right\} ,$ we let a $k$-$LTL$\textit{-step formula }be an $LTL\left(
	K,AP\right) $-formula of the form $\underset{1\leq i\leq n}{\bigvee }\left(
	k_{i}\wedge \varphi _{i}\right) $ with $k_{i}\in L_{k}\backslash \left\{
	\mathbf{0},\mathbf{1}\right\} $ and $\varphi _{i}\in sbLTL\left( K,AP\right)
	$ for all $1\leq i\leq n.$ We denote by $k$-$stLTL\left( K,AP\right) $ the
	class of  all $k$-$LTL$-step formulas over $AP$ and $K.$
	\begin{example}
		Let $AP=\left\{ a,b\right\} ,$ $\varphi =\square \left( \left( a\wedge
		2\right) \vee \left( b\wedge 3\right) \right) \in LTL\left( K_{3},AP\right) .
		$ Then, for every $w\in \left(\mathcal{P}\left( AP\right)\right) ^{\omega }$ the following hold:
		$\left( \left\Vert \varphi \right\Vert ,w\right) =-\infty ,$ if for some
		position $i\geq 0,$ both $a,b$ do not belong to $w\left( i\right) .$
		Moreover, $\left( \left\Vert \varphi \right\Vert ,w\right) =3,$ if for every
		position $i\geq 0,$ at least one of $a$, and $b$ belongs to $w\left( i\right)$, and
		for at least one $i\geq 0,$ $b$ belongs to $w\left( i\right) .$ Finally, $%
		\left( \left\Vert \varphi \right\Vert ,w\right) =2,$ if for every position $%
		i\geq 0,$ $a$ belongs to $w\left( i\right) ,$ but $b$ does not.
	\end{example}
	
	\begin{example}
		Let $AP=\left\{ a,b\right\} ,$ $\varphi =\square \left( \left( a\wedge
		2\right) \vee \left( b\wedge 3\right) \right) \in LTL\left( K_{2},AP\right) .
		$ Then, for every $w\in \left(\mathcal{P} \left( AP\right)\right)^{\omega }$ the following hold:
		$\left( \left\Vert \varphi \right\Vert ,w\right) =-\infty ,$ if for some
		position $i\geq 0,$ both $a,b$ do not belong to $w\left( i\right) .$
		Moreover, $\left( \left\Vert \varphi \right\Vert ,w\right) =3,$ if for every
		position $i\geq 0,$ $a \in w\left( i\right) ,$ or $b\in w\left( i\right)$, and
		there exist infinitely many $i\geq 0,$ such that $b\in w\left( i\right)$. Finally, $\left( \left\Vert \varphi \right\Vert ,w\right) =2,$
		if for every position $i\geq 0,$ $a\in w\left( i\right),$ or $b\in w\left( i\right)$, and there exist only finitely many $i\geq 0,$ such that $b\in w\left( i\right)$.
	\end{example}

	\begin{remark}
		\label{remark-k-step} Let $k\in K\backslash \left\{ \mathbf{0,1}\right\}$, and
		$\varphi\in k$-$stLTL\left(K,AP\right).$ Then, $\varphi=\\\underset{1\leq
			i\leq n}{\bigvee}\left(k_{i}\land\varphi_{i}\right)$ where $k_{i}\in L_{k}\backslash\left\{\mathbf{0},\mathbf{1}\right\},$
		and $\varphi_{i}\in sbLTL\left(K,AP\right)$ for every $1\leq i\leq n$. Then,
		\begin{equation*}
			Im\left(\parallel\varphi\parallel\right)\subseteq\left\{\mathbf{0}\right\}
			\cup\left\{ \underset{1\leq j\leq l,i_{j}\in\left\{ 1,\ldots,n\right\} }{\sum%
			}k_{i_{j}} \mathrel{\Big|} l\in\left\{ 1,\ldots,n\right\} \right\} .
		\end{equation*}
		
		Clearly, $Im\left(\parallel\varphi\parallel\right)$ is finite. Moreover, it
		holds $k_{i}\geq k$ for every $i\in\left\{ 1,\ldots,n\right\} ,$ i.e., $%
		k_{i}=k+k_{i}$ for every $i\in\left\{ 1,\ldots,n\right\} $. Thus, for every $%
		l\in\left\{ 1,\ldots,n\right\} ,$
		\begin{gather*}
			\underset{1\leq j\leq l,i_{j}\in\left\{
				1,\ldots,n\right\} }{\sum}k_{i_{j}}=k_{i_{1}}+\underset{2\leq j\leq
				l,i_{j}\in\left\{ 1,\ldots,n\right\} }{\sum}k_{i_{j}}\\
			=k+k_{i_{1}}+\underset{%
				2\leq j\leq l,i_{j}\in\left\{ 1,\ldots,n\right\} }{\sum}k_{i_{j}}\\
			=k+\underset%
			{1\leq j\leq l,i_{j}\in\left\{ 1,\ldots,n\right\} }{\sum}k_{i_{j}}\geq k,
		\end{gather*}
		where the first and last equality hold by the completeness axioms of the
		monoid, and the second equality and the last inequality hold by the
		definition of the natural order induced by idempotency. Clearly, for all $%
		w\in \left( \mathcal{P}\left( AP\right) \right) ^{\omega }$ it holds $\left(
		\left\Vert \varphi \right\Vert ,w\right) <k\Leftrightarrow \left( \left\Vert
		\varphi \right\Vert ,w\right) =\mathbf{0}$.
	\end{remark}
	
	\bigskip We recall now from {\cite{Ma-At}} the definition of totally
	restricted $U$-nesting $LTL$-formulas over $AP$ and $K.$ Due to the restricted version of distributivity of $Val^{\omega}$ over finite sums that holds in TGP-$\omega$-valuation monoids, additional syntactic restrictions must be imposed to the formulas of the fragment of totally
	restricted $U$-nesting $LTL$-formulas in order to establish the effective translation of formulas of the weighted $LTL$ over TGP-$\omega$-valuation monoids to wBa. To that end, we introduce
	the fragment of $\vee $-\textit{totally} \textit{restricted} $U$-\textit{nesting} $%
	LTL$-formulas over $AP$ and $K,$ which is defined inductively exactly as the fragment of totally restricted $U$-nesting $LTL$-formulas with the only difference that we require a stronger syntactic restriction in the inductive step that refers to the $\vee$-operator.
	
	\begin{definition}
		The fragment of totally \textit{restricted} $U$-\textit{nesting} $LTL$%
		-formulas over $AP$ and $K$, denoted by $t$-$RULTL\left( K,AP\right) $, is
		the least class of formulas in $LTL\left( K,AP\right) $ which is defined
		inductively in the following way.
		
		\begin{itemize}
			\item[$1. $] $k\in t$-$RULTL\left( K,AP\right) $ for every $k\in K,$
			
			\item[$2. $] $bLTL\left( K,AP\right) \subseteq t$-$RULTL\left(
			K,AP\right) ,$
			
			\item[$3.$] $r$-$stLTL\left( K,AP\right) \subseteq t$-$RULTL\left(
			K,AP\right) ,$
			
			\item[$4.$] If $\varphi \in t$-$RULTL\left( K,AP\right) ,$ then $%
			\bigcirc \varphi \in t$-$RULTL\left( K,AP\right) ,$
			
			\item[$5.$] If $\varphi ,\psi \in t$-$RULTL\left( K,AP\right) ,$ then $%
			\varphi \vee \psi \in t$-$RULTL\left( K,AP\right) ,$
			
			\item[$6.$] If $\varphi \in bLTL\left( K,AP\right) ,$ and $\psi \in r$-$%
			stLTL\left( K,AP\right) ,$
			
			or $\psi =\xi U\lambda ,$ or $\psi =\square \xi $ with $\xi ,\lambda \in r$-$%
			stLTL\left( K,AP\right) ,$
			
			then $\varphi \wedge \psi ,\psi \wedge \varphi \in t$-$RULTL\left(
			K,AP\right) ,$
			
			\item[$7.$] If $\varphi ,\psi \in r$-$stLTL\left( K,AP\right) ,$ then $%
			\varphi U\psi \in t$-$RULTL\left( K,AP\right) ,$
			
			\item[$8.$] If $\varphi \in r$-$stLTL\left( K,AP\right) ,$ then $\square
			\varphi \in t$-$RULTL\left( K,AP\right) .$
		\end{itemize}
		
		The fragment of $\vee $-totally \textit{restricted} $U$-\textit{nesting} $%
		LTL $-formulas over $AP$ and $K$, denoted by $\vee $-$t$-$RULTL\left(
		K,AP\right) $, is the least class of formulas in $LTL\left( K,AP\right) $
		which is defined inductively exactly as the fragment of totally \textit{restricted} $U$-\textit{nesting} $LTL$%
		-formulas over $AP$ and $K$ with the only difference that the inductive step that refers to the $\vee$-operator (condition 5) is now defined by the following statement
		
		\begin{itemize}
			\item[$\cdot $] If $\varphi ,\psi \in r$-$stLTL\left( K,AP\right) ,$ or $%
			\varphi =\lambda U\xi ,\psi =\square \xi $ with $\lambda ,\xi \in r$-$%
			stLTL\left( K,AP\right) ,$ then $\varphi \vee \psi ,\psi \vee \varphi \in \vee $-$t$-$RULTL\left(
			K,AP\right) .$
		\end{itemize}
	\end{definition}
	
	\begin{example}
		Let $\varphi =\square \left( a\wedge
		2\right) \wedge \left( trueUc\right) \in t$-$RULTL\left( K_{1},AP\right)$ where $AP=\left\{ a,b,c\right\} .$
		Then, for every $w\in \left( \mathcal{P}\left(AP\right)\right) ^{\omega }$ the following hold: $%
		\left( \left\Vert \varphi \right\Vert ,w\right)$ $ =2,$ if $\exists i\geq 0,$
		such that $c\in w\left( i\right) ,$ and $\forall j\geq 0,$ $a\in w\left(
		i\right) .$ Otherwise,  $\left( \left\Vert \varphi \right\Vert ,w\right)
		=-\infty .$
	\end{example}
	
	\begin{theorem}[\cite{Ma-At}]
		\label{From formulas to automata copy(1)} Let $K$ be an
		idempotent ordered  GP-$\omega $-valuation monoid, and $%
		\varphi \in t$-$RULTL\left( K,AP\right) .$ Then, we can effectively
		construct a wBa over $\mathcal{P}\left( AP\right) $ and $K$ recognizing $%
		\left\Vert \varphi \right\Vert .$
	\end{theorem}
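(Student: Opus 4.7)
The plan is to reduce the task to the effective translation result for weighted generalized Büchi automata with $\varepsilon$-transitions, mirroring the Vardi–Wolper style construction adapted to the weighted setting in~\cite{Ma-Co,Ma-We,Ma-At}. Specifically, I would first construct, by structural induction on $\varphi$, an $\varepsilon$-wgBa $\mathcal{A}_{\varphi}$ over $\mathcal{P}(AP)$ and $K$ such that $\|\mathcal{A}_{\varphi}\| = \|\varphi\|$, and then invoke the expressive equivalence of $\varepsilon$-wgBa with wBa (the version of \cite[Lemmas~4 and 5ii]{Ma-At} for idempotent ordered GP-$\omega$-valuation monoids, already recalled in Section~\ref{Totally gen pr omega val mon}) to obtain the desired wBa recognizing $\|\varphi\|$.

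The induction proceeds by tracking the syntactic shape allowed in $t$-$RULTL(K,AP)$. For the base cases (constants $k\in K$, boolean formulas in $bLTL(K,AP)$, and restricted $LTL$-step formulas in $r$-$stLTL(K,AP)$) I would give direct small automata whose states encode the subformula currently being evaluated and whose transition weights are drawn from the coefficients $k_i$ appearing in the step formula, relying on Properties~\ref{Property 1} and~\ref{Property 3} to handle the neutral weight $\mathbf{1}$ on non-informative transitions. For $\bigcirc\varphi$, I would prepend a fresh initial state with an $\varepsilon$-transition of weight $\mathbf{1}$ to the initial states of $\mathcal{A}_{\varphi}$. For $\varphi\vee\psi$, I would take a disjoint union with a fresh initial state and two $\varepsilon$-transitions of weight $\mathbf{1}$; the syntactic restrictions in clause 5 ensure that computing the sum of the two semantics corresponds to nondeterministic choice in the $\varepsilon$-wgBa. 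For $\varphi\wedge\psi$ allowed by clause 6, I would use a product construction between $\mathcal{A}_{\varphi}$ and $\mathcal{A}_{\psi}$, where weights are combined via $\cdot$; the syntactic restriction to the case where one factor is boolean makes this product well-defined and ensures that the weight of a run is the product of the semantics of the two subformulas. Finally, for $\varphi U\psi$ and $\square\varphi$ (clauses 7 and 8), I would build a generalized Büchi acceptance structure with a loop component simulating the repeated application of $Val^{\omega}$ over the sequence of coefficients $(\|\varphi\|,w_{\geq 0}),(\|\varphi\|,w_{\geq 1}),\ldots$, using $r$-$stLTL$ subformulas so that the coefficients at each step range over the finite image set described in Remark~\ref{remark-k-step}.

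The main obstacle is verifying that $\|\mathcal{A}_{\varphi}\| = \|\varphi\|$ inductively in the cases involving $U$ and $\square$: here one must commute the infinite summation $\sum$ with the $\omega$-valuation $Val^{\omega}$ in order to match the inductive definition of the semantics. This is exactly where the distributivity property of a GP-$\omega$-valuation monoid (over all finite sums) is essential, together with Properties~\ref{Property 1} and~\ref{Property 3} which allow us to identify weights $\mathbf{1}$ on the ``padding'' transitions with no effect on $Val^{\omega}$. The syntactic restrictions in the definition of $t$-$RULTL(K,AP)$ are precisely designed so that the families of weights appearing along the runs of $\mathcal{A}_{\varphi}$ are drawn from a finite subset of $K$ and have the shape required by the distributivity law.

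Since all these arguments already go through verbatim in the setting of idempotent ordered GP-$\omega$-valuation monoids, the proof concludes by observing that the construction of $\mathcal{A}_{\varphi}$ and the verification that $\|\mathcal{A}_{\varphi}\|=\|\varphi\|$ are identical to those in \cite{Ma-At}, after which the conversion from $\varepsilon$-wgBa to wBa (recalled above) delivers the claimed wBa over $\mathcal{P}(AP)$ and $K$ recognizing~$\|\varphi\|$.
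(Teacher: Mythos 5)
Your proposal follows essentially the same route as the paper, which defers this theorem to \cite{Ma-At}: a Vardi--Wolper-style structural induction producing an $\varepsilon$-wgBa whose states track the syntax of the formula (with the distributivity of $Val^{\omega}$ over finite sums and Properties~\ref{Property 1} and~\ref{Property 3} handling the $U$ and $\square$ cases), followed by the effective equivalence of $\varepsilon$-wgBa with wBa over idempotent ordered GP-$\omega$-valuation monoids. The decomposition, the role of the syntactic restrictions of $t$-$RULTL(K,AP)$, and the final conversion step all match the cited construction.
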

	
	Following the proofs of \cite{Ma-At} we generalize the above theorem for
	idempotent ordered TGP-$\omega $-valuation monoids
	and $\vee $-totally restricted $U$-nesting $LTL$-formulas. A sketch of the
	proof can be found in the Appendix.
	
	\begin{theorem}
		\label{From formulas to automata-tr copy(1)}Let $K$ be an idempotent ordered
		TGP-$\omega $-valuation monoid, and $\varphi \in
		\vee $-$t$-$RULTL\left( K,AP\right) .$ Then, we can effectively construct a
		wBa over $\mathcal{P}\left( AP\right) $ and $K$ recognizing $\left\Vert
		\varphi \right\Vert .$
	\end{theorem}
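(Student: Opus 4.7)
The plan is to proceed by structural induction on $\varphi \in \vee$-$t$-$RULTL(K,AP)$, essentially transcribing the Vardi--Wolper-style construction used in the proof of Theorem~\ref{From formulas to automata copy(1)} from \cite{Ma-At}. The overall scheme is first to build, inductively from the syntax of $\varphi$, an $\varepsilon$-wgBa $\mathcal{A}_{\varphi}$ over $\mathcal{P}(AP)$ and $K$ whose behavior equals $\|\varphi\|$, and then to invoke the extension of \cite[Lemmas~4 and 5ii]{Ma-At} to idempotent ordered TGP-$\omega$-valuation monoids (noted in Section~\ref{Totally gen pr omega val mon}) to pass from $\mathcal{A}_{\varphi}$ to an equivalent wBa.

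For the base cases (constants $k \in K$, atomic formulas $a,\lnot a$, restricted $LTL$-step formulas, and arbitrary boolean formulas in $bLTL(K,AP)$) the gadgets are either trivial or furnished verbatim by \cite{Ma-At}. For the inductive clauses $\bigcirc$, $U$, $\square$, and the restricted $\wedge$-clauses (clauses~4, 6, 7, 8 in the definition), I would reuse the constructions of \cite{Ma-At}: the $\varepsilon$-wgBa is assembled from sub-automata through a new initial state linked by $\varepsilon$-transitions, with generalized B\"uchi acceptance families that track the obligations generated by each until-subformula. The correctness equalities are verified by unfolding the semantics and using the completeness axioms of the monoid, Properties~\ref{Property 1} and~\ref{Property 3}, Lemma~\ref{Valuation inequality}, and the restricted distributivity law~\eqref{Property Dist}.

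The only clause that requires genuinely new reasoning is the disjunction clause, which in the $\vee$-$t$-$RULTL$ fragment is restricted to either $\varphi,\psi \in r$-$stLTL(K,AP)$ or to $\varphi = \lambda U\xi$, $\psi = \square\xi$ with $\lambda,\xi \in r$-$stLTL(K,AP)$. In the $\varepsilon$-wgBa for $\varphi \vee \psi$, the initial state is connected by $\varepsilon$-transitions to the initial states of the sub-automata for $\varphi$ and $\psi$, so on the algebraic side the disjunction is realized as a sum $(\|\varphi\|,v) + (\|\psi\|,v)$. When $\varphi \vee \psi$ appears under an outer $\square$ or $U$ in the formula, these sums become the entries of a countable family passed to $Val^{\omega}$, and to match that outer valuation with the sum of weights over the disjoint union of successful paths of the combined automaton one must pull the finite sums out of $Val^{\omega}$ via~\eqref{Property Dist}. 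Since in a TGP-$\omega$-valuation monoid~\eqref{Property Dist} holds only for families each of whose values lies entirely in $\{\mathbf{0},\mathbf{1}\}$ or entirely in $L\setminus\{\mathbf{0},\mathbf{1}\}$ for some finite $L\subseteq K$, the sharpened syntactic restriction on $\vee$ is exactly what is needed: by Remark~\ref{remark-k-step} (and its analogue for $r$-$stLTL$ formulas) the images $Im(\|\varphi\|)$ and $Im(\|\psi\|)$ are finite, and the shape constraint forces every entry of the resulting family into one of the two admissible patterns.

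The main obstacle is therefore bookkeeping rather than conceptual: verifying, in the $\vee$-step and in any subsequent $\square$/$U$-step that encloses it, that the family of values produced by evaluating $(\|\varphi\vee\psi\|,w_{\geq i})$ at varying positions $i$ really does fall into the regime admitted by~\eqref{Property Dist}, so that the sum over successful paths of $\mathcal{A}_{\varphi\vee\psi}$ equals $(\|\varphi\vee\psi\|,w)$. Once this equality is established, the expressive equivalence of $\varepsilon$-wgBa and wBa over idempotent ordered TGP-$\omega$-valuation monoids closes the argument and yields the desired effective construction.
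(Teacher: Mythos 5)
Your proposal is correct and follows essentially the same route as the paper: the paper's own proof is precisely a replay of the inductive $\varepsilon$-wgBa construction of \cite[Theorem 1]{Ma-At} followed by the TGP-generalization of \cite[Lemma 5]{Ma-At}, with the added work consisting of checking, at each point where distributivity of $Val^{\omega}$ over finite sums or the monotonicity lemma (Lemma \ref{Valuation inequality}) is invoked, that the syntactic restrictions (weights of $r$-$stLTL$ step formulas lying in $K\setminus\{\mathbf{0},\mathbf{1}\}$, boolean parts contributing only $\mathbf{0}$ or $\mathbf{1}$, and the sharpened $\vee$-clause) force the relevant families into the regime admitted by Property \ref{Property Dist}. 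Your identification of the restricted $\vee$-clause as the reason the restricted distributivity suffices, and of the enclosing $U$/$\square$ steps as the places where this must be re-verified, matches the clarifications the paper supplies for \cite[Lemmas 8, 9, 12--14, 16, 17]{Ma-At}.
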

	
	We aim to introduce syntactic fragments of the weighted $LTL$ over $K$,
	and $AP$ with the property that for specific $k\in K\backslash \left\{
	\mathbf{0},\mathbf{1}\right\}$ the semantics of the formulas in the
	fragments are $k$-safe infinitary series over $K$. We also aim that the
	formulas in these fragments preserve the property of totally restricted
	$U$-nesting $LTL$ formulas, and $\vee$-totally restricted $U$-nesting $LTL$
	formulas that can be effectively translated to an equivalent wBa. To motivate the
	restrictions that we will impose to these fragments we present the following
	examples of $\vee$-$t$-$RULTL\left( K,AP\right)$-formulas whose semantics
	are not $k$-safe for specific $k\in K\backslash \left\{ \mathbf{0},\mathbf{1}%
	\right\} $.
	
	\begin{example}
		We let $AP=\left\{a,b\right\}$. Then, $\varphi=\left(3\wedge a\right)
		U\left(3\wedge b\right) \in \vee $-$t$-$RULTL\left( K_{2},AP\right)$. It
		holds
		\begin{equation*}
			\left(\parallel\varphi\parallel,w\right)=%
			\begin{cases}
				\begin{array}{cl}
					3 & {\textstyle if\:}w\in \left\{ \left\{ a\right\},\left\{ a,b\right\} \right\} ^{*}\left\{
					\left\{b\right\},\left\{ a,b\right\} \right\}\left(\mathcal{P}\left(AP\right)\right)^{\omega}\\
					-\infty & otherwise%
				\end{array}%
			\end{cases}%
		\end{equation*}
		
		Let $w=\left\{ a\right\} ^{\omega}$, then for every $i>0,$ $%
		\left(\parallel\varphi\parallel,w_{<i} \left\{ b\right\}
		^{\omega}\right)=3\geq3.$ However, $\left(\parallel\varphi\parallel,\left\{
		a\right\} ^{\omega}\right)=-\infty<3,$ which implies that $%
		\parallel\varphi\parallel$ is not $3$-safe.
		
		Let now $\psi=\square\left(2\wedge a\right) \vee \left(\left(2\land
		a\right)U\left(3\land b\right)\right)\in\lor$-$t$-$RULTL\left(K_{2},AP%
		\right) $. It~holds
		\begin{equation*}
			\left(\parallel\psi\parallel,w\right)=%
			\begin{cases}
				\begin{array}{cl}
					3 & {\textstyle if\:}w\in \left\{ \left\{ a\right\},\left\{ a,b\right\} \right\} ^{*}\left\{\left\{
					b\right\},\left\{ a,b\right\} \right\}\left(\mathcal{P}\left(AP\right)\right)^{\omega} \\
					2 & {\textstyle if\:}w=\left\{ a\right\} ^{\omega} \\
					-\infty & otherwise%
				\end{array}%
			\end{cases}%
		\end{equation*}
		
		Let $w=\left\{ a\right\} ^{\omega}$, then for every $i>0,$ $%
		\left(\parallel\psi\parallel,w_{<i}\left\{ b\right\}
		^{\omega}\right)=3\geq3. $ However, $\left(\parallel\psi\parallel,\left\{
		a\right\} ^{\omega}\right)=2<3,$ which implies that $\parallel\psi\parallel$
		is not $3$-safe.
	\end{example}
	
	\begin{definition}
		Let $k\in K\backslash \left\{ \mathbf{0},\mathbf{1}\right\} .$ The fragment
		of $k$-safe totally restricted $U$-\textit{nesting} $LTL$-formulas over $AP$
		and $K$, denoted by $k$-$t$-$RULTL\left( K,AP\right) $, is the least class
		of formulas in $LTL\left( K,AP\right)$ which is defined inductively in the
		following way.
		
		\begin{itemize}
			\item[$1.$] $L_{k}\subseteq k$-$t$-$RULTL\left( K,AP\right) ,$
			
			\item[$2.$] $sbLTL\left( K,AP\right) \subseteq k$-$t$-$RULTL\left(
			K,AP\right) ,$
			
			\item[$3.$] $k$-$stLTL\left( K,AP\right) \subseteq k$-$t$-$RULTL\left(
			K,AP\right) ,$
			
			\item[$4.$] If $\varphi \in k$-$t$-$RULTL\left( K,AP\right) ,$ then $%
			\bigcirc \varphi \in k$-$t$-$RULTL\left( K,AP\right) ,$
			
			\item[$5.$] If $\varphi ,\psi \in k$-$t$-$RULTL\left( K,AP\right) ,$
			then $\varphi \vee \psi \in k$-$t$-$RULTL\left( K,AP\right) ,$
			
			\item[$6.$] If $\varphi \in sbLTL\left( K,AP\right) ,$ and $\psi \in k$-$%
			stLTL\left( K,AP\right) $
			
			or $\psi =\xi \widetilde{U}\lambda ,$ or $\psi =\square \xi $ with $\xi
			,\lambda \in k$-$stLTL\left( K,AP\right) ,$
			
			then $\varphi \wedge \psi ,\psi \wedge \varphi \in k$-$t$-$RULTL\left(
			K,AP\right) ,$
			
			\item[$7.$] If $\varphi ,\psi \in k$-$stLTL\left( K,AP\right) ,$ then $%
			\varphi \widetilde{U}\psi \in k$-$t$-$RULTL\left( K,AP\right) ,$
			
			\item[$8.$] If $\varphi \in k$-$stLTL\left( K,AP\right) ,$ then $\square
			\varphi \in k$-$t$-$RULTL\left( K,AP\right) .$
		\end{itemize}
		The fragment of $k$-safe $\vee$-totally \textit{restricted} $U$-\textit{nesting} $LTL$%
		-formulas over $AP$ and $K$, which is denoted by $k$-$\vee $-$t$-$RULTL\left(
		K,AP\right)$, is the least class of formulas in $LTL\left( K,AP\right)$ which is defined inductively exactly as the fragment of $k$-safe totally \textit{restricted} $U$-\textit{nesting} $LTL$-formulas over $AP$ and~$K$ with the only difference that the inductive step that refers to the $\vee$-operator (condition~$5$) is now defined by the following statement
		\begin{itemize}
			\item[$\cdot $] If $\varphi ,\psi \in k$-$stLTL\left( K,AP\right) ,$ then $%
			\varphi \vee \psi \in k$-$\vee $-$t$-$RULTL\left( K,AP\right).$
		\end{itemize}
	\end{definition}
	
	\begin{remark}
		\label{Remark 2} We note that $k$-$\vee$-$t $-$RULTL\left(
		K,AP\right) $$\subseteq$ $k$-$t$-$RULTL\left(
		K,AP\right) $. Moreover, for all $k\in K\backslash \left\{ \mathbf{0},\mathbf{1}%
		\right\} ,$ $\mathbf{0\notin }L_{k},$ and for every $k$-$t$-$RULTL\left(
		K,AP\right) $-formula (resp. for every $k$-$\vee $-$t $-$RULTL\left(
		K,AP\right) $-formula), we can construct, using the definition of $%
		\widetilde{U},$ an equivalent $t$-$RULTL\left( K,AP\right) $-formula (resp.
		an equivalent $\vee $-$t$-$RULTL\left( K,AP\right) $-formula).
	\end{remark}
	Next, we will prove that the semantics of  $k$-$t$-$RULTL\left(
	K,AP\right) $-formulas are $k$-safe infinitary series over $K$ and
	$\mathcal{P}\left( AP\right) $. We shall need the following lemmas.
	\begin{lemma}
		\label{boolean Lemma}If $k\in K\backslash \left\{ \mathbf{0},\mathbf{1}%
		\right\} ,$ and $\varphi \in sbLTL\left( K,AP\right) ,\ $ then $\left\Vert
		\varphi \right\Vert $ is $k$-safe.
	\end{lemma}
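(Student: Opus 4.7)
The plan is to reduce this statement to classical safety for boolean $LTL$ formulas, via Theorem~\ref{Theorem weighted safety - boolean safety}. The idea is that for a boolean formula $\varphi$ the semantics takes only values in $\{\mathbf{0},\mathbf{1}\}$, and since $k$ lies strictly between these two extremes of the natural order, the superlevel set $L_{\geq k}(\|\varphi\|)$ collapses to the classical language $\mathcal{L}(\varphi)$, which is already known to be a safety property.

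First, I would record two facts from earlier in the paper: (i) since $sbLTL(K,AP)\subseteq bLTL(K,AP)$, we have $Im(\|\varphi\|)\subseteq\{\mathbf{0},\mathbf{1}\}$, with $(\|\varphi\|,w)=\mathbf{1}$ iff $w\models\varphi$; and (ii) every formula of $sbLTL(K,AP)$ corresponds to a formula of $sLTL(AP)$, so by the safety theorem of \cite{Si-Sa} quoted above, $\mathcal{L}(\varphi)$ is a safety property over $\mathcal{P}(AP)$.

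Next, I would place $k$ in the natural order of $K$. Since $\mathbf{0}+k'=k'$ for every $k'\in K$, the zero element is the minimum of $(K,\leq)$, and by Property~\ref{Property 5} the element $\mathbf{1}$ is the maximum. Because the order is total and $k\notin\{\mathbf{0},\mathbf{1}\}$, we get $\mathbf{0}<k\leq\mathbf{1}$, hence $\mathbf{0}\not\geq k$ while $\mathbf{1}\geq k$. Combining this with $Im(\|\varphi\|)\subseteq\{\mathbf{0},\mathbf{1}\}$ yields, for every $w\in(\mathcal{P}(AP))^{\omega}$, the equivalence $(\|\varphi\|,w)\geq k$ iff $(\|\varphi\|,w)=\mathbf{1}$ iff $w\models\varphi$. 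Therefore $L_{\geq k}(\|\varphi\|)=\mathcal{L}(\varphi)$.

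Finally, I would invoke Theorem~\ref{Theorem weighted safety - boolean safety}: since $L_{\geq k}(\|\varphi\|)=\mathcal{L}(\varphi)$ is a safety language, $\|\varphi\|$ is $k$-safe. There is no real obstacle here; the argument is essentially bookkeeping on top of the reduction already supplied by Theorem~\ref{Theorem weighted safety - boolean safety} and the boolean behaviour of $sbLTL$-formulas. The only point that deserves a line of justification is the placement of $k$ strictly between $\mathbf{0}$ and $\mathbf{1}$, which is where the assumption $k\in K\setminus\{\mathbf{0},\mathbf{1}\}$ and Property~\ref{Property 5} are genuinely used.
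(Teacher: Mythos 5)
Your proof is correct and follows essentially the same route as the paper's: both arguments reduce the claim to the facts that $Im(\left\Vert \varphi \right\Vert)\subseteq\{\mathbf{0},\mathbf{1}\}$, that $(\left\Vert \varphi \right\Vert,w)\geq k$ iff $(\left\Vert \varphi \right\Vert,w)=\mathbf{1}$ iff $w\models\varphi$ (using Property~\ref{Property 5} and $k\neq\mathbf{0}$), and that $\mathcal{L}(\varphi)$ is a safety property. The only cosmetic difference is that you package the conclusion via Theorem~\ref{Theorem weighted safety - boolean safety}, whereas the paper unwinds Definition~\ref{Def k-safe} directly; the mathematical content is identical.
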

	
	\begin{proof}
		Let $w\in \left( \mathcal{P}\left( AP\right) \right) ^{\omega }$ such that $%
		\forall i>0,\exists u\in \left( \mathcal{P}\left( AP\right) \right) ^{\omega
		},$ with $\left( \left\Vert \varphi \right\Vert ,w_{<i} u\right) \geq k.$ By
		Property \ref{Property 5} of TGP-$\omega$-valuation monoids, the facts that $k\neq\mathbf{0}$, and $Im\left(\left\Vert
		\varphi \right\Vert\right)=\left\{\mathbf{0},\mathbf{1}\right\} $, we get
		that $\forall i>0,\exists u\in \left( \mathcal{P}\left( AP\right) \right)
		^{\omega },$ such that $\left( \left\Vert \varphi \right\Vert
		,w_{<i} u\right) =\mathbf{1,}$ i.e., $\forall i>0,\exists u\in \left(
		\mathcal{P}\left( AP\right) \right) ^{\omega },$ with $w_{<i} u\mathbf{%
			\models }\varphi .$ Then, since $\mathcal{L}\left(\phi\right)$ is a safety language, $w\models \varphi ,$ which implies that $\left(
		\left\Vert \varphi \right\Vert ,w\right) =\mathbf{1\geq }k$ as desired.
	\end{proof}
	
	Let now $k\in K\backslash \left\{ \mathbf{0},\mathbf{1}\right\} $, and $%
	\varphi =\underset{1\leq i\leq n}{\bigvee }\left( k_{i}\wedge \varphi
	_{i}\right) \in k$-$stLTL\left( K,AP\right) .$ We assign to $\varphi$ the formula $%
	\varphi _{b}:=$ $\underset{1\leq i\leq n}{\bigvee }\varphi _{i}\in
	sbLTL\left( K,AP\right) .$ In addition, if $\varphi =\psi \widetilde{U}\xi $
	$\left( \text{resp. }\varphi =\square \psi, \varphi=\psi U\xi\right),$ with $%
	\psi ,\xi \in k$-$stLTL\left( K,AP\right) $, we assign to $\varphi$, the formula $\varphi _{b}:=\psi _{b}\widetilde{U}\xi _{b}\in
	sbLTL\left( K,AP\right) $ (resp. $\varphi _{b}:=\square \psi _{b}\in
	sbLTL\left( K,AP\right),$ $ \varphi_{b}:=\psi_{b}U\xi_{b}\in bLTL\left(
	K,AP\right)$ ). In fact, the operator $\varphi_{b}$ assigns to $\varphi$ a boolean formula whose language coincides with $L_{\geq k}\left(\left\Vert \varphi \right\Vert \right)$. With Lemmas \ref{k-LTLstep_boithitiko}, \ref{until-always}, that follow, we prove the latter property.
	
	\begin{lemma}
		\label{k-LTLstep_boithitiko}Let $k\in K\backslash \left\{ \mathbf{0},\mathbf{%
			1}\right\} ,$ and $\varphi \in k$-$stLTL\left( K,AP\right) $. Then, the
		following is true: For every $w\in \left( \mathcal{P}\left( AP\right)
		\right) ^{\omega },$ $\left( \left\Vert \varphi \right\Vert ,w\right) \geq k$
		iff $w\models \varphi _{b}.$
	\end{lemma}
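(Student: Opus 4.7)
The plan is to reduce the statement to a direct computation of $\left(\left\Vert \varphi \right\Vert,w\right)$ via the inductive semantic clauses, after which the equivalence follows from two short order-theoretic observations. First I would unfold $\varphi = \underset{1\leq i\leq n}{\bigvee}\left(k_i \wedge \varphi_i\right)$ using the semantics of $\vee$, $\wedge$, and the constants $k_i$ to obtain
\begin{equation*}
\left(\left\Vert \varphi \right\Vert,w\right) = \underset{1\leq i\leq n}{\sum} k_i \cdot \left(\left\Vert \varphi_i \right\Vert,w\right).
\end{equation*}
Since each $\varphi_i \in sbLTL\left(K,AP\right)$, we have $\left(\left\Vert \varphi_i \right\Vert,w\right) \in \left\{\mathbf{0},\mathbf{1}\right\}$ with value $\mathbf{1}$ iff $w \models \varphi_i$. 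Combined with $\mathbf{0}\cdot k_i = \mathbf{0}$, $\mathbf{1}\cdot k_i = k_i$, and the completeness axioms that allow zero summands to be discarded, the coefficient collapses to
\begin{equation*}
\left(\left\Vert \varphi \right\Vert,w\right) = \underset{i \in S\left(w\right)}{\sum} k_i, \text{ where } S\left(w\right):=\left\{i\in\left\{1,\ldots,n\right\}\mid w\models \varphi_i\right\},
\end{equation*}
with the empty sum interpreted as $\mathbf{0}$.

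For the forward direction, suppose $\left(\left\Vert \varphi \right\Vert,w\right) \geq k$. Since $\mathbf{0} + k' = k'$ for every $k' \in K$, the natural order satisfies $\mathbf{0} \leq k'$ universally; together with $k \neq \mathbf{0}$ this gives $\mathbf{0} < k$. Hence $\left(\left\Vert \varphi \right\Vert,w\right) \neq \mathbf{0}$, which forces $S\left(w\right) \neq \emptyset$ and therefore $w \models \varphi_b$.

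For the converse, suppose $w \models \varphi_b$, so $S\left(w\right) \neq \emptyset$. Each $k_i$ lies in $L_k$, i.e.\ $k_i \geq k$, which unfolds to $k_i = k + k_i$ by definition of the natural order. Mimicking the rearrangement already carried out in Remark \ref{remark-k-step}---extracting a $k$ summand from some $k_{i_0}$ and then re-absorbing it into the sum---yields $\underset{i \in S\left(w\right)}{\sum} k_i = k + \underset{i \in S\left(w\right)}{\sum} k_i$, which is exactly $\left(\left\Vert \varphi \right\Vert,w\right) \geq k$. I do not anticipate any substantive obstacle; the only care required is to verify that every semantic identity relies solely on axioms already available in an idempotent ordered TGP-$\omega$-valuation monoid, in particular the discarding of $\mathbf{0}$-weighted summands and the closure of $L_k$ under addition used in Remark \ref{remark-k-step}.
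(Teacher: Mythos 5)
Your proposal is correct and follows essentially the same route as the paper: both reduce $\left(\left\Vert\varphi\right\Vert,w\right)$ to a (possibly empty) sum of the $k_i$ with $w\models\varphi_i$, use the computation of Remark \ref{remark-k-step} to see that any non-empty such sum is $\geq k$ while the empty sum $\mathbf{0}$ is not, and then translate non-emptiness into $w\models\varphi_b$ via the coincidence of $sbLTL$ semantics with classical satisfaction. The paper packages this as a chain of equivalences invoking zero-sum freeness where you instead write out the collapsed sum explicitly, but the content is identical.
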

	
	\begin{proof}
		Let $\varphi =\underset{1\leq i\leq n}{\bigvee }\left( k_{i}\wedge \varphi
		_{i}\right) \in k$-$stLTL\left( K,AP\right) .$ Then, for every $w\in \left(
		\mathcal{P}\left( AP\right) \right) ^{\omega }$ it holds:
		
		\begin{gather*}
			\left( \left\Vert
			\varphi \right\Vert ,w\right) \geq k \\
			\text{iff}
		\end{gather*}
		
		\begin{gather*}
			\left( \left\Vert \underset{%
				1\leq i\leq n}{\bigvee }\left( k_{i}\wedge \varphi _{i}\right) \right\Vert
			,w\right) \geq k
			\\
			\text{iff}\\
			\left( \left\Vert \underset{%
				1\leq i\leq n}{\bigvee }\left( k_{i}\wedge \varphi _{i}\right) \right\Vert
			,w\right) \neq \mathbf{0}\\
			\text{iff} \\
			\exists i\in \left\{ 1,\ldots ,n\right\}  \text{ s.t. } %
			\left( \left\Vert \varphi _{i}\right\Vert ,w\right) =\mathbf{1}\\
			\ \text{iff} \\
			w\models \underset{1\leq i\leq n}{\bigvee }\varphi _{i}=\varphi _{b},
		\end{gather*}
		where the second equivalence is obtained by Remark \ref{remark-k-step}. The straight implication of the third equivalence is obtained by assuming the contrary and by the semantics of $LTL\left(K,AP\right)$-formulas. The inverse implication of the third equivalence holds by the semantics of $LTL\left(K,AP\right)$-formulas and the fact that $K$ is zero-sum free. The last equivalence is obtained by the coincidence of the semantics of $sbLTL\left(K,AP\right)$-formulas with the corresponding $sLTL\left(AP\right)$-formulas.
	\end{proof}
	
	\begin{lemma}
		\label{until-always}Let $\varphi =\psi \widetilde{U}\xi $, or $\varphi
		=\square \psi ,$ or $\varphi =\psi U\xi $ with $\psi ,\xi \in k$-$%
		stLTL\left( K,AP\right) $. Then, the following is true: For every $w\in
		\left( \mathcal{P}\left( AP\right) \right) ^{\omega }$, $\left( \left\Vert
		\varphi \right\Vert ,w\right) \geq k$ iff $w\models \varphi _{b}.$
	\end{lemma}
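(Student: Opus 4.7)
The plan is to treat the three forms of $\varphi$ separately, with the $\widetilde{U}$ case being a routine consequence of the other two. First I would exploit Remark~\ref{remark-k-step} together with Lemma~\ref{k-LTLstep_boithitiko} to reduce the threshold comparison to a binary one: since $\psi,\xi\in k$-$stLTL(K,AP)$, their semantics take values in $\{\mathbf{0}\}\cup\{k'\in K\mid k'\geq k\}$, so for every $v\in(\mathcal{P}(AP))^{\omega}$ we have $(\|\psi\|,v)\geq k$ iff $(\|\psi\|,v)\neq\mathbf{0}$ iff $v\models\psi_b$, and analogously for $\xi$. This is the key dichotomy that will drive the rest of the argument.

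For the case $\varphi=\square\psi$, I would unfold $(\|\square\psi\|,w)=Val^{\omega}((\|\psi\|,w_{\geq i}))_{i\geq 0}$. For the forward direction, assume $(\|\square\psi\|,w)\geq k$. If some $(\|\psi\|,w_{\geq i})$ equalled $\mathbf{0}$, then by definition of an $\omega$-valuation monoid the whole $Val^{\omega}$-value would be $\mathbf{0}$, which is strictly less than $k$ since $k\neq\mathbf{0}$ and $K$ is totally ordered. So every $(\|\psi\|,w_{\geq i})\geq k$, hence $w_{\geq i}\models\psi_b$ for all $i$, i.e.\ $w\models\square\psi_b=\varphi_b$. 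For the converse, if $w\models\square\psi_b$ then $(\|\psi\|,w_{\geq i})\geq k$ for every $i$, and Property~\ref{Property 6} yields $Val^{\omega}((\|\psi\|,w_{\geq i}))_{i\geq 0}\geq k$.

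For the case $\varphi=\psi U\xi$, unfold the semantics as an infinitary sum of $Val^{\omega}$-terms. The crucial observation is that each such term is either $\mathbf{0}$ or $\geq k$: it is $\mathbf{0}$ whenever some argument in the $Val^{\omega}$ is $\mathbf{0}$ (by the $\omega$-valuation monoid axiom), and otherwise every argument is $\geq k$ (using Property~\ref{Property 5} for the trailing $\mathbf{1}$'s together with the dichotomy above), and then Property~\ref{Property 6} gives that the $Val^{\omega}$-value is $\geq k$. Since $K$ is idempotent and totally ordered, finite sums behave as maxima and (by the complete idempotence part of the generalization of \cite[Lemma 1]{Ma-At}) the infinitary sum of such terms is $\geq k$ iff at least one term is $\geq k$. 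But a term at index $i$ is $\geq k$ exactly when $(\|\psi\|,w_{\geq j})\geq k$ for all $j<i$ and $(\|\xi\|,w_{\geq i})\geq k$, which by the dichotomy is equivalent to $w_{\geq j}\models\psi_b$ for all $j<i$ and $w_{\geq i}\models\xi_b$, i.e.\ $w\models\psi_b U\xi_b=\varphi_b$.

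Finally, for $\varphi=\psi\widetilde{U}\xi$, I would use $\psi\widetilde{U}\xi\equiv\square\psi\vee(\psi U\xi)$ and the parallel classical equivalence $\psi_b\widetilde{U}\xi_b\equiv\square\psi_b\vee(\psi_b U\xi_b)$, so that $(\|\varphi\|,w)=(\|\square\psi\|,w)+(\|\psi U\xi\|,w)$ is $\geq k$ iff one of the two summands is (using again that sums in an idempotent totally ordered monoid behave as maxima), which by the two previous cases is equivalent to $w\models\square\psi_b$ or $w\models\psi_b U\xi_b$, i.e.\ to $w\models\varphi_b$. The main obstacle is Step~3: carefully justifying that the infinitary sum defining $\|\psi U\xi\|$ collapses into the $\{\mathbf{0}\}$ versus $\geq k$ regime requires combining complete idempotence with Properties~\ref{Property 5} and \ref{Property 6} and the $\omega$-valuation axiom about $\mathbf{0}$-arguments; once this reduction is secured, the Boolean correspondence is immediate from the classical semantics of $U$, $\square$ and $\widetilde{U}$.
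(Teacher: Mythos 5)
Your proposal is correct and follows essentially the same route as the paper's proof: the same reduction via Remark~\ref{remark-k-step} and Lemma~\ref{k-LTLstep_boithitiko} to the dichotomy $\mathbf{0}$ versus $\geq k$, the same use of the $\omega$-valuation zero axiom and Properties~\ref{Property 5} and \ref{Property 6} in the $\square$ and $U$ cases, and the same decomposition of $\widetilde{U}$ into $\square\psi\vee(\psi U\xi)$ handled via Lemma~\ref{Lemma-strict-inequality_finite} and the natural order. No gaps worth noting.
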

	
	\begin{proof}
		Let $\varphi=\square \psi$, with $\psi \in k$-$stLTL\left( K,AP\right) $.
		Then, the following equivalences are true:
		
		\begin{gather*}
			\left( \left\Vert \varphi \right\Vert ,w\right)=\left( \left\Vert \square
			\psi \right\Vert ,w\right)=Val^{\omega }\left( \left( \left\Vert \psi
			\right\Vert ,w_{\geq j}\right) \right) _{j\geq 0} \geq k \\
			\text{iff}
			\left( \left\Vert \psi \right\Vert ,w_{\geq j}\right) \geq k,\forall j\geq 0
			\\
			\text{iff} \\
			w_{\geq j}\models \psi _{b},\forall j\geq 0 \\
			\ \text{iff} \\
			w\models \square \psi_{b}=\varphi _{b}.
		\end{gather*}
		
		To derive the straight implication of the first equivalence we assume the
		contrary. Then, since $K$ is ordered, $\exists j\geq0$ with $\left(
		\left\Vert \psi \right\Vert ,w_{\geq j}\right) < k $, which by Remark \ref%
		{remark-k-step} implies that $\left( \left\Vert \psi \right\Vert ,w_{\geq
			j}\right)=\mathbf{0}$. Thus, by definition of $Val^{\omega}$, we get $%
		Val^{\omega }\left( \left( \left\Vert \psi \right\Vert ,w_{\geq j}\right)
		\right)_{j\geq0} =\mathbf{0}$, which is a contradiction. The inverse
		implication of the first equivalence is concluded by Property \ref{Property 6}. The second equivalence is concluded by Lemma \ref{k-LTLstep_boithitiko}, and the last one by the semantics of $LTL\left(AP\right)$-formulas.
		
		Let $\varphi =\psi U\xi $ with $\psi ,\xi \in k$-$stLTL\left( K,AP\right) ,$
		and $w\in \left( \mathcal{P}\left( AP\right) \right) ^{\omega }.$ Then, the
		following equivalences are true:%
		\begin{gather*}
			\left( \left\Vert \varphi \right\Vert ,w\right)=\left( \left\Vert \psi U\xi
			\right\Vert ,w\right)\\=\underset{h\geq 0}{\sum }Val^{\omega }\left( \left(
			\left\Vert \psi \right\Vert ,w_{\geq 0}\right) ,\ldots ,\left( \left\Vert
			\psi \right\Vert ,w_{\geq h-1}\right) ,\left( \left\Vert \xi \right\Vert
			,w_{\geq h}\right) ,\mathbf{1,1,\ldots }\right) \geq k \\
			\text{iff} \\
			\exists h\geq 0\text{ such that }\left( \left\Vert \xi \right\Vert ,w_{\geq
				h}\right) \geq k\text{ and }\left( \left\Vert \psi \right\Vert ,w_{\geq
				l}\right) \geq k,0\leq l\leq h-1 \\
			\text{iff} \\
			\exists h\geq 0\text{ such that }w_{\geq h}\models \xi _{b}\text{ and }%
			w_{\geq l}\models \psi _{b},0\leq l\leq h-1 \\
			\text{iff} \\
			w\models \psi _{b}U\xi _{b}=\varphi _{b}.
		\end{gather*}
		In order to prove the straight implication of first equivalence we assume the
		contrary. Then, since $K$ is ordered, we get that for all $h\geq0$ it holds
		\begin{gather*}
			\left( \left\Vert
			\xi \right\Vert ,w_{\geq h}\right) < k,\text{ or }\left( \left\Vert \psi
			\right\Vert ,w_{\geq l}\right) <k \text{ for some } l\in
			\left\{0,\ldots,h-1\right\}.
		\end{gather*}
		This, by Remark \ref{remark-k-step}, implies
		that for all $h\geq0$ it holds
		\begin{gather*}
			\left( \left\Vert \xi \right\Vert ,w_{\geq h}\right)=%
			\mathbf{0},\text{ or }\left( \left\Vert \psi \right\Vert ,w_{\geq l}\right)=%
			\mathbf{0} \text{ for some } l\in \left\{0,\ldots,h-1\right\}.
		\end{gather*}
		
		\noindent Thus,
		\begin{gather*}
			\forall
			h\geq0, Val^{\omega }\left( \left( \left\Vert \psi \right\Vert ,w_{\geq
				0}\right) ,\ldots ,\left( \left\Vert \psi \right\Vert ,w_{\geq h-1}\right)
			,\left( \left\Vert \xi \right\Vert ,w_{\geq h}\right) ,\mathbf{%
				1,1,\ldots }\right)=\mathbf{0},
		\end{gather*}
		and so by the generalization of \cite[Lemma 1(ii)]{Ma-At} for idempotent TGP-$\omega$-valuation monoids it holds:
		\begin{gather*}
			\underset{h\geq0}{{\sum }}Val^{\omega }\left( \left(
			\left\Vert \psi \right\Vert ,w_{\geq 0}\right) ,\ldots ,\left( \left\Vert
			\psi \right\Vert ,w_{\geq h-1}\right) ,\left( \left\Vert \xi \right\Vert
			,w_{\geq h}\right) ,\mathbf{1,1,\ldots }\right)=\mathbf{0}
		\end{gather*}
		which is a contradiction. The inverse implication of the first equivalence is derived by the completeness axioms
		of the monoid and Properties \ref{Property 5}, \ref{Property 6}. As in the previous case, the second equivalence is concluded by Lemma \ref%
		{k-LTLstep_boithitiko}, and the last one by the semantics of $LTL\left(AP\right)$-formulas.
		
		Finally, if $\varphi =\psi \widetilde{U} \xi ,$ $\psi ,\xi \in k$-$%
		stLTL\left( K,AP\right) ,$ and $w\in \left( \mathcal{P}\left( AP\right)
		\right) ^{\omega },$ the following equivalences are true:
		\begin{gather*}
			\left( \left\Vert \varphi \right\Vert ,w\right)=\left( \left\Vert \psi
			\widetilde{U} \xi \right\Vert ,w\right)= \left( \left\Vert \square \psi
			\right\Vert ,w\right)+ \left( \left\Vert \psi U\xi \right\Vert ,w\right)\geq
			k \\
			\text{iff} \\
			\left( \left\Vert \square \psi \right\Vert ,w\right)\geq k \text{ or }
			\left( \left\Vert \psi U\xi \right\Vert ,w\right)\geq k \\
			\text{iff} \\
			w\models \square\psi_{b} \text{ or } w\models \psi_{b}U\xi_{b} \\
			\text{iff} \\
			w\models \psi _{b}\widetilde{U}\xi _{b}=\varphi _{b}.
		\end{gather*}
		The straight implication of the first equivalence is derived by assuming the
		contrary and Lemma \ref{Lemma-strict-inequality_finite}, while the inverse
		implication of the first equivalence is derived by the definition of the
		natural order induced by idempotency, the associativity, and the commutativity of the monoid.
		The third equivalence is derived by the previous cases, and the last one by the semantics of $LTL\left(AP\right)$-formulas.
	\end{proof}
	
	\begin{lemma}
		\label{k-LTLstep}Let $k\in K\backslash \left\{ \mathbf{0},\mathbf{1}\right\}
		,$ and $\varphi \in k$-$stLTL\left( K,AP\right) .$ Then, $\varphi $ is $k$%
		-safe.
	\end{lemma}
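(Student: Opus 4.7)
The plan is to reduce the $k$-safety claim for the semantics of $\varphi$ to the classical safety of the associated boolean formula $\varphi_b$, exploiting Lemma \ref{k-LTLstep_boithitiko} as the bridge between the quantitative threshold condition $(\|\varphi\|,w)\geq k$ and the qualitative satisfaction $w\models\varphi_b$.

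More concretely, I would start by writing $\varphi=\underset{1\leq i\leq n}{\bigvee}\left(k_i\wedge\varphi_i\right)\in k\text{-}stLTL(K,AP)$ with $k_i\in L_k\setminus\{\mathbf{0},\mathbf{1}\}$ and $\varphi_i\in sbLTL(K,AP)$ for every $1\leq i\leq n$. Then I would introduce $\varphi_b=\underset{1\leq i\leq n}{\bigvee}\varphi_i$, noting that $\varphi_b\in sbLTL(K,AP)$ since $sbLTL(K,AP)$ is closed under $\vee$. Consequently $\varphi_b$ is a boolean safety formula, so $\mathcal{L}(\varphi_b)$ is a safety property over $\mathcal{P}(AP)$.

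The verification of $k$-safety then proceeds directly from the definition. Fix $w\in(\mathcal{P}(AP))^{\omega}$ and assume that for every $i>0$ there exists $u\in(\mathcal{P}(AP))^{\omega}$ with $(\|\varphi\|,w_{<i}u)\geq k$. By Lemma \ref{k-LTLstep_boithitiko} applied to $\varphi$ and to each word $w_{<i}u$, this is equivalent to: for every $i>0$ there exists $u$ with $w_{<i}u\models\varphi_b$. Since $\mathcal{L}(\varphi_b)$ is a safety property, Definition \ref{Def safe} yields $w\models\varphi_b$. A final application of Lemma \ref{k-LTLstep_boithitiko} gives $(\|\varphi\|,w)\geq k$, as required.

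There is really no substantial obstacle here: all of the quantitative work has already been absorbed into Lemma \ref{k-LTLstep_boithitiko}, which in turn relied on Remark \ref{remark-k-step} (the dichotomy $(\|\varphi\|,w)<k\Leftrightarrow(\|\varphi\|,w)=\mathbf{0}$ for $k$-$LTL$-step formulas) together with the fact that $K$ is zero-sum free. The only point that deserves a brief mention in the write-up is the closure of $sbLTL(K,AP)$ under disjunction that guarantees $\varphi_b\in sbLTL(K,AP)$, and hence that its language is indeed a safety property.
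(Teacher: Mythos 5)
Your proof is correct, but it takes a genuinely shorter route than the paper's. You observe that $\varphi _{b}=\underset{1\leq i\leq n}{\bigvee }\varphi _{i}$ lies in $sbLTL\left( K,AP\right) $ (the fragment is closed under $\vee $), hence is a boolean safety formula by the quoted result of Sistla that safety formulas are closed under disjunction; combined with the equivalence of Lemma \ref{k-LTLstep_boithitiko} applied once to each extended prefix $w_{<i}u$ and once to $w$ itself, the $k$-safety of $\left\Vert \varphi \right\Vert $ falls out immediately from the safety of $\mathcal{L}\left( \varphi _{b}\right) $. The paper instead performs a pigeonhole argument: from the hypothesis it extracts a single disjunct $\varphi _{i}$ and an infinite increasing sequence of positions $j_{1}<j_{2}<\ldots $ at which some extension of $w_{<j_{l}}$ satisfies $\varphi _{i}$, then explicitly propagates this back to \emph{every} position $j$ by splicing $u_{j}=w\left( j\right) \ldots w\left( j_{h}-1\right) u_{j_{h}}$, and only then invokes the safety of the individual $\varphi _{i}\in sLTL\left( AP\right) $. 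Both arguments are sound and both lean on Lemma \ref{k-LTLstep_boithitiko} as the quantitative-to-boolean bridge; yours delegates the combinatorial work to the closure of safety languages under finite union (equivalently, to the fact that $\varphi _{b}$ is itself a safety formula), whereas the paper re-derives that closure by hand for this particular disjunction. Your version is the cleaner write-up, and the prefix-extension manipulation you avoid is exactly the one the paper must repeat again in the $\vee $-case of Theorem \ref{Theorem_safety_k_totally restricted formulas}, where no single safety formula is available and the pigeonhole argument becomes genuinely necessary.
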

	
	\begin{proof}
		Let $\varphi =\underset{1\leq i\leq n}{\bigvee }\left( k_{i}\wedge \varphi
		_{i}\right) \in k$-$stLTL\left( K,AP\right) ,$ and $w\in \left( \mathcal{P}%
		\left( AP\right) \right) ^{\omega }$ such that for every $j>0,\exists
		u_{j}\in \left( \mathcal{P}\left( AP\right) \right) ^{\omega }$ with $%
		\left( \left\Vert \varphi \right\Vert ,w_{<j} u_{j}\right) \geq k.$ Then, by
		Lemma \ref{k-LTLstep_boithitiko} we get that $\forall j>0,$ $\exists
		u_{j}\in \left( \mathcal{P}\left( AP\right) \right) ^{\omega }$ with $%
		w_{<j} u_{j}\models \varphi _{b}=\underset{1\leq i\leq n}{\bigvee }\varphi
		_{i}$. Then, for at least one $i\in \left\{ 1,\ldots ,n\right\} ,$ there
		exist infinitely many $j>0,$ such that $w_{<j} u_{j}\models \varphi _{i}\in
		sbLTL\left( K,AP\right) $ for some $u_{j}\in \left( \mathcal{P}\left(
		AP\right) \right) ^{\omega }.$ Let now $j_{1}<j_{2}<j_{3}<\ldots $ be the
		sequence of positions with $w_{<j_{l}} u_{j_{l}}\models \varphi _{i}$ for
		some $u_{j_{l}}\in \left( \mathcal{P}\left( AP\right) \right) ^{\omega },$ $%
		l\geq 1.$ Let $j<j_{h}$ for some $h\geq 1,$ then $w_{<j_{h}}=w_{<j} w\left(
		j\right) \ldots w\left( j_{h}-1\right) .$ Thus, for $u_{j}=w\left( j\right)
		\ldots w\left( j_{h}-1\right) u_{j_{h}}$ we get $w_{<j} u_{j}\models \varphi
		_{i}$ as desired. Since for all $j>0,$ there exist $h> 0$ such that $j\leq
		j_{h},$ we conclude that for all $j>0,$ there exists $u\in \left( \mathcal{P}%
		\left( AP\right) \right) ^{\omega }$ with $w_{<j} u\models \varphi _{i}$.
		Since $\varphi _{i}\in sLTL\left( AP\right) ,$ we conclude that $w\models
		\varphi _{i}.$ This implies that $\left( \left\Vert \varphi _{i}\right\Vert
		,w\right) =\mathbf{1,}$ and hence $\left( \left\Vert \varphi \right\Vert
		,w\right) =\left( \left\Vert \underset{1\leq i\leq n}{\bigvee }\left(
		k_{i}\wedge \varphi _{i}\right) \right\Vert ,w\right) \geq k$ as desired.
	\end{proof}
	
	We are ready now to prove that the semantics of $k$-$t$-$%
	RULTL\left( K,AP\right) $-formulas are $k$-safe infinitary series.
	
	\begin{theorem}
		\label{Theorem_safety_k_totally restricted formulas} Let $k\in K\backslash
		\left\{ \mathbf{0},\mathbf{1}\right\} ,$ and $\varphi \in k$-$t$-$%
		RULTL\left( K,AP\right) .$ Then, $\left\Vert \varphi \right\Vert \in
		K\left\langle \left\langle \left( \mathcal{P}\left( AP\right) \right)
		^{\omega }\right\rangle \right\rangle $ is $k$-safe.
	\end{theorem}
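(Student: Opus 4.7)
The plan is to proceed by structural induction on the inductive definition of $k$-$t$-$RULTL\left( K,AP\right) $, handling its eight clauses in turn. The three base clauses are immediate: clause 1 is Lemma \ref{Lemma k}, clause 2 is Lemma \ref{boolean Lemma}, and clause 3 is Lemma \ref{k-LTLstep}. Clause 4 ($\bigcirc \varphi $) reduces to the inductive hypothesis on $\varphi $ by a straightforward re-indexing: since $\left( \left\Vert \bigcirc \varphi \right\Vert ,w_{<i}u\right) =\left( \left\Vert \varphi \right\Vert ,\left( w_{<i}u\right) _{\geq 1}\right) $ and $\left( w_{<i}u\right) _{\geq 1}=\left( w_{\geq 1}\right) _{<i-1}u$ for $i\geq 1$, the hypothesis on $w$ for $\bigcirc \varphi $ yields the corresponding hypothesis on $w_{\geq 1}$ for $\varphi $, and the inductive $k$-safety of $\left\Vert \varphi \right\Vert $ supplies the conclusion.

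For clauses 7 and 8, namely $\varphi \widetilde{U}\psi $ and $\square \varphi $ with the indicated operands in $k$-$stLTL\left( K,AP\right) $, Lemma \ref{until-always} reduces $\left( \left\Vert \varphi \widetilde{U}\psi \right\Vert ,w\right) \geq k$ (resp. $\left( \left\Vert \square \varphi \right\Vert ,w\right) \geq k$) to $w\models \varphi _{b}\widetilde{U}\psi _{b}$ (resp. $w\models \square \varphi _{b}$), both of which are boolean safety formulas. The hypothesis therefore becomes a safety hypothesis on the associated boolean formula, which discharges via its safety property. Clause 6 ($\varphi \wedge \psi $ with $\varphi \in sbLTL\left( K,AP\right) $, and the symmetric variant) exploits $Im\left( \left\Vert \varphi \right\Vert \right) \subseteq \left\{ \mathbf{0,1}\right\} $: each witnessing extension must satisfy $w_{<i}u\models \varphi $ and $\left( \left\Vert \psi \right\Vert ,w_{<i}u\right) \geq k$ simultaneously, so the boolean safety of $\varphi $ combined with Lemmas \ref{k-LTLstep_boithitiko} and \ref{until-always} applied to the relevant form of $\psi $ yield $w\models \varphi $ and $\left( \left\Vert \psi \right\Vert ,w\right) \geq k$, whose product is the desired value.

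The main obstacle is clause 5, unrestricted disjunction. Given the inductive hypothesis that $\left\Vert \varphi \right\Vert $ and $\left\Vert \psi \right\Vert $ are $k$-safe, and the assumption that for every $i>0$ some $u_{i}$ satisfies $\left( \left\Vert \varphi \right\Vert ,w_{<i}u_{i}\right) +\left( \left\Vert \psi \right\Vert ,w_{<i}u_{i}\right) \geq k$, the contrapositive of Lemma \ref{Lemma-strict-inequality_finite}---justified by totality of the order---ensures that for each $i$ at least one of the two summands is itself $\geq k$. An infinite-pigeonhole argument then isolates a single disjunct, say $\varphi $, for which there are infinitely many witnessing indices $i_{1}<i_{2}<\cdots $. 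For any $j>0$, picking some $i_{l}\geq j$ and rewriting $w_{<i_{l}}u_{i_{l}}=w_{<j}w\left( j\right) \cdots w\left( i_{l}-1\right) u_{i_{l}}$ produces a suffix-extension of $w_{<j}$ on which $\left\Vert \varphi \right\Vert $ attains a value $\geq k$. The $k$-safety of $\left\Vert \varphi \right\Vert $ then yields $\left( \left\Vert \varphi \right\Vert ,w\right) \geq k$, and since idempotency of $K$ gives $x\leq x+y$ for all $x,y\in K$, we conclude $\left( \left\Vert \varphi \vee \psi \right\Vert ,w\right) =\left( \left\Vert \varphi \right\Vert ,w\right) +\left( \left\Vert \psi \right\Vert ,w\right) \geq k$, completing the induction.
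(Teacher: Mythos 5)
Your proposal is correct and follows essentially the same route as the paper's proof: the same three base lemmas, the same re-indexing for $\bigcirc$, the same reduction of $\widetilde{U}$ and $\square$ (and the conjunction case) to boolean safety via Lemmas \ref{k-LTLstep_boithitiko} and \ref{until-always}, and, for disjunction, the same combination of the contrapositive of Lemma \ref{Lemma-strict-inequality_finite}, an infinite-pigeonhole selection of one disjunct, prefix-extension, and the induced order to conclude. The only cosmetic difference is that in the conjunction case the paper appeals directly to the induction hypothesis for both conjuncts rather than re-invoking the characterization lemmas, which changes nothing of substance.
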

	
	\begin{proof}
		We prove the Lemma's claim by induction on the structure of $k$-$t$-$%
		RULTL\left( K,AP\right)$-formulas. For $\varphi =k^{\prime }\in L_{k}$, by Lemma \ref{Lemma k} we get that $%
		\left\Vert k^{\prime }\right\Vert =k^{\prime }$ is $k$-safe. For $%
		\varphi \in sbLTL\left( K,AP\right) $, $\varphi \in k$-$stLTL\left(
		K,AP\right) $ we obtain our claim by the Lemmas \ref{boolean Lemma} and \ref%
		{k-LTLstep} respectively.
		
		Assume that $\xi ,\psi \in k$-$t$-$RULTL\left( K,AP\right) $ such that the
		induction hypothesis holds for $\psi ,\xi .$ Let $\varphi =\xi \vee \psi ,$
		and $w\in \left( \mathcal{P}\left( AP\right) \right) ^{\omega }$ such that $%
		\forall i>0$, $\exists u\in \left( \mathcal{P}\left( AP\right) \right) ^{\omega
		}$ with $\left( \left\Vert \varphi \right\Vert ,w_{<i} u\right) \geq k.$
		Then, $\forall i>0,\exists u\in \left( \mathcal{P}\left( AP\right) \right)
		^{\omega }$ with $\left( \left\Vert \xi \right\Vert ,w_{<i} u\right) +\left(
		\left\Vert \psi \right\Vert ,w_{<i} u\right) \geq k$. By assuming the
		contrary, and using Lemma \ref{Lemma-strict-inequality_finite}, we get that
		there exist infinitely many $i>0$ such that $\left( \left\Vert \psi
		\right\Vert ,w_{<i} u\right) \geq k$ for some $u\in \left( \mathcal{P}\left(
		AP\right) \right) ^{\omega }$, or there exist infinitely many $i>0$ such
		that $\left( \left\Vert \xi \right\Vert ,w_{<i} u\right) \geq k$ for some $%
		u\in \left( \mathcal{P}\left( AP\right) \right) ^{\omega }.$ Assume now that
		the first case is true and let $i_{1}<i_{2}<i_{3}<\ldots $ be the sequence
		of positions with $\left( \left\Vert \psi \right\Vert
		,w_{<i_{j}} u_{i_{j}}\right) \geq k$ for some $u_{i_{j}}\in \left( \mathcal{P}%
		\left( AP\right) \right) ^{\omega },$ $j>0.$ Let $i<i_{h}$ for some $h\geq
		1, $ then $w_{<i_{h}}=w_{<i} w\left( i\right) \ldots w\left( i_{h}-1\right) .$
		Thus, for $u=w\left( i\right) \ldots w\left( i_{h}-1\right) u_{i_{h}}$ we
		get $\left( \left\Vert \psi \right\Vert ,w_{<i} u\right) \geq k$ as desired.
		Since for all $i>0,$ there exist $h>0$ such that $i<i_{h},$ we conclude that
		for all $i>0,$ there exists $u\in \left( \mathcal{P}\left( AP\right) \right)
		^{\omega }$ with $\left( \left\Vert \psi \right\Vert ,w_{<i} u\right) \geq k$%
		. Then, by induction hypothesis $\psi $ is $k$-safe, i.e., $\left(
		\left\Vert \psi \right\Vert ,w\right) \geq k,$ which implies that $\left(
		\left\Vert \psi \right\Vert ,w\right) $=$k+\left(
		\left\Vert \psi \right\Vert ,w\right).$ Thus,
		$\left(
		\left\Vert \varphi \right\Vert ,w\right) =\left( \left\Vert \xi \right\Vert
		,w\right) +\left( \left\Vert \psi \right\Vert ,w\right)$$=k+\left(\left(
		\left\Vert \xi \right\Vert ,w\right)+\left( \left\Vert \psi \right\Vert ,w\right) \right)=k+\left(
		\left\Vert \varphi \right\Vert ,w\right)$, where the second equality is derived by the definition of the natural order induced by idempotency, the associativity, and the commutativity of the monoid. Thus, $\left(
		\left\Vert \varphi \right\Vert ,w\right)\geq k$ as desired.
		Now, if the second case is true, we conclude our claim with the same
		arguments.
		
		Next, we let $\varphi =\bigcirc \xi ,$ and $w\in \left( \mathcal{P}\left(
		AP\right) \right) ^{\omega }$ such that $\forall i>0,\exists u\in \left(
		\mathcal{P}\left( AP\right) \right) ^{\omega }$ with $\left( \left\Vert
		\varphi \right\Vert ,w_{<i} u\right) =\left( \left\Vert \xi \right\Vert
		,\left( w_{<i} u\right) _{\geq 1}\right) \geq k.$ This implies that for every
		$i>0,$ $\exists u\in \left( \mathcal{P}\left( AP\right) \right) ^{\omega }$
		with $\left( \left\Vert \xi \right\Vert ,\left( w_{\geq 1}\right)
		_{<i} u\right) \geq k$ (observe that for all $i>0,$ $\left( w_{\geq 1}\right)
		_{<i} u=\left( w_{<i+1} u\right) _{\geq 1}$)$,$ and since by induction
		hypothesis $\xi $ is $k$-safe, we get $\left( \left\Vert \xi \right\Vert
		,w_{\geq 1}\right) \geq k,$ i.e., $\left( \left\Vert \bigcirc \xi
		\right\Vert ,w\right) \geq k$ as desired.
		
		Let now $\varphi =\psi \wedge \xi $ with $\psi \in sbLTL\left( K,AP\right) $%
		, and $\xi \in k$-$stLTL\left( K,AP\right) ,$ or $\xi =\zeta \widetilde{U}%
		\lambda ,$ or $\xi =\square \zeta $ with $\zeta ,\lambda \in k$-$%
		stLTL\left( K,AP\right) ,$ such that the induction hypothesis holds for $%
		\psi ,\xi .$ Let $w\in \left( \mathcal{P}\left( AP\right) \right) ^{\omega }$
		such that $\forall i\geq 0$, $\exists u\in \left( \mathcal{P}\left( AP\right)
		\right) ^{\omega }$ with $\left( \left\Vert \varphi \right\Vert
		,w_{<i} u\right) \geq k.$ Then, $\forall i\geq 0,\exists u\in \left( \mathcal{%
			P}\left( AP\right) \right) ^{\omega }$ with $\left( \left\Vert \psi
		\right\Vert ,w_{<i} u\right) \cdot \left( \left\Vert \xi \right\Vert
		,w_{<i}u\right) \geq k$,\ and since $k\neq \mathbf{0}$ and $\psi $ is
		boolean, we get that $\forall i\geq 0,\exists u\in \left( \mathcal{P}\left(
		AP\right) \right) ^{\omega }$ with $\left( \left\Vert \psi \right\Vert
		,w_{<i} u\right) =\mathbf{1},$\textbf{\ }and $\left( \left\Vert \psi
		\right\Vert ,w_{<i} u\right) \cdot \left( \left\Vert \xi \right\Vert
		,w_{<i}u\right) =\left( \left\Vert \xi
		\right\Vert ,w_{<i} u\right) \geq k.$ Thus, since by Property \ref{Property 5} we get that $\mathbf{1}$ is the maximal element of $K$, it holds that
		$\forall i\geq 0,\exists u\in \left( \mathcal{P}\left(
		AP\right) \right) ^{\omega }$ with $\left( \left\Vert \psi \right\Vert
		,w_{<i} u\right) \geq k,$ and $\left( \left\Vert \xi
		\right\Vert ,w_{<i} u\right) \geq k.$ Then, by induction hypothesis we get $%
		\left( \left\Vert \psi \right\Vert ,w\right) \geq k\mathbf{,}$
		and $\left( \left\Vert \xi \right\Vert ,w\right) \geq k,$ which, taking into account that $\psi$ is boolean and $k\neq\mathbf{0}$, implies that $\left( \left\Vert \psi \right\Vert ,w\right)=\mathbf{1}$
		and $\left( \left\Vert \xi \right\Vert ,w\right) \geq k,$ and thus
		$\left( \left\Vert \varphi \right\Vert ,w\right)=\left(\left\Vert \psi \right\Vert ,w\right) \cdot \left( \left\Vert \xi \right\Vert ,w\right) \geq k$ as desired.
		
		Let $\varphi =\psi \widetilde{U}\xi $ with $\psi ,\xi \in k$-$stLTL\left(
		K,AP\right) ,$ and $w\in \left( \mathcal{P}\left( AP\right) \right) ^{\omega
		}$ such that $\forall i>0,\exists u\in \left( \mathcal{P}\left( AP\right)
		\right) ^{\omega }$ such that $\left( \left\Vert \varphi
		\right\Vert ,w_{<i}u\right) \geq k.$ Then, by Lemma~\ref{until-always}, we
		get that for all $i>0,$ $\exists u\in \left( \mathcal{P}\left( AP\right)
		\right) ^{\omega }$ with $w_{<i} u\models \varphi_{b}$ $,$
		and since $\varphi _{b}\in sLTL\left( AP\right) ,$ we
		conclude that $w\models \varphi _{b}.$ By Lemma \ref%
		{until-always}, this implies that $\left( \left\Vert \varphi \right\Vert ,w\right) \geq k$ as desired.
		
		We now let $\varphi =\square \psi  \text{ with } \psi \in k$-$stLTL\left(K,AP\right)$, and $w\in \left( \mathcal{P}\left(
		AP\right) \right) ^{\omega }$ such that $\forall i>0,\exists u\in \left(
		\mathcal{P}\left( AP\right) \right) ^{\omega }$ with $\left( \left\Vert
		\varphi \right\Vert ,w_{<i} u\right) \geq k.$ Then, by Lemma~\ref%
		{until-always}, for all $i>0,\exists u\in \left( \mathcal{P}\left( AP\right)
		\right) ^{\omega },$ $w_{<i} u\models \varphi _{b}\in sLTL\left(
		AP\right) .$ Hence, $w\models \varphi _{b},$ and again by Lemma \ref{until-always}
		we get that $\left( \left\Vert \varphi \right\Vert ,w\right) \geq k$ as
		desired.
	\end{proof}
	
	\begin{corollary}
		\label{Corollary k-safety of V-restricted formulas } Let $k\in K\backslash
		\left\{ \mathbf{0},\mathbf{1}\right\} ,$ and $\varphi \in k$-$\vee$-$t$-$%
		RULTL\left( K,AP\right) .$ Then, $\left\Vert \varphi \right\Vert \in
		K\left\langle \left\langle \left( \mathcal{P}\left( AP\right) \right)
		^{\omega }\right\rangle \right\rangle $ is $k$-safe.
	\end{corollary}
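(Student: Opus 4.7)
The plan is to derive the corollary directly from Theorem~\ref{Theorem_safety_k_totally restricted formulas} together with the syntactic inclusion already recorded in Remark~\ref{Remark 2}. Specifically, I would first invoke Remark~\ref{Remark 2} to note that every formula in $k$-$\vee$-$t$-$RULTL\left( K,AP\right)$ is also a formula in $k$-$t$-$RULTL\left( K,AP\right)$. This inclusion is immediate from the definitions: the two classes differ only in the $\vee$-clause, and the condition imposed in the $\vee$-clause of $k$-$\vee$-$t$-$RULTL\left( K,AP\right)$ (namely $\varphi,\psi \in k$-$stLTL\left( K,AP\right)$) forces $\varphi,\psi$ to lie in $k$-$t$-$RULTL\left( K,AP\right)$ via clause~$3$, so the corresponding $\vee$-formula is admissible under the broader clause~$5$ of $k$-$t$-$RULTL\left( K,AP\right)$.

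Once this inclusion is in place, the conclusion follows by a single appeal to Theorem~\ref{Theorem_safety_k_totally restricted formulas}: for every $\varphi \in k$-$\vee$-$t$-$RULTL\left( K,AP\right) \subseteq k$-$t$-$RULTL\left( K,AP\right)$, the semantics $\left\Vert \varphi \right\Vert$ is a $k$-safe infinitary series in $K\left\langle \left\langle \left( \mathcal{P}\left( AP\right) \right)^{\omega }\right\rangle \right\rangle$. No additional induction is needed, since the heavy lifting (the case analysis over the connectives $\vee$, $\wedge$, $\bigcirc$, $\widetilde{U}$, $\square$, together with the base cases handled via Lemmas~\ref{Lemma k}, \ref{boolean Lemma}, and~\ref{k-LTLstep}) has already been carried out in the proof of Theorem~\ref{Theorem_safety_k_totally restricted formulas}.

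There is essentially no obstacle here: the corollary is a specialization, not a strengthening, of the previous theorem. The only point that deserves a line of justification is the syntactic containment $k$-$\vee$-$t$-$RULTL\left( K,AP\right) \subseteq k$-$t$-$RULTL\left( K,AP\right)$, which is an immediate inspection of the two inductive definitions side-by-side.
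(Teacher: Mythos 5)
Your proposal is correct and matches the paper's intent exactly: the paper records the inclusion $k$-$\vee$-$t$-$RULTL\left( K,AP\right) \subseteq k$-$t$-$RULTL\left( K,AP\right)$ in Remark~\ref{Remark 2} and then obtains the corollary as an immediate specialization of Theorem~\ref{Theorem_safety_k_totally restricted formulas}. Your added justification of the syntactic containment via clauses~$3$ and~$5$ is a harmless elaboration of what the paper leaves implicit.
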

	
	\begin{remark}
		\label{remark_non-maximal_fragments} Let $k\in K_{2}\backslash \left\{
		\mathbf{-\infty },\mathbf{\infty }\right\} $. Then, by induction on the
		structure of formulas in $k$-$t$-$RULTL\left( K_{2},AP\right) $, and using
		Remark \ref{remark-k-step}, we can prove that $Im\left( ||\varphi ||\right) $
		is finite, for every $\varphi \in $ $k$-$t$-$RULTL\left( K_{2},AP\right) $.
		In Example \ref{example_infinite_image}, we have presented the $2$-safe
		infinitary series $S\in K_{2}\left\langle \left\langle A^{\omega
		}\right\rangle \right\rangle $, where $A=\left\{ a,b\right\} $, with $%
		Im\left( S\right) $ being an infinite set. We set $AP=\left\{ a,b\right\} $.
		Clearly, $S\in K_{2}\left\langle \left\langle \left( \mathcal{P}\left(
		AP\right) \right) ^{\omega }\right\rangle \right\rangle $. This implies that
		the class of semantics of $2$-$t$-$RULTL\left( K_{2},AP\right) $-formulas is
		a proper subclass of the class of $2$-safe infinitary series over $K_{2}$, and $\mathcal{P%
		}\left( AP\right) $.
	\end{remark}
	
	\begin{corollary}
		Let $AP$ be a finite set of atomic propositions. Then, for the arbitrary
		idempotent ordered TGP-$\omega $-valuation monoid $K,
		$ and $k\in K\backslash \left\{ \mathbf{0,1}\right\} $ the class of
		semantics of $k$-$t$-$RULTL\left( K,AP\right) $-formulas does not coincide\
		with the class of $k$-safe infinitary series over $K,$ and $\mathcal{P}\left( AP\right) .$
	\end{corollary}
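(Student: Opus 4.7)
The plan is to establish the corollary by exhibiting a concrete counterexample, namely the one sketched in Remark~\ref{remark_non-maximal_fragments}. I would take $K=K_{2}$ (the idempotent ordered TGP-$\omega$-valuation monoid of Example~\ref{Example 2}), $AP=\{a,b\}$, and fix $k=2\in K_{2}\setminus\{\mathbf{-\infty},\mathbf{\infty}\}$. The candidate counterexample is the infinitary series $S\in K_{2}\langle\langle (\mathcal{P}(AP))^{\omega}\rangle\rangle$ of Example~\ref{example_infinite_image}, which already has been shown to be $3$-safe with an infinite image $Im(S)=\{-\infty,3\}\cup\{3^{k}\mid k>0\}$.

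First I would verify that $S$ is also $2$-safe, which is immediate from the contrapositive form (cf.\ the remark after Theorem~\ref{Theorem weighted safety - boolean safety}): if $(S,w)<2$, then $(S,w)=-\infty$, so $w\notin aA^{\omega}$, i.e.\ $w(0)=b$; hence for $i=1$ and every $u\in A^{\omega}$ we have $w_{<1}u\in bA^{\omega}$, giving $(S,w_{<1}u)=-\infty<2$, as required. Thus $S$ belongs to the class of $2$-safe infinitary series over $K_{2}$ and $\mathcal{P}(AP)$.

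The second ingredient is the key claim that $Im(||\varphi||)$ is finite for every $\varphi\in 2$-$t$-$RULTL(K_{2},AP)$. I would prove this by induction on the construction of the fragment. The base cases are immediate: for $k'\in L_{k}$, $Im(||k'||)=\{k'\}$; for $\varphi\in sbLTL(K,AP)$, $Im(||\varphi||)\subseteq\{\mathbf{0},\mathbf{1}\}$; and for $\varphi\in k$-$stLTL(K,AP)$, finiteness follows from Remark~\ref{remark-k-step}. For the inductive cases I would use the fact that $\bigcirc$ preserves images, that $\vee$ and $\wedge$ combine two finite images into a finite one via $+$ and $\cdot$, and for the temporal operators $\widetilde{U}$, $U$, $\square$ applied to arguments in $k$-$stLTL(K,AP)$, I would combine Lemma~\ref{until-always} with Remark~\ref{remark-k-step} to conclude that the resulting semantics takes only the value $\mathbf{0}$ or values in the (finite) set of sums of elements of $L_{k}\cap Im$ of the subformulas. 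Since $Im(S)$ is infinite, it follows that $S\notin\{||\varphi||\mid\varphi\in 2\text{-}t\text{-}RULTL(K_{2},AP)\}$, exhibiting a $2$-safe series outside the fragment's semantic class and proving the corollary.

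The main obstacle is carrying out the inductive step for the temporal operators $\widetilde{U}$ and $\square$ over $k$-$stLTL$-arguments in a sufficient way: one must show that the possibly countable infinitary sum in the definition of $||\varphi U\psi||$ and the $Val^{\omega}$-value of $||\square\varphi||$ actually collapse to one of finitely many values. This is handled by noting via Lemma~\ref{until-always} that such a semantics is $\geq k$ exactly when the corresponding boolean formula $\varphi_{b}$ is satisfied; and when it is $<k$, Remark~\ref{remark-k-step} forces it to equal $\mathbf{0}$. So the only possible values at each word are $\mathbf{0}$ and the finitely many elements in the image of the $k$-$stLTL$-subformula, giving finiteness of the overall image and completing the argument.
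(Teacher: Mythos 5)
Your proposal is correct and follows essentially the same route as the paper, which obtains this corollary from Remark~\ref{remark_non-maximal_fragments}: take $K=K_{2}$, $AP=\{a,b\}$, $k=2$, observe that the series $S$ of Example~\ref{example_infinite_image} is $2$-safe with infinite image, and show by induction (using Remark~\ref{remark-k-step}) that every $2$-$t$-$RULTL\left(K_{2},AP\right)$-formula has finite-image semantics. Your additional details --- the explicit contrapositive check of $2$-safety and the argument that for $\square$, $U$, $\widetilde{U}$ over $k$-$stLTL$-arguments the value is either $\mathbf{0}$ or one of finitely many values $\geq k$ --- simply flesh out the one-line induction claim the paper leaves to the reader.
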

	
	\begin{example}
		Let $AP=\left\{ a,b\right\} ,$ $\varphi =\square \left( a\wedge 2\right)
		\in 2$-$\vee $-$t$-$RULTL\left( K_{2},AP\right) .$ For every $w\in \left(
		\mathcal{P}\left( AP\right) \right) ^{\omega },$ it holds $\left( \left\Vert
		\varphi \right\Vert ,w\right) =2$ whenever $a\in w\left( i\right) $ for
		every $i\geq 0,$ and $\left( \left\Vert \varphi \right\Vert ,w\right)
		=-\infty $ otherwise. We verify that $\left\Vert \varphi \right\Vert $ is $2$%
		-safe. More precisely, let $w\in \left( \mathcal{P}\left( AP\right) \right)
		^{\omega }$ such that for every $i>0$ there exists $u\in \left( \mathcal{P}%
		\left( AP\right) \right) ^{\omega }$ with $\left( \left\Vert \varphi
		\right\Vert ,w_{<i} u\right) \geq 2.$ This implies that for every $i>0,$ $%
		a\in w\left( j\right) $ for every $0\leq j\leq i-1,$ which in turn implies
		that $a\in w\left( i\right) $ for every $i\geq 0,$ i.e., $\left( \left\Vert
		\varphi \right\Vert ,w\right) =2.$
	\end{example}
	
	\begin{example}
		Let $\varphi =\left( 3\wedge b\right)
		\widetilde{U}\left( 3\wedge a\right) \in 2$-$\vee $-$t$-$RULTL\left(
		K_{2},AP\right) $ where $AP=\left\{ a,b\right\}.$ It holds $\left( \left\Vert \varphi \right\Vert ,w\right)
		=3$ if $b\in w\left( j\right) $ for all $j\geq 0$, or if there exists an $%
		i\geq 0,$ such that $a\in w\left( i\right) $ and $b\in w\left( j\right) $
		for every $0\leq j\leq i-1.$ Otherwise, $\left( \left\Vert \varphi
		\right\Vert ,w\right) =-\infty .$ Clearly, $\left\Vert \varphi \right\Vert $
		is $2$-safe. More precisely, let $w\in \left( \mathcal{P}\left( AP\right)
		\right) ^{\omega }$ such that for every $i>0$ there exists $u\in \left(
		\mathcal{P}\left( AP\right) \right) ^{\omega }$ with $\left( \left\Vert
		\varphi \right\Vert ,w_{<i} u\right) \geq 2.$ Then, for every $i>0$, $\exists u\in \left( \mathcal{P}\left(
		AP\right) \right) ^{\omega }$ such that at least
		one of the following is true: $\left( \left\Vert \left( 3\wedge
		b\right) U\left( 3\wedge a\right) \right\Vert ,w_{<i} u\right) \geq 2,$ or $%
		\left( \left\Vert\square \left( 3\wedge b \right) \right\Vert ,w_{<i} u\right) \geq 2.$\footnote{%
			This claim is derived by assuming the contrary, and Lemma \ref%
			{Lemma-strict-inequality_finite}.} This implies that at least one of the
		following is true: There exist infinitely many $i>0$ such that $\exists u\in
		\left( \mathcal{P}\left( AP\right) \right) ^{\omega }$ with $\left(
		\left\Vert \left( 3\wedge b\right) U\left( 3\wedge a\right) \right\Vert
		,w_{<i} u\right) \geq 2$, or there exist infinitely many $i>0$ such that $%
		\exists u\in \left( \mathcal{P}\left( AP\right) \right) ^{\omega }$ with $\left(
		\left\Vert \square \left( 3\wedge b \right)\right\Vert ,w_{<i} u\right) \geq 2$. If the first case
		is true we derive that $\forall i>0,$ $\exists u\in \left( \mathcal{P}\left(
		AP\right) \right) ^{\omega },$ such that $\left( \left\Vert \left( 3\wedge
		b\right) U\left( 3\wedge a\right) \right\Vert ,w_{<i} u\right) \geq 2.$%
		\footnote{%
			This claim is derived with the same arguments used in the proof of Theorem %
			\ref{Theorem_safety_k_totally restricted formulas} in the inductive step of
			the disjunction.} Hence, $\forall i>0,\exists u\in \left( \mathcal{P}\left(
		AP\right) \right) ^{\omega },j_{i}\geq 0$ such that $\left( \left\Vert
		3\wedge b\right\Vert ,\left( w_{<i} u\right) _{\geq h}\right) \geq 2\left(
		0\leq h\leq j_{i}-1\right) ,$ and $\left( \left\Vert 3\wedge a\right\Vert
		,\left( w_{<i} u\right) _{\geq j_{i}}\right) \geq 2.$\footnote{%
			This claim is derived by assuming the contrary, and using the definition of $%
			Val^{\omega }$, and Remark \ref{remark-k-step}.} Assume that there exist an $%
		i>0,$ such that $\exists u\in \left( \mathcal{P}\left( AP\right) \right)
		^{\omega },j_{i}\in \left\{ 0,\ldots ,i-1\right\} $ such that $\left(
		\left\Vert 3\wedge b\right\Vert ,\left( w_{<i} u\right) _{\geq h}\right) \geq
		2\left( 0\leq h\leq j_{i}-1\right) $ and $\left( \left\Vert 3\wedge
		a\right\Vert ,\left( w_{<i} u\right) _{\geq j_{i}}\right) \geq 2.$ This
		implies that $\exists j\geq 0$ such that $\left( \left\Vert 3\wedge
		b\right\Vert ,w_{\geq h}\right) \geq 2\left( 0\leq h\leq j-1\right) $ and $%
		\left( \left\Vert 3\wedge a\right\Vert ,w_{\geq j}\right) \geq 2,$ i.e., $%
		\left( \left\Vert \psi U\xi \right\Vert ,w\right) \geq 2,$ which in turn
		implies that $\left( \left\Vert \psi \widetilde{U}\xi \right\Vert ,w\right)
		\geq 2$ as desired. Assume now that for all $i>0,\nexists j_{i}<i$ such $%
		\left( \left\Vert 3\wedge a\right\Vert ,\left( w_{<i} u\right) _{\geq
			j_{i}}\right) \geq 2.$ Then, we get that for all $i>0,\left( \left\Vert
		3\wedge b\right\Vert ,\left( w_{<i} u\right) _{\geq h}\right) \\\geq 2\left(
		0\leq i\leq j-1\right) ,$ which implies that $\left( \left\Vert 3\wedge
		b\right\Vert ,w_{\geq i}\right) \geq 2$ for all $i\geq 0,$ i.e., $\left(
		\left\Vert \square \left( 3\wedge b\right) \right\Vert ,w\right) \geq 2,$
		and thus $\left( \left\Vert \left( 3\wedge b\right) \widetilde{U}\left(
		3\wedge a\right) \right\Vert ,w\right) \geq 2$ as desired.
		
		If the second case is true we conclude that $\left( \left\Vert \left(
		3\wedge b\right) \widetilde{U}\left( 3\wedge a\right) \right\Vert ,w\right)
		\geq 2$ with similar arguments.
	\end{example}
	
	By Remark \ref{Remark 2}, Theorems \ref{From formulas to automata copy(1)}, %
	\ref{From formulas to automata-tr copy(1)}, and Corollary \ref{Corollary
		k-safety of V-restricted formulas } we conclude the following.
	
	\begin{corollary}
		\label{Corollary}
		\begin{enumerate}[label=(\roman*)]
			\item Let $K$ be an idempotent ordered GP-$%
			\omega $-valuation monoid, $k\in K\backslash \left\{ \mathbf{0},\mathbf{1}%
			\right\} $ and $\varphi \in k$-$t$-$RULTL\left( K,AP\right) .$ Then, we can
			effectively construct a wBa over $\mathcal{P}\left( AP\right) $ and $K$
			recognizing $\left\Vert \varphi \right\Vert .$
			
			\item Let $K$ be an idempotent ordered  TGP-$\omega $%
			-valuation monoid, $k\in K\backslash \left\{ \mathbf{0},\mathbf{1}\right\} $
			and $\varphi \in k$-$\vee $-$t$-$RULTL\left( K,AP\right) .$ Then, we can
			effectively construct a wBa over $\mathcal{P}\left( AP\right) $ and $K$
			recognizing $\left\Vert \varphi \right\Vert .$
		\end{enumerate}
	\end{corollary}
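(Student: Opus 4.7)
The plan is to reduce both statements to the translations already established for the non-$k$-safe fragments, namely Theorem~\ref{From formulas to automata copy(1)} for $t$-$RULTL$ over idempotent ordered GP-$\omega$-valuation monoids and Theorem~\ref{From formulas to automata-tr copy(1)} for $\vee$-$t$-$RULTL$ over idempotent ordered TGP-$\omega$-valuation monoids. The bridge is provided by Remark~\ref{Remark 2}, which asserts that every $k$-$t$-$RULTL\left( K,AP\right)$-formula (resp. every $k$-$\vee$-$t$-$RULTL\left( K,AP\right)$-formula) is equivalent to some $t$-$RULTL\left( K,AP\right)$-formula (resp. $\vee$-$t$-$RULTL\left( K,AP\right)$-formula) obtained by unfolding each occurrence of the weak until operator via its defining identity $\varphi \widetilde{U}\psi \equiv \square \varphi \vee \left( \varphi U\psi \right)$.

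For part~(i), I would proceed as follows. Given $\varphi \in k$-$t$-$RULTL\left( K,AP\right)$, I first apply the construction underlying Remark~\ref{Remark 2}: traversing the syntax tree of $\varphi$ bottom-up and replacing every subformula of the form $\xi \widetilde{U}\lambda$ by $\square \xi \vee \left( \xi U\lambda \right)$. By inspection of the inductive clauses of the two fragments, the resulting formula $\varphi^{\prime}$ belongs to $t$-$RULTL\left( K,AP\right)$ and satisfies $\left\Vert \varphi^{\prime}\right\Vert =\left\Vert \varphi \right\Vert$. Since $K$ is an idempotent ordered GP-$\omega$-valuation monoid, Theorem~\ref{From formulas to automata copy(1)} yields an effective construction of a wBa over $\mathcal{P}\left( AP\right)$ and $K$ recognizing $\left\Vert \varphi^{\prime}\right\Vert$, and this automaton recognizes $\left\Vert \varphi \right\Vert$ as desired.

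For part~(ii), I would argue analogously: given $\varphi \in k$-$\vee$-$t$-$RULTL\left( K,AP\right)$, the same bottom-up rewriting yields an equivalent formula $\varphi^{\prime}\in \vee$-$t$-$RULTL\left( K,AP\right)$, and then Theorem~\ref{From formulas to automata-tr copy(1)}, which applies precisely to the idempotent ordered TGP-$\omega$-valuation setting, provides an effective construction of a wBa recognizing $\left\Vert \varphi^{\prime}\right\Vert =\left\Vert \varphi \right\Vert$.

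The only subtlety, and hence the step that requires actual verification rather than a one-line appeal, is checking that the syntactic rewriting from Remark~\ref{Remark 2} respects the inductive clauses of the target fragment. Concretely, whenever $\xi \widetilde{U}\lambda$ appears with $\xi,\lambda \in k$-$stLTL\left( K,AP\right)$, its replacement $\square \xi \vee \left( \xi U\lambda \right)$ must be admissible in $t$-$RULTL$ (clauses~5,~7,~8) and, in the $\vee$-fragment, the disjunction $\square \xi \vee \left( \xi U\lambda \right)$ must fit the stronger $\vee$-clause; this is exactly the form explicitly permitted in the inductive definition of $\vee$-$t$-$RULTL\left( K,AP\right)$. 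Once this compatibility is observed at each clause, the effectiveness of both rewriting and automaton construction gives the corollary.
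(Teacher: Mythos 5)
Your proposal is correct and follows exactly the paper's route: the paper derives this corollary by citing Remark~\ref{Remark 2} (rewriting $\widetilde{U}$ via $\varphi\widetilde{U}\psi:=\square\varphi\vee(\varphi U\psi)$ to land in $t$-$RULTL$, resp. $\vee$-$t$-$RULTL$) together with Theorems~\ref{From formulas to automata copy(1)} and~\ref{From formulas to automata-tr copy(1)}. Your additional check that the rewritten disjunction fits the inductive clauses of the target fragments is precisely the content of Remark~\ref{Remark 2}, so nothing is missing.
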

	
	\section{A Motivating Example\label{A_motivating_example}}
	
	In this section we present an example on how $k$-safe weighted $LTL$-formulas can be
	used to draw conclusions related to the quantitative behavior of systems.
	More specifically, motivated by the work in \cite{Sm-Op}, we consider a
	weighted transition system that models the routes of a robot that moves
	between states realizing in some of them gathering of data, or control, or
	uploading of data. The weights on the transitions of the system represent
	the amount of energy that is produced by the robot's movement in each
	transition. In fact the weighted transition system that we consider is a
	variation of the corresponding one in \cite{Sm-Op}. We determine quantitative properties that could be related to the preservation
	of the movement of the robot, and for a specific rational number $%
	k,$ we consider a $k$-safe weighted $LTL$ formula over~$K_{2}$ and $\mathcal{P}\left(AP\right)$ that expresses the
	aforementioned properties, where $AP$ is the set of propositions used to label the states of the
	weighted transition system. We then translate the weighted transition system
	to a weighted B\"{u}chi automaton over $K_{2}$ and $\mathcal{P}\left( AP\right) ,
	$ with the following property: each path of the
	automaton, with weight different from $-\infty,$ simulates a
	run of the transition system, and vice-versa. We finally prove that the
	behavior of the automaton is equal to the semantics of the formula,
	verifying in this way that the weighted transition system has the desirable quantitative
	properties.
	
	\medskip
	
	We let $%
	T=\left( Q,q_{0},R,AP,L,we\right) $ where
	
	\begin{itemize}
		\item $Q=\left\{ q_{0},q_{1},q_{2},q_{3}\right\} ,$
		
		\item $AP=\left\{ gather,upload,control\right\} ,$
		
		\item $R=\left\{ \left( q_{0},q_{2}\right) ,\left( q_{2},q_{3}\right)
		,\left( q_{3},q_{0}\right) ,\left( q_{2},q_{1}\right) ,\left(
		q_{1},q_{0}\right) \right\} $,
	\end{itemize}
	
	\begin{itemize}
		\item $L:Q\rightarrow \mathcal{P}\left( AP\right) $ is given by $L\left(
		q_{0}\right) =\emptyset ,$ $L\left( q_{1}\right) =\left\{ upload\right\} $, $%
		L\left( q_{2}\right) =\left\{ gather\right\} ,$ $L\left( q_{3}\right)
		=\left\{ control,upload\right\} $,
		
		\item $we:R\rightarrow Q_{\geq 0}$ is defined by $we\left(
		q_{0},q_{2}\right) =8,$ $we\left( q_{2},q_{3}\right) =5,$ $we\left(
		q_{2},q_{1}\right) =7,$ $we\left( q_{3},q_{0}\right) =3,$ and $we\left(
		q_{1},q_{0}\right) =10$.
	\end{itemize}
	
	\setlength{\unitlength}{3mm}
	\thinlines
	\begin{picture}(15,15)(-3,-3)
		\put(0.5,0.5){\circle{3}}
		\put(8,1){\circle{3}}
		\put(8.9,8.7){\circle{3}}
		\put(0.5,8){\circle{3}}
		\put(0.5,0.5){\makebox(0,0)[cc]{$q_{2}$}}
		\put(8,1){\makebox(0,0)[cc]{$q_{3}$}}
		\put(8.9,8.7){\makebox(0,0)[cc]{$q_{1}$}}
		\put(0.5,8){\makebox(0,0)[cc]{$q_{0}$}}
		\put(2,1){\vector(1,1){6}}
		\put(4,4){\makebox(0,0)[cc]{$7$}}
		\put(4.5,8.9){\makebox(0,0)[cc]{$10$}}
		\put(4.8,4.8){\makebox(0,0)[cc]{$3$}}
		\put(2,1){\vector(1,0){4}}
		\put(0.2,4){\makebox(0,0)[cc]{$8$}}
		\put(4,0.3){\makebox(0,0)[cc]{$5$}}
		\put(0.8,6.5){\vector(0,-1){4.7}}
		\put(8,7.7){\vector(-1,0){6}}
		\put(7,2){\vector(-1,1){5}}
		\put(8.9,8.7){\makebox(0,0)[cc]{$q_{1}$}}
		\put(13.2,3){\makebox(0,0)[cc]{$\left\{control,upload\right\}$}}
		\put(12.2,10){\makebox(0,0)[cc]{$\left\{upload\right\}$}}
		\put(-2,-2){\makebox(0,0)[cc]{$\left\{gather\right\}$}}
		\put(-1.5,10){\makebox(0,0)[cc]{$\left\{\emptyset \right\}$}}
		\put(0.8,11.5){\vector(0,-1){2}}
	\end{picture}
	\begin{picture}(15,5)
		\put(12,2){\makebox(0,0)[cc]{\shortstack{
					\textbf{Figure 2 }\\The weighted transition system T}}}
	\end{picture}

	\begin{remark}
		\label{Remark _T}Let $r_{1}=q_{0}q_{1}q_{2}\ldots ,r_{2}=q_{0}^{\prime
		}q_{1}^{\prime }q_{2}^{\prime }\ldots $ be runs of $T$ such that $r_{1}\neq
		r_{2},$ then $\exists i\geq 0$ with $q_{i}\neq q_{i}^{\prime }$, which by
		the definition of $T$ implies that $\exists i\geq 0$ such that $L\left(
		q_{i}\right) \neq L\left( q_{i}^{\prime }\right) ,$ i.e., $w_{r_{1}}\neq
		w_{r_{2}}.$
	\end{remark}
	
	We are interested in verifying that $T$ has the following properties that
	could be related with the preservation of the movement of the robot.
	
	\textbf{Property 1}\emph{:}\textit{\ The greatest amount of energy that is
		produced infinitely often along any run of }$T$ \textit{is greater than }$8.$
	
	\textbf{Property 2}\emph{: }\textit{Along any run of }$T,$ \textit{whenever
		a state }$q,$\textit{\ where uploading and control are both executed, is
		visited, then an amount of energy equal to }$8$\textit{\ should be
		produced during a transition that appears in the run after a finite number of transitions after }$q$\textit{\ is
		visited, and whenever a state }$q^{\prime }$\textit{, where uploading is
		executed but control is not, is visited, then an amount of energy equal to
	}$10$\textit{\ should be produced during a transition that appears in the
		run after a finite number of transitions after }$q^{\prime }$\textit{\ is visited, and whenever a state}$~q^{\prime \prime },$\textit{\ where uploading is not executed, is visited,
		then an amount of energy equal to }$8$\textit{\ should be produced during
		a transition that appears in the run after a finite number of transitions after }$q^{\prime \prime }$\textit{\ is
		visited.}
	
	To this end we work as follows. We translate $T$ to a wBa over $K_{2}$ and $%
	AP, $ such that the sequence of weights of the transitions of each path of
	the automaton, with weight different from $-\infty $, over a word $w$
	simulates the sequence of weights of\ a run of $T$ defining the word $w,$
	and vice-versa. Moreover, we consider an $8$-$\vee $-totally restricted $U$%
	-nesting weighted $LTL$-formula $\varphi $ whose semantics expresses \textbf{%
		Properties 1, and 2. }We finally prove that the behavior of $\varphi $
	coincides with the behavior of the automaton.
	
	More specifically, we consider the normalized wBa $\mathcal{\mathcal{M}}%
	=\left( Q,wt,q_{0},Q\right) $ over $K_{2}$ and $\mathcal{P}\left( AP\right) $
	where $wt:Q\times \mathcal{P}\left( AP\right) \times Q$ is given by
	\begin{equation*}
		wt\left( q_{i},\pi ,q_{j}\right) =\left\{
		\begin{array}{ll}
			we\left( q_{i},q_{j}\right) & \text{if }L\left( q_{i}\right) =\pi ,\text{
				and }\left( q_{i},q_{j}\right) \in R \\
			-\infty & \text{otherwise}%
		\end{array}%
		\right.
	\end{equation*}%
	for every $i,j\in \left\{ 0,1,2,3\right\} .$ Clearly, every path of $%
	\mathcal{M}$ over a word $w\in \left( \mathcal{P}\left( AP\right) \right)
	^{\omega }$ is successful.
	
	Next, let $r=q_{0}q_{1}q_{2}\ldots $ be a run of $T$ defining the word $%
	w_{r}.$ Then $P_{w_{r}}^{r}=\left( q_{i},L\left( q_{i}\right)
	,q_{i+1}\right) _{i\geqslant 0}$ is a path of $\mathcal{M}$ over $w_{r}$ with%
	\begin{equation*}
		weight_{\mathcal{M}}\left( P_{w_{r}}^{r}\right) =\mathrm{limsup}\left( \left( wt\left(
		q_{i},L\left(q _{i}\right),q_{i+1}\right) \right) _{i\geq 0}\right)
	\end{equation*}
	\begin{equation*}
		=\mathrm{limsup}%
		\left( \left( we\left( q_{i},q_{i+1}\right) \right) _{i\geq 0}\right) \neq
		-\infty .
	\end{equation*}
	Moreover, if $P_{w}=\left( q_{i},\pi _{i},q_{i+1}\right) _{i\geqslant 0}$ is
	a path of $\mathcal{M}$ over $w=\pi _{0}\pi _{1}\pi _{2}\ldots $ with $%
	weight_{\mathcal{M}}\left( P_{w}\right) \neq -\infty ,$ then $r=q_{0}q_{1}q_{2}\ldots $ is
	a run of $T$ defining the word $w_{r}=w$, and the weight sequence $\omega
	_{r}=\left( wt\left( q_{i},\pi _{i},q_{i+1}\right) \right) _{i\geq 0}=\left(
	we\left( q_{i,}q_{i+1}\right) \right) _{i\geq 0}.$ Thus, there is a
	one-to-one correspondence between the runs of $T$ defining the word $w,$ and
	paths of $\mathcal{M}$ over $w$ with weight$\neq -\infty .$ Then, by Remark~\ref{Remark _T}, we conclude that there exists at most one path $P_{w}$ of $%
	\mathcal{M}$ over $w$ with $weight_{\mathcal{M}%
	}\left( P_{w}\right) \neq -\infty ,$ and $weight_{\mathcal{M}}\left(
	P_{w}\right) $ represents the maximum amount of energy that is produced
	infinitely often by the movement of the robot along the corresponding run $r$
	of $T.$ Thus,
	\begin{equation*}
		\left( \left\Vert \mathcal{M}\right\Vert ,w\right) =\left\{
		\begin{array}{ll}
			weight_{\mathcal{M}}\left( P_{w}^{r}\right) & \text{if there exists a run }r%
			\text{ of }T\text{ with }w_{r}=w \\
			-\infty & \text{otherwise}%
		\end{array}%
		\right.
	\end{equation*}
	
	We consider now the formula $\varphi =\psi \wedge \psi ^{\prime }\in 8${-}$%
	\vee ${-}$t${-}$RULTL\left( K_{2},AP\right) $ where
	\begin{equation*}
		\psi ^{\prime }=\square \left( \left( \lnot upload\wedge 8\right) \vee
		\left( upload\wedge control\wedge 8\right) \vee \left( upload\wedge \lnot
		control\wedge 10\right) \right) ,
	\end{equation*}%
	and%
	\begin{equation*}
		\psi =\varphi _{1}\wedge \square \left( \varphi _{1}\vee \varphi _{2}\vee
		\varphi _{3}\right)
	\end{equation*}%
	where
	
	\begin{itemize}[label=$-$]
		\item $\varphi _{1}=\lnot gather\wedge \lnot upload\wedge \lnot
		control\wedge \bigcirc gather,$ and
		
		\item $\varphi _{2}=gather\wedge \lnot upload\wedge \lnot control\wedge
		\bigcirc upload,$ and
		
		\item $\varphi _{3}=\left( \left( control\wedge upload\wedge \lnot
		gather\right) \vee \left( \lnot control\wedge upload\wedge \lnot
		gather\right) \right) $
		
		$\wedge \bigcirc \left( \lnot gather\wedge \lnot upload\wedge \lnot
		control\right) .$
	\end{itemize}
	
	We note that $\varphi _{1}$ expresses the property that the robot is at a
	state where no action is realized, and at the next state that is visited
	gathering of data should be realized. $\varphi _{2}$ describes the property
	that the robot is at a state where gathering of data is realized, and no
	other action is executed by the robot while it is in that state, and at the
	next state that is visited, uploading of data should be realized. $\varphi
	_{3}$ expresses the property the robot is at a state, where uploading of
	data and control are realized, but gathering of data is not, or at a state where
	only uploading of data is realized, and at the next step a state is visited
	where no action is realized. In fact $\psi$ specifies the requirements that an infinite word $w\in \left( \mathcal{P}\left(
	AP\right) \right) ^{\omega } $ should satisfy so that it is obtained by a run of $T$. More specifically, it holds that if $w\in \left( \mathcal{P}\left(
	AP\right) \right) ^{\omega }$ is a word defined by a run in $T$, then $%
	\left( \left\Vert \psi \right\Vert ,w\right) =\mathbf{1,}$ and vice-versa.
	Moreover,\ $\left( \left\Vert \psi \right\Vert ,w\right) =\mathbf{1},$
	implies $w\models \psi _{b}^{\prime }.$ Finally, $\varphi $ expresses
	\textbf{Property 2}, and since $\varphi $ is $8$-safe we get that $\varphi $
	expresses \textbf{Property 1 }as well.
	
	Then, for every $w=\pi _{0}\pi _{1}\pi _{2}\ldots \in \left( \mathcal{P}%
	\left( AP\right) \right) ^{\omega }$ we get
	\begin{equation*}
		\left( \left\Vert \varphi \right\Vert ,w\right) =\left\{
		\begin{array}{ll}
			8 &
			\begin{array}{l}
				\text{if there exists a run }r\text{ of }T\text{ such that }w_{r}=w, \\
				\text{and }q_{1}\text{ does not appear infinitely often along }r%
			\end{array}%
			\text{ } \\
			10 &
			\begin{array}{l}
				\text{if there exists a run }r\text{ of }T\text{ such that }w_{r}=w, \\
				\text{and }q_{1}\text{ appears infinitely often along }r%
			\end{array}
			\\
			-\infty & \text{ otherwise}%
		\end{array}%
		\right. .
	\end{equation*}
	
	It holds
	\begin{equation*}
		\parallel \varphi \parallel =\Vert \mathcal{M}\Vert .
	\end{equation*}
	
	Then, for every $w\in supp\left( \left\Vert \mathcal{M}\right\Vert \right) ,$
	it holds
	\begin{equation*}
		\left( \parallel \mathcal{M}\parallel ,w\right) =\underset{P_{w}\in succ_{%
				\mathcal{M}}\left( w\right) }{sup}\left( weight_{\mathcal{M}}\left(
		P_{w}\right) \right) =weight_{\mathcal{M}}\left( P_{w}^{r}\right) \geq 8,
	\end{equation*}%
	where $r$ is the unique run of $T$ with $w_{r}=w$. We thus get that the
	maximum amount of energy that is produced infinitely often along any run of $%
	T$ is greater than $8,$ and \textbf{Property 2} is satisfied for any run of $%
	T.$
	
	\section{Algorithms for $k$-safety
		\label{Alg_for_ws}}
	
	In this section, firstly for $K_{2},K_{3}$ (resp. for $K_{1}$), we present an
	algorithm that given a wBa and a $k$-safe $\vee $-totally generalized $U$%
	-nesting weighted $LTL$ formula (resp. $k$-safe totally generalized $U$%
	-nesting weighted $LTL$ formula), decides if the semantics of the formula
	coincides with the behavior of the automaton. The core of the algorithm is
	based on the fact that the weighted $LTL$ formula can be effectively
	translated to a wBa, and the quantitative language equivalence problem is
	decidable for wBa over these specific structures. For the proof of this last
	decidability result we follow the approach of the constructive proof of the
	corresponding result in \cite{Ch-Do}. More precisely, for the proof of the
	following lemmas we follow the constructive approach of Lemmas 1, and 2 in
	\cite{Ch-Do}. We finally conclude that the algorithms presented can be adopted for
	larger families of TGP-$\omega$-valuation monoids, and GP-$\omega$-valuation monoids.
	
	\begin{lemma}
		\label{Lemma 1}Let $\mathcal{M=}\left( Q,wt,q_{0},F\right) $ be a wBa over $%
		K_{3}$ and $A.$ We can effectively construct a wBa $\widehat{\mathcal{M}}%
		\mathcal{=}\left( \widehat{Q},\widehat{wt},\widehat{q_{0}},\widehat{F}%
		\right)$ over $K_{2}$ and $A$ such that $\left\Vert \mathcal{M}\right\Vert
		=\left\Vert \widehat{\mathcal{M}}\right\Vert.$
	\end{lemma}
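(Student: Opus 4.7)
The plan is to adapt the sup-to-limsup simulation from the constructive proof of Lemma 1 in \cite{Ch-Do}. The key idea is that every successful path of $\widehat{\mathcal{M}}$ must eventually commit to a ``guessed'' value $v$ drawn from the finite set of finite transition weights of $\mathcal{M}$, after which each subsequent transition is relabelled with weight $v$ in $\widehat{\mathcal{M}}$; this forces the $\text{limsup}$ of the new path to coincide with $\sup_{-\infty}$ of the original path. Concretely, I would let $V = \{wt(q,a,q') : q,q' \in Q, a \in A\} \setminus \{-\infty,\infty\}$, and take $\widehat{Q} = Q \cup (Q \times V) \cup \{q_{\infty} : q \in Q\}$, $\widehat{q_0} = q_0$, and $\widehat{F} = (F \times V) \cup \{q_\infty : q \in F\}$; choosing the final set inside the committed copies only ensures that any successful path must eventually leave the guessing phase. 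In the guessing phase (states in $Q$) I would let $\widehat{\mathcal{M}}$ faithfully simulate $\mathcal{M}$, and additionally allow, on any original transition of weight $v\in V$ (resp.\ $\infty$), a nondeterministic jump to the committed state $(q',v)$ (resp.\ to $q'_\infty$) with weight $v$ (resp.\ $\infty$). In the committed phase $(q,v)$, only original transitions whose weight lies in $V\cup\{\infty\}$ and, when in $V$, is bounded by $v$, are permitted, each relabelled with weight $v$; in the $\infty$-mode only original $\infty$-transitions are permitted and are kept at weight $\infty$.

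For correctness I would establish $\Vert\mathcal{M}\Vert \sqsubseteq \Vert\widehat{\mathcal{M}}\Vert$ and $\Vert\widehat{\mathcal{M}}\Vert \sqsubseteq \Vert\mathcal{M}\Vert$ by explicit path-to-path translations. For the first, given a successful path $P_w$ of $\mathcal{M}$ with $\sup_{-\infty}(\omega_{P_w}) = v \in V$, I would pick a position $n$ at which $\omega_{P_w}(n)=v$, run the guessing phase on the length-$n$ prefix, jump to the committed state $(q_{n+1},v)$ at step $n$, and remain in $Q\times\{v\}$ from then on; since the constructed weight sequence has a constant tail equal to $v$ and a finite prefix containing no $-\infty$, the definition of $\text{limsup}$ from Example \ref{Example 2} yields limsup $= v$. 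The cases $\sup_{-\infty}(\omega_{P_w}) = \infty$ and $\sup_{-\infty}(\omega_{P_w}) = -\infty$ are handled through the $\infty$-mode and through direct propagation of $-\infty$ in the faithful simulation, respectively. Conversely, any successful path of $\widehat{\mathcal{M}}$ must enter a committed state $(-,v)$ or $q_\infty$ and stay there; forgetting the second coordinate gives a successful path of $\mathcal{M}$ whose $\sup_{-\infty}$ is at least the $\text{limsup}$ of the $\widehat{\mathcal{M}}$-path, because the original weights in the committed tail are either $\leq v$ or equal to $\infty$, and the latter are discarded by $\sup_{-\infty}$ as soon as any finite value occurs. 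Taking suprema on both sides yields both inequalities.

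The main obstacle will be the bookkeeping around the two exceptional values $-\infty=\mathbf{0}$ and $\infty=\mathbf{1}$, which behave asymmetrically under $\sup_{-\infty}$ and $\text{limsup}$: $\infty$ is the top of the natural order and the multiplicative identity, yet is discarded from both $\sup_{-\infty}$ and $\text{limsup}$ whenever any non-$\infty$, non-$-\infty$ value appears in the sequence. In particular, one must ensure that (i) paths whose original weight sequences mix $\infty$'s with genuine finite values are still matched correctly across the phase change, (ii) the finite guessing-phase prefix cannot spoil the final $\text{limsup}$ once the committed tail is constant $v$, and (iii) the $\infty$-mode is entered exactly by those paths whose entire original weight sequence equals $\infty$. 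Lemma \ref{Valuation inequality}, together with a careful case analysis guided by the piecewise definitions in Examples \ref{Example 2} and \ref{Example 3}, should suffice to settle each of these points.
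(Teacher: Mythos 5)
Your construction is correct and follows essentially the same route as the paper's proof: both adapt the Chatterjee--Doyen--Henzinger simulation by adjoining committed copies $Q\times V$ that are entered nondeterministically at a transition attaining the guessed supremum $v$ and whose subsequent transitions are all relabelled $v$, so that the $\mathrm{limsup}$ of the new path equals the $\sup_{-\infty}$ of the original one. The only difference is cosmetic: the paper keeps the original final states $F$ inside $\widehat{F}$, so all-$\infty$ paths are handled by the faithful copy and the converse inequality uses $\mathrm{limsup}\leq \sup_{-\infty}$ on paths that never commit, whereas you force every successful path to commit and add a separate $\infty$-mode --- both variants work.
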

	
	\begin{proof}
		Let $C=Im\left( wt\right) \diagdown \left\{ \infty \right\} $. We define the
		wBa $\widehat{\mathcal{M}}\mathcal{=}\left( \widehat{Q},\widehat{wt},%
		\widehat{q_{0}},\widehat{F}\right)$ over $K_{2}$ as follows . We set
		
		\begin{itemize}
			\item $\widehat{Q}=Q\cup \left( Q\times C\right) $
			
			\item $\widehat{q_{0}}=q_{0}$
			
			\item $\widehat{wt}\left( p,a,\widetilde{p}\right) =\left\{
			\begin{array}{ll}
				wt\left( p,a,\widetilde{p}\right) & \text{ \ if }\left( p,a,\widetilde{p%
				}\right) \in Q\times A\times Q\text{ } \\
				&  \\
				wt\left( p,a,\widetilde{q}\right) & \text{ \ if }p\in Q,\widetilde{p}=\left(
				\widetilde{q},wt\left( p,a,\widetilde{q}\right) \right) \in Q\times C \\
				&  \\
				v &
				\begin{array}{l}
					\text{if }p=\left( q,v\right) ,\widetilde{p}=\left( \widetilde{q},v\right)
					\in Q\times C,q\neq \widetilde{q},\text{ } \\
					\text{and }wt\left( q,a,\widetilde{q}\right) \leq v%
				\end{array}
				\\
				&  \\
				v & \text{ \ if }p=\widetilde{p}=\left( q,v\right) \in Q\times C \\
				&  \\
				-\infty & \text{ \ otherwise}%
			\end{array}%
			\right. $
			
			\item $\widehat{F}=F\cup \left( F\times C\right) $
		\end{itemize}
		
		Let $w=a_{0}a_{1}\ldots \in A^{\omega }.$ We show that for every $\widehat{P}%
		_{w}\in succ_{\widehat{\mathcal{M}}}\left( w\right) $ with $weight_{\widehat{%
				\mathcal{M}}}\left( \widehat{P}_{w}\right) \neq -\infty ,$ there exists a
		successful path $P_{w}$ of $\mathcal{M}$ over $w$ with $weight_{\widehat{%
				\mathcal{M}}}\left( \widehat{P}_{w}\right) \leq weight_{\mathcal{M}}\left(
		P_{w}\right) .$ We let $\widehat{P}_{w}=t_{0}t_{1}t_{2}\ldots \in succ_{%
			\widehat{\mathcal{M}}}\left( w\right) $ with $weight_{\widehat{\mathcal{M}}%
		}\left( \widehat{P}_{w}\right) \neq -\infty $. Let $In^{\widehat{Q}}\left(
		\widehat{P}_{w}\right) \subseteq Q$. Then, $\widehat{P}_{w}$ is also a
		successful path of $\mathcal{M}$ over $w$, and we point out the following
		cases. (a) If $weight_{\widehat{\mathcal{M}}}\left( \widehat{P}_{w}\right)
		\neq \infty ,$ then $weight_{\widehat{\mathcal{M}}}\left( \widehat{P}%
		_{w}\right) =\mathit{limsup}\left( \widehat{wt}\left( t_{i}\right) \right)
		_{i\geq 0}\leq weight_{\mathcal{M}}\left( P_{w}\right)
		=\sup_{-\infty }\left( wt\left( t_{i}\right) \right) _{i\geq 0}$
		as desired. (b) If $weight_{\widehat{\mathcal{M}}}\left( \widehat{P}%
		_{w}\right) =\infty ,$ then the weight of each transition is equal to $%
		\infty $, and thus $\widehat{P}_{w}$ is also a successful path of $\mathcal{M%
		}$ with the same weight. Assume now that there exists $q\in In^{\widehat{Q}%
		}\left( \widehat{P}_{w}\right) ,$ such that $q\notin Q.$ Then, $weight_{%
			\widehat{\mathcal{M}}}\left( \widehat{P}_{w}\right) \neq \infty ,$ and $%
		\widehat{P}_{w}=\left( q_{0},a_{0},q_{1}\right) \ldots \left(
		q_{j},a_{j},q_{j+1}\right) \left( q_{j+1},a_{i+1,}\left( q_{i+2},v\right)
		\right) \left( \left( q_{i+j},v\right) ,a_{i+j},\left( q_{i+j+1},v\right)
		\right) _{j\geq 2}$\\ with $weight_{\widehat{\mathcal{M}}}\left( \widehat{P}%
		_{w}\right) =v=\widehat{wt}\left( q_{j+1},a_{j+1},\left( q_{j+2},v\right)
		\right) =wt\left( q_{j+1},a_{j+1},q_{j+2}\right) $. Then, the path $%
		P_{w}=\left( q_{i},a_{i},q_{i+1}\right) _{i\geq 0}$ is a successful path of $%
		\mathcal{M}$ over $w$ with $weight_{\widehat{\mathcal{M}}}\left( \widehat{P}%
		_{w}\right) \leq weight_{\mathcal{M}}\left( P_{w}\right) $ as desired.
		
		Thus,%
		\begin{eqnarray}
			\left( \left\Vert \widehat{\mathcal{M}}\right\Vert ,w\right) &=&\underset{%
				\widehat{P}_{w}\in succ_{\widehat{\mathcal{M}}}\left( w\right) }{\sup }%
			\left( weight_{\widehat{\mathcal{M}}}\left( \widehat{P}_{w}\right) \right)
			\label{Lemma1,1} \\
			&\leq &\underset{P_{w}\in succ_{\mathcal{M}}\left( w\right) }{\sup }\left(
			weight_{\mathcal{M}}\left( P_{w}\right) \right) =\left( \left\Vert \mathcal{M%
			}\right\Vert ,w\right) .  \notag
		\end{eqnarray}
		
		We now prove that for every $P_{w}\in succ_{\mathcal{M}}\left( w\right) $
		with $weight_{\mathcal{M}}\left( P_{w}\right) \neq -\infty ,$ there exists a
		$\widehat{P}_{w}\in succ_{\widehat{\mathcal{M}}}\left( w\right) $ such that $%
		weight_{\mathcal{M}}\left( P_{w}\right) \leq weight_{\widehat{\mathcal{M}}%
		}\left( \widehat{P}_{w}\right) .$ We let $P_{w}=\left(t_{i}\right)_{i\geq0}  \in
		succ_{\mathcal{M}}\left( w\right) $ with $weight_{\mathcal{M}}\left(
		P_{w}\right) \neq -\infty .$ If $weight_{\mathcal{M}}\left( P_{w}\right)
		\\\neq \infty $, then we obtain a successful path $\widehat{P}_{w}$ of $%
		\widehat{\mathcal{M}}$ over $w$ with $weight_{\widehat{\mathcal{M}}}\left(
		\widehat{P}_{w}\right) =weight_{\mathcal{M}}\left( P_{w}\right) $ in the
		following way: We let $t_{i}=\left( q_{i},a_{i},q_{i+1}\right) $ for all $%
		i\geq 0$, and we construct $\widehat{P}_{w}$ by copying the finite prefix of
		$P_{w}$ until we reach a transition $t_{i}$ with $wt\left( t_{i}\right)
		=v=weight_{\mathcal{M}}\left( P_{w}\right) $, and then we act as follows: if
		$q_{i+1}\in F$ we loop through $\left( q_{i+1},v\right) $, i.e.,
		\begin{equation*}
			\widehat{P}_{w}=\left( q_{0},a_{0},q_{1}\right) \ldots \left(
			q_{i},a_{i},\left( q_{i+1},v\right) \right) \left( \left( q_{i+1},v\right)
			,a_{i+j},\left( q_{i+1},v\right) \right) _{j\geq 1}
		\end{equation*}
		otherwise we reach a final state $\left( q_{i+k},v\right) ,k\geq
		2,q_{i+k}\in F,$ with the sequence of transitions \\ \\ $\left(
		q_{i+1},a_{i+1},\left( q_{i+2},v\right) \right) \left( \left(
		q_{i+2},v\right) ,a_{i+2},\left( q_{i+3},v\right) \right) \ldots \left(
		\left( q_{i+k-1},v\right) ,a_{i+k-1},\left( q_{i+k},v\right) \right) $\ \\\\ and
		finally loop through $\left( q_{i+k},v\right) ,$ i.e.,
		\begin{eqnarray*}
			\widehat{P}_{w} &=&\left( q_{0},a_{0},q_{1}\right) \ldots \left(
			q_{i},a_{i},\left( q_{i+1} ,v\right)\right) \left( q_{i+1},a_{i+1},\left(
			q_{i+2},v\right) \right) \\
			&&\left( \left( q_{i+2},v\right) ,a_{i+2},\left( q_{i+3},v\right) \right)
			\ldots \left( \left( q_{i+k-1},v\right) ,a_{i+k-1},\left( q_{i+k},v\right)
			\right) \\&&
			\left( \left( q_{i+k},v\right) ,a_{i+j},\left( q_{i+k},v\right)
			\right) _{j\geq k}
		\end{eqnarray*}
		If $weight_{\mathcal{M}}\left( P_{w}\right) =\infty ,$ then $P_{w}$ is also
		a successful path of $\widehat{\mathcal{M}}$ over $w$ with the same weight. Thus,%
		\begin{eqnarray}
			\left( \left\Vert \mathcal{M}\right\Vert ,w\right) &=&\underset{P_{w}\in
				succ_{\mathcal{M}}\left( w\right) }{\sup }\left( weight_{\mathcal{M}}\left(
			P_{w}\right) \right)  \label{Lemma1,2} \\
			&\leq &\underset{\widehat{P}_{w}\in succ_{\widehat{\mathcal{M}}}\left(
				w\right) }{\sup }\left( weight_{\widehat{\mathcal{M}}}\left( \widehat{P}%
			_{w}\right) \right) =\left( \left\Vert \widehat{\mathcal{M}}\right\Vert
			,w\right) .  \notag
		\end{eqnarray}
		
		Hence, by (\ref{Lemma1,1}), (\ref{Lemma1,2}) we get that $\left( \left\Vert
		\widehat{\mathcal{M}}\right\Vert ,w\right) =\left( \left\Vert \mathcal{M}%
		\right\Vert ,w\right) $ for every $w\in A^{\omega },$ and this concludes our
		proof.
	\end{proof}
	
	\begin{lemma}
		\label{Lemma 2}
		Let $\mathcal{N=}\left( Q,wt,q_{0},F\right) $ be a normalized wBa over $%
		K_{2}$ and $A$, and $v\in \overline{%
			\mathbb{Q}
		}.$ We can effectively construct a Ba $\mathcal{N}^{\geq v}$ such that $%
		\left\Vert \mathcal{N}^{\geq v}\right\Vert =\left\{ w\in A^{\omega }\mid
		\left( \left\Vert \mathcal{N}\right\Vert ,w\right) \geq v\right\} .$
	\end{lemma}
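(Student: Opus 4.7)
The plan is to reduce $(\|\mathcal{N}\|,w)\geq v$ to a purely path-theoretic statement and then express that statement as a finite union of Büchi-recognisable conditions. Because $K_{2}$'s additive operation is $\sup$, $(\|\mathcal{N}\|,w)\geq v$ holds iff some successful path $P_{w}$ of $\mathcal{N}$ satisfies $\mathrm{limsup}(wt(t_{i}))_{i\geq 0}\geq v$. Set $C:=\mathrm{Im}(wt)\subseteq\overline{\mathbb{Q}}$ (a finite set) and $C':=C\setminus\{\infty\}$. The boundary cases are immediate: for $v=-\infty$ take any Ba recognising $A^{\omega}$; for $v=\infty$ take $\mathcal{N}^{\geq\infty}=(Q,A,\{q_{0}\},\Delta^{\infty},F)$ with $\Delta^{\infty}=\{(p,a,q):wt(p,a,q)=\infty\}$, because only $K_{2}$-case~2 of Example~\ref{Example 2} attains $\infty$.

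For $v\in\mathbb{Q}$ I unfold the $K_{2}$ definition of $\mathrm{limsup}$ into three combinatorial conditions on $P_{w}$, each ruling out $-\infty$-transitions: \emph{(i)} every transition weight is $\infty$ ($K_{2}$-case~2); \emph{(ii)} only finitely many transitions have weight in $C'$ and at least one of them lies in $C'\cap[v,\infty)$ ($K_{2}$-case~4); \emph{(iii)} infinitely many transitions have weight in $C'\cap[v,\infty)$ ($K_{2}$-case~3). Case~\emph{(i)} is witnessed by the Ba obtained from $\mathcal{N}$ by keeping only $\infty$-weighted transitions, with $F$ unchanged. Case~\emph{(iii)} is witnessed by a product of $\mathcal{N}$ (minus every $-\infty$-transition) with a one-bit flag that resets on each transition of weight in $C'\cap[v,\infty)$; the generalised Büchi condition ``$F$ infinitely often \emph{and} reset infinitely often'' is turned into a Ba by the standard counter construction. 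Case~\emph{(ii)} is witnessed by a two-phase product: in phase~1, $\mathcal{N}$ runs while avoiding $-\infty$-transitions; on a nondeterministically chosen transition of weight in $C'\cap[v,\infty)$ the automaton moves to phase~2, where only $\infty$-weighted transitions are allowed, and the Büchi set is the phase-2 copy of $F$. Finally $\mathcal{N}^{\geq v}$ is the disjoint union of these three Ba (Ba being closed under union via disjoint union with multiple initial states, followed by Theorem~\ref{B-M-R} if normalisation is required).

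The correctness of the construction is a direct read-off of the case analysis, since the three sub-conditions together exhaust the ways the $K_{2}$-limsup of a $-\infty$-free sequence can be $\geq v$. The one real subtlety, and the only step where the proof departs from the limsup construction of~\cite{Ch-Do}, is case~\emph{(ii)}: $K_{2}$'s definition \emph{excludes} $\infty$-values from the inner supremum, so a tail of $\infty$-weighted transitions alone does not imply $\mathrm{limsup}\geq v$; one must additionally record a finite transition of weight $\geq v$ before the $\infty$-tail starts. The two-phase gadget is designed precisely to store this one-shot witness; all other pieces reduce to standard product, generalised-Büchi-to-Büchi conversion and union constructions.
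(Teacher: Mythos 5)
Your overall architecture is legitimate and essentially mirrors what the paper does in a different guise: the paper builds a single Muller automaton whose modes $3$, $2$, $6$ correspond exactly to your cases (iii), (i), (ii), and then invokes Theorem~\ref{B-M-R}, whereas you build three Büchi automata and take their union. Your reduction to ``some successful path has $K_{2}$-limsup $\geq v$'' is also fine (the supremum is attained because all path weights lie in the finite set $\mathrm{Im}(wt)\cup\{-\infty,\infty\}$), and your automata for cases (i) and (iii) and for the boundary values $v=\pm\infty$ are correct.

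However, your gadget for case (ii) has a genuine gap. In the ``otherwise'' clause of Example~\ref{Example 2} the value of $\mathrm{limsup}$ is the supremum over \emph{all} non-$\infty$ entries of the sequence, so the witnessing transition of finite weight $\geq v$ need not be the \emph{last} non-$\infty$ transition of the path. Your two-phase automaton forces exactly that: it switches to phase~2 \emph{on} the witnessing transition and thereafter admits only $\infty$-weighted transitions. Concretely, take $A=\{a\}$, $Q=\{q_{0},q_{1},q_{2}\}$, $F=\{q_{2}\}$, $wt(q_{0},a,q_{1})=0$, $wt(q_{1},a,q_{2})=-1$, $wt(q_{2},a,q_{2})=\infty$, all other weights $-\infty$, and $v=0$. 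The unique non-$(-\infty)$ successful path over $a^{\omega}$ has weight sequence $0,-1,\infty,\infty,\ldots$, whose $K_{2}$-limsup is $\sup\{0,-1\}=0\geq v$, so $a^{\omega}$ must be accepted; but none of your three sub-automata accepts it (not all weights are $\infty$; only one transition has weight in $C'\cap[v,\infty)$; and after switching on the weight-$0$ transition, phase~2 rejects the weight-$(-1)$ transition). The repair is to insert an intermediate phase between the witness and the $\infty$-tail that still admits arbitrary non-$(-\infty)$ transitions, with a further nondeterministic switch into the all-$\infty$ phase where the Büchi set lives --- this is precisely the role of the paper's mode $5$ sitting between modes $3$ and $6$, with the Muller (or an equivalent Büchi) condition forbidding mode $5$ from recurring forever. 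With that three-phase correction, your case analysis and union construction go through.
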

	
	\begin{proof}
		We first construct a Ma $\mathcal{M}^{\geq v}$ with $%
		\left\Vert \mathcal{M}^{\geq v}\right\Vert =\left\{ w\in A^{\omega }\mid
		\left( \left\Vert \mathcal{N}\right\Vert ,w\right) \geq v\right\} .$ Then, we obtain the claim by Theorem \ref{B-M-R}. We assume first that $v\in Q.$ We let $\mathcal{M}^{\geq v}=\left(
		Q^{\geq v},A,q_{0}^{\geq v},\Delta ^{\geq v},F^{\geq v}\right) $ where
		
		\begin{itemize}
			\item $Q^{\geq v}=\left( Q\times \left\{ 0,1,3,4\right\} \times \left\{
			B,C\right\} \right) \cup \left( \widehat{Q}\times \left\{ 2,5,6\right\}
			\times \left\{ B,C\right\} \right) $
			
			where $\widehat{Q}=\left\{ s_{q}\mid q\in Q\right\} $ is a set of copies of
			the elements of $Q$
			
			\item $q_{0}^{\geq v}=\left( q_{0},0,C\right)$
			\item The set of final sets is defined as follows
			
			$F^{\geq v}=\left\{ \widehat{F}\mid \widehat{F}\subseteq F_{1}\cup F_{2}\cup
			F_{3},F_{1}\cap \widehat{F}\neq \emptyset ,F_{2}\cap \widehat{F}\neq
			\emptyset \right\} $ where
			
			$F_{1}=\left( Q\times \left\{ 3\right\} \times \left\{ C,B\right\} \right)
			\cup \left( \widehat{Q}\times \left\{ 2\right\} \times \left\{ C,B\right\}
			\right) \cup \left( \widehat{Q}\times \left\{ 6\right\} \times \left\{
			C,B\right\} \right), $
			
			$F_{2}=\left( Q\times \left\{ 0,1,3,4\right\} \times \left\{ B\right\}
			\right) \cup \left( \widehat{Q}\times \left\{ 2,5,6\right\} \times \left\{
			B\right\} \right), $ and
			
			$F_{3}=Q\times \left\{ 1,4\right\} \times \left\{ B,C\right\}.$
		\end{itemize}
		
		The definition of $\Delta ^{\geq v}$, and the proof of the equality
		\begin{equation*}
			\left\Vert \mathcal{M}^{\geq v}\right\Vert =\left\{ w\in A^{\omega }\mid
			\left( \left\Vert \mathcal{N}\right\Vert ,w\right) \geq v\right\}
		\end{equation*}
		is presented in an analytical way in the Appendix. We only here describe intuitively the operation of $\mathcal{M}^{\geq v}$. The set of states of $\mathcal{M}^{\geq v}$ are triples in which the first component is either a state of $\mathcal{N}$ or a copy of a state of $\mathcal{N}$. A transition $\left( \left( q,i,k\right) ,a,\left(\overline{q},j,l\right) \right)$ of $\mathcal{M}^{\geq v}$ between states whose first component is an element of $Q$ simulates the transition $\left(q,a,\overline{q}\right)$ of $\mathcal{N}$. At the same time the information on whether the weight of the simulated transition is greater or equal to $v$, or smaller to $v$, or equal to $\infty$ is indicated by the number that appears in second component of the arriving state, i.e., $j$ is equal to $1, 3,$ or~$4$ respectively. The third component of the arriving state is $B$ if the arriving state of the original transition is final, and $C$ otherwise.
		Copies of states of~$\mathcal{N}$ are used to simulate paths of $\mathcal{N}$ where there exists a position after which the weight of each transition, appearing in the path, is $\infty$. More precisely, a path of the original automaton where the weight of each transition is equal to $\infty$ is simulated by a path of $\mathcal{M}^{\geq v}$ where the first component of every state, but the first one, is a copy of the corresponding state of the original path. The second component of every state, but the first one, of the path of $\mathcal{M}^{\geq v}$ is equal to $2$, indicating that the weight of the transition, that is being simulated, is equal to $\infty$. The third components of the states that appear in the new path are either $B$, or $C$, depending on whether the first component of the state is a copy of a final state of the original automaton or not. Finally, paths of the original automaton (with weight different from $\neq -\infty$) in which there are finitely many transitions with a rational weight, and at least one transition with weight greater, or equal to $v$, are simulated in the new automaton by paths where the automaton transits to a state where the first component is a copy of the corresponding state of $\mathcal{N}$ when the weight of the transition that is being simulated is a rational number greater, or equal to $v$. The second component of this state equal to $5$. Then, the rest of the original original path is simulated by transitions between states whose first component is a copy of an element of $Q$, and second component is equal to $5$, up until the point where some transition with weight $\infty$ is simulated. Then, $\mathcal{M}^{\geq v}$ enters a state whose second component is equal to $6$. Finally, the new automaton can move from a state with second component~$6$, only to states whose second component is also $6$, and only to simulate a transition of the original automaton whose weight is equal to $\infty$. The Muller acceptance condition is used to verify that successful paths of the new automaton simulate successful paths of $\mathcal{N}$ in which the weights of the transitions are equal to the desired values infinitely often.
		
		If $v=\infty ,$ we prove our claim similarly. Finally, if $v=-\infty ,$ then
		$\left( \left\Vert \mathcal{N}\right\Vert ,w\right) \geq v$ for every $w\in
		A^{\omega },$ which implies that $\left\{ w\in A^{\omega }\mid \left(
		\left\Vert \mathcal{N}\right\Vert ,w\right) \geq v\right\} \\=A^{\omega },$
		and it is trivial to construct the desired Ba.
	\end{proof}
	
	\begin{lemma}
		\label{Lemma 2 inf}
		Let $\mathcal{N=}\left( Q,wt,q_{0},F\right)$ be a normalized wBa over $%
		K_{1}$ and $A$, and $v\in
		\mathbb{Q}
		.$ We can effectively construct a Ba $\mathcal{N}^{\geq v}$ over $A$ such
		that $\left\Vert \mathcal{N}^{\geq v}\right\Vert =\left\{ w\in A^{\omega}\mid \left( \left\Vert \mathcal{N}\right\Vert ,w\right) \geq v\right\} .$
	\end{lemma}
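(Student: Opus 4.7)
The plan is to build, in the spirit of Lemma \ref{Lemma 2}, a B\"uchi automaton $\mathcal{N}^{\geq v}$ recognising $\left\{ w \in A^{\omega } \mid \left( \left\Vert \mathcal{N}\right\Vert ,w\right) \geq v\right\} $, obtained as the union (B\"uchi automata are closed under union) of two B\"uchi automata $\mathcal{M}_{sc}$ and $\mathcal{M}_{b}$, each corresponding to a structural ``case'' of when $\mathrm{liminf} \geq v$ can happen. The essential change from Lemma \ref{Lemma 2} is that the $\omega $-valuation of $K_{1}$ is $\mathrm{liminf}$ rather than $\mathrm{limsup}$, so the acceptance condition must enforce that eventually every non-$\infty $ weight along a successful run is $\geq v$, rather than that some weight is $\geq v$ infinitely often.

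The first step is a case analysis of when $\mathrm{liminf}\left( \left( d_{i}\right) _{i\geq 0}\right) \geq v$ for a sequence $\left( d_{i}\right) _{i\geq 0}$ drawn from the finite set $Im\left( wt\right) $. The finiteness of $Im\left( wt\right) $ lets me characterise this as: no $d_{i}=-\infty $ and one of the cases (a) every $d_{i}=\infty $, (b) infinitely many $d_{i}\in \mathbb{Q}$ and only finitely many $d_{i}<v$, or (c) only finitely many $d_{i}\neq \infty $ and no $d_{i}<v$. Cases (a) and (c) impose a safety-type condition on the whole path (no weight $<v$ anywhere), whereas case (b) imposes a B\"uchi-type condition on the tail of the path.

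Accordingly, $\mathcal{M}_{sc}$ is obtained from $\mathcal{N}$ by removing all transitions of weight $-\infty $ or $<v$ and keeping the same B\"uchi acceptance; its language is precisely the set of words admitting an $\mathcal{N}$-path that realises case (a) or (c). The automaton $\mathcal{M}_{b}$ is built as a Muller automaton (then converted to B\"uchi via Theorem \ref{B-M-R}(i)) whose state set is $Q\times \left\{ 0,1,2,3\right\} \times \left\{ B,C\right\} $; the second component labels the type of the last transition used ($1$: rational and $\geq v$; $2$: equal to $\infty $; $3$: rational and $<v$; $0$: initial), the third component flags membership in $F$, and $-\infty $-transitions are discarded. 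The Muller family is chosen so that an accepting set $\widehat{F}$ contains at least one state with third component $B$, no state with second component $3$, and at least one state with second component $1$, which captures exactly the case-(b) condition ``only finitely many weight-$<v$ transitions, infinitely many rational weight-$\geq v$ transitions, and $F$ visited infinitely often''.

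The main obstacle lies in the correctness for case (c): a single Muller condition of the form ``$j=3$ absent from $In^{Q}$'' would wrongly accept paths that use a weight-$<v$ transition in the finite prefix and then settle into an all-$\infty $ tail, because Muller conditions are blind to the prefix. Splitting into $\mathcal{M}_{sc}$ and $\mathcal{M}_{b}$ avoids this pitfall: runs of $\mathcal{M}_{sc}$ have no weight-$<v$ transition at any position by construction, while the $j=1$-hit clause in the Muller family of $\mathcal{M}_{b}$ forces infinitely many rational weight-$\geq v$ transitions and thereby rules out the problematic $\infty $-tail situation. A back-and-forth verification in the style of Lemmas \ref{Lemma 1} and \ref{Lemma 2} then yields $\left\Vert \mathcal{N}^{\geq v}\right\Vert =\left\Vert \mathcal{M}_{sc}\right\Vert \cup \left\Vert \mathcal{M}_{b}\right\Vert =\left\{ w\in A^{\omega }\mid \left( \left\Vert \mathcal{N}\right\Vert ,w\right) \geq v\right\} $; the edge cases $v=\pm \infty $ do not arise since $v\in \mathbb{Q}$ by hypothesis.
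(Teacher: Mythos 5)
Your construction is correct, but it takes a genuinely different route from the paper's. The paper builds a \emph{single} Rabin automaton over the state set $\left( Q\times \left\{ 0,1,3,4\right\} \times \left\{ B,C\right\} \right) \cup \left( \widehat{Q}\times \left\{ 2,5,6\right\} \times \left\{ B,C\right\} \right)$, using a set $\widehat{Q}$ of copies of $Q$ to isolate the branches in which every transition (from the start) has weight $\infty$, respectively weight $\geq v$ with an all-$\infty$ tail; its acceptance family mixes pairs $\left( \widehat{S},\widehat{F}\right)$ whose $L$-component is meant to forbid the weight-$<v$ states from recurring (your case (b)) with pairs having empty $L$-component for the remaining branches (your cases (a) and (c)), and then invokes Theorem \ref{B-M-R}(ii). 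You instead decompose at the level of languages: a transition-pruned copy $\mathcal{M}_{sc}$ of $\mathcal{N}$ with the original B\"uchi condition, plus a Muller automaton $\mathcal{M}_{b}$ over the smaller state set $Q\times \left\{ 0,1,2,3\right\} \times \left\{ B,C\right\}$, combined by closure of B\"uchi languages under union and Theorem \ref{B-M-R}(i). Your version buys a leaner state space and, by phrasing the case-(b) acceptance as a Muller condition (hit a $B$-state and a $1$-state infinitely often, never a $3$-state infinitely often), it expresses the required \emph{conjunction} of recurrence and co-recurrence constraints directly, which a single Rabin pair cannot; the paper pays for its one-automaton packaging with the extra copy-states and a considerably more intricate transition relation. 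Your explicit identification of the prefix-blindness pitfall, and your use of the finiteness of $Im\left( wt\right)$ to turn ``$\mathrm{liminf}\geq v$'' into ``cofinitely many weights $\geq v$'' in case (b), are exactly the points on which correctness hinges, and both are handled soundly.

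Two small inaccuracies worth fixing, neither of which breaks the argument. First, $\left\Vert \mathcal{M}_{sc}\right\Vert$ is not ``precisely'' the set of words admitting a path realising case (a) or (c): it is the set of words admitting a successful path all of whose weights are $\geq v$, which also includes case-(b) paths with no sub-$v$ weight at all; since every such path still has weight $\geq v$, the union remains correct, but the word ``precisely'' should go. Second, in the final back-and-forth you should say explicitly that $\left( \left\Vert \mathcal{N}\right\Vert ,w\right) \geq v$ yields an individual successful path of weight $\geq v$ because the set of achievable path weights is contained in the finite set $Im\left( wt\right) \cup \left\{ \infty ,-\infty \right\}$, so the supremum over successful paths is attained; the paper uses this silently as well.
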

	
	\begin{proof}
		To prove the Lemma's claim we first construct a Ra $\mathcal{M}^{\geq v}$ with the property that $\left\Vert \mathcal{M}^{\geq v}\right\Vert =\left\{ w\in A^{\omega}\mid \left( \left\Vert \mathcal{N}\right\Vert ,w\right) \geq v\right\} .$ Then, the claim is obtained by Theorem \ref{B-M-R}. The construction of $\mathcal{M}^{\geq v}$ is presented in the Appendix.
	\end{proof}
	
	\begin{theorem}
		\label{quant incl}The quantitative language-inclusion problem is decidable
		for wBa over $K_{1}$, or $K_{2}$, or $K_{3}$ and $A.$
	\end{theorem}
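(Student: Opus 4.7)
The plan is to reduce the quantitative language-inclusion problem for wBa over $K_1$, $K_2$, and $K_3$ to finitely many classical inclusion checks between B\"uchi automata, via a threshold language characterization. First, if the two given wBa are over $K_3$, I apply Lemma \ref{Lemma 1} to each of them and obtain equivalent wBa over $K_2$; hence it suffices to supply an algorithm for the cases $K_1$ and $K_2$. Given two wBa $\mathcal{N}_1,\mathcal{N}_2$ over $K_i$ (with $i\in\{1,2\}$) and $A$, I first normalize both using the earlier normalization result, which is required to invoke Lemmas \ref{Lemma 2} and \ref{Lemma 2 inf}.

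The key observation is the equivalence
\begin{equation*}
\|\mathcal{N}_1\|\sqsubseteq\|\mathcal{N}_2\| \quad\text{iff}\quad L_{\geq v}(\|\mathcal{N}_1\|)\subseteq L_{\geq v}(\|\mathcal{N}_2\|) \text{ for every } v\in V,
\end{equation*}
where $V$ is any set containing $Im(\|\mathcal{N}_1\|)\setminus\{-\infty\}$. The forward implication is immediate; for the converse, if $(\|\mathcal{N}_1\|,w)>(\|\mathcal{N}_2\|,w)$ for some $w$, the choice $v=(\|\mathcal{N}_1\|,w)$ witnesses a failure of inclusion at threshold $v$. I claim that I can take $V=Im(wt_i)\cup\{-\infty,\infty\}$, a finite and explicitly computable set: the $\omega$-valuation functions $\mathrm{liminf}$ and $\mathrm{limsup}$ each return a value that appears in their input sequence or equals $\pm\infty$, and the behavior of a wBa is the (natural-order) sum of such values over successful paths, which over $K_1,K_2$ coincides with $\sup$; hence $Im(\|\mathcal{N}_i\|)\subseteq Im(wt_i)\cup\{-\infty,\infty\}$.

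For each $v\in V$, Lemma \ref{Lemma 2} (if $i=2$) or Lemma \ref{Lemma 2 inf} (if $i=1$) effectively produces B\"uchi automata $\mathcal{N}_1^{\geq v}$ and $\mathcal{N}_2^{\geq v}$ recognizing $L_{\geq v}(\|\mathcal{N}_1\|)$ and $L_{\geq v}(\|\mathcal{N}_2\|)$ respectively. Since the language-inclusion problem is decidable for B\"uchi automata (a classical result, see \cite{Pe-In}), I decide each of the finitely many inclusions $\|\mathcal{N}_1^{\geq v}\|\subseteq\|\mathcal{N}_2^{\geq v}\|$; the answer to the original problem is affirmative precisely when every one of them holds.

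The bulk of the technical work has already been carried out in Lemmas \ref{Lemma 1}, \ref{Lemma 2}, and \ref{Lemma 2 inf}, together with the decidability of Ba-inclusion. The only genuinely new ingredient is the identification of a finite, effectively computable threshold set $V$, and I expect the main care to be needed in justifying $Im(\|\mathcal{N}_i\|)\subseteq Im(wt_i)\cup\{-\infty,\infty\}$ uniformly for the three $\omega$-valuation functions $\mathrm{liminf}$, $\mathrm{limsup}$, and $\sup_{-\infty}$: this rests on the fact that these functions pick an element of their input sequence (or $\pm\infty$), together with the observation that the $\sup$ of finitely many such values is again one of them.
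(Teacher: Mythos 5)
Your proposal is correct and follows essentially the same route as the paper: reduce $K_{3}$ to $K_{2}$ via Lemma \ref{Lemma 1}, observe that the behavior of a wBa over $K_{1}$ or $K_{2}$ only takes values in the (finite, computable) image of its weight function together with $-\infty$, and then reduce the quantitative inclusion to finitely many classical B\"uchi inclusions between the threshold automata of Lemmas \ref{Lemma 2} and \ref{Lemma 2 inf}. Your explicit normalization step and your justification of $Im\left( \left\Vert \mathcal{N}\right\Vert \right) \subseteq Im\left( wt\right) \cup \left\{ -\infty ,\infty \right\} $ are details the paper leaves implicit, but they do not constitute a different approach.
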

	
	\begin{proof}
		Let $\mathcal{N=}\left( Q,wt,q_{0},F\right) $, $\widehat{\mathcal{N}}%
		\mathcal{=}\left( \widehat{Q},\widehat{wt},\widehat{q_{0}},\widehat{F}%
		\right) $ be two wBa over $K_{2} $ and $A.$ Let $C=Im\left( wt\right) ,$ and
		$\widehat{C}=Im\left( \widehat{wt}\right) $. Clearly, $C, \widehat{C}$ are computable. Since $\left( \left\Vert
		\mathcal{N}\right\Vert ,w\right) \in C$ for all $w\in A^{\omega },$ it holds
		$\left\Vert \mathcal{N}\right\Vert \sqsubseteq $ $\left\Vert \widehat{%
			\mathcal{N}}\right\Vert $ iff $\left\Vert \mathcal{N}^{\geq v}\right\Vert
		\subseteq \left\Vert \widehat{\mathcal{N}}^{\geq v}\right\Vert $ for all $%
		v\in C,$ where $\mathcal{N}^{\geq v}$, $\widehat{\mathcal{N}}^{\geq v}$ are
		the Ba obtained by $\mathcal{N}$, $\widehat{\mathcal{N}}$ following the
		proof of Lemma \ref{Lemma 2}. We thus conclude that the quantitative
		inclusion problem is decidable for wBa over $K_{2}$. The proof for wBa over $%
		K_{1}$ is treated similarly. We conclude our claim for wBa over $K_{3}$ by
		Lemma \ref{Lemma 1}$.$
	\end{proof}
	
	\begin{theorem}
		\label{Theorem quant equal}The quantitative language equivalence problem is
		decidable for wBa over $K_{1}$, or $K_{2}$, or $K_{3}$ and $A.$
	\end{theorem}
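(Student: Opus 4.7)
The plan is to reduce the quantitative language-equivalence problem to the quantitative language-inclusion problem, which is already decidable by Theorem \ref{quant incl}. Given two wBa $\mathcal{N}$ and $\widehat{\mathcal{N}}$ over $A$ and one of $K_{1}, K_{2}, K_{3}$, I would observe that for every $w\in A^{\omega}$ we have $(\|\mathcal{N}\|,w)=(\|\widehat{\mathcal{N}}\|,w)$ iff both $(\|\mathcal{N}\|,w)\leq (\|\widehat{\mathcal{N}}\|,w)$ and $(\|\widehat{\mathcal{N}}\|,w)\leq (\|\mathcal{N}\|,w)$ hold in the natural order of the monoid (the ordering is total, so these two inequalities together force equality).

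Consequently, $\|\mathcal{N}\|=\|\widehat{\mathcal{N}}\|$ iff $\|\mathcal{N}\|\sqsubseteq \|\widehat{\mathcal{N}}\|$ and $\|\widehat{\mathcal{N}}\|\sqsubseteq \|\mathcal{N}\|$. Thus, the decision procedure consists simply of running the algorithm underlying Theorem \ref{quant incl} twice, once for each direction, and answering ``yes'' iff both calls return ``yes''. Since each call terminates and returns a correct decision, the composed procedure is itself a decision algorithm for the equivalence problem.

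No additional obstacle arises: the proof is a direct corollary of Theorem \ref{quant incl} together with the antisymmetry of the total order induced by idempotency on $K_{1}, K_{2}, K_{3}$, which guarantees that pointwise equality of two infinitary series over $A$ is captured exactly by the conjunction of the two pointwise $\sqsubseteq$-inequalities.
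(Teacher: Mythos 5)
Your reduction of equivalence to two inclusion checks via antisymmetry of the natural order is exactly the argument the paper intends (the theorem is stated as an immediate consequence of Theorem \ref{quant incl}, and the algorithms in Section \ref{Alg_for_ws} indeed test both directions of inclusion). The proposal is correct and matches the paper's approach; note only that antisymmetry alone suffices here, so the appeal to totality of the order is not needed.
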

	
	Let now $AP$ be a finite set of atomic propositions. The following Theorem
	is concluded by Corollary \ref{Corollary}, and Theorem \ref{Theorem quant equal}.
	
	\begin{theorem}
		\begin{enumerate}[label=(\roman*)]
			\item  Let $k\in K_{1}\backslash \left\{ -\infty ,\infty \right\} .$ If $		\varphi \in k$-$t$-$RULTL\left( K_{1},AP\right) ,$ and $\mathcal{M}$ is a wBa over $K_{1}$ and $\mathcal{P}\left( AP\right) ,$ it is decidable whether $\left\Vert \varphi \right\Vert =\left\Vert \mathcal{M}\right\Vert .$
		
		 	\item Let $k\in K_{2}\backslash \left\{ -\infty ,\infty \right\} .$ If $		\varphi \in k$-$\vee $-$t$-$RULTL\left( K_{2},AP\right) ,$ and $\mathcal{M}$ 		is a wBa over $K_{2}$ and $\mathcal{P}\left( AP\right) ,$ it is decidable 		whether $\left\Vert \varphi \right\Vert =\left\Vert \mathcal{M}\right\Vert .$
		
			\item Let $k\in K_{3}\backslash \left\{ -\infty ,\infty \right\} .$ If $\varphi \in k$-$\vee $-$t$-$RULTL\left( K_{3},AP\right) ,$ and $\mathcal{M}$ is a wBa over $K_{3}$ and $\mathcal{P}\left( AP\right) ,$ it is decidable	whether $\left\Vert \varphi \right\Vert =\left\Vert \mathcal{M}\right\Vert .$
	\end{enumerate}
	\end{theorem}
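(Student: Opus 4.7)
The plan is to reduce each part of the theorem to a combination of two results already established in the paper, namely the effective construction of a weighted B\"uchi automaton equivalent to a $k$-safe weighted $LTL$ formula, and the decidability of the quantitative language-equivalence problem for wBa over the structures under consideration.

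First, for each of the three parts, I would use the appropriate item of Corollary~\ref{Corollary} to construct, from the given formula $\varphi$, a wBa $\mathcal{M}_\varphi$ over $\mathcal{P}(AP)$ and $K_i$ (with $i\in\{1,2,3\}$) such that $\|\mathcal{M}_\varphi\|=\|\varphi\|$. For part (i), since $K_1$ is an idempotent ordered GP-$\omega$-valuation monoid and $\varphi\in k$-$t$-$RULTL(K_1,AP)$, Corollary~\ref{Corollary}(i) applies. For parts (ii) and (iii), since $K_2$ and $K_3$ are idempotent ordered TGP-$\omega$-valuation monoids and $\varphi\in k$-$\vee$-$t$-$RULTL(K_i,AP)$, Corollary~\ref{Corollary}(ii) applies. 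In all three cases, the construction is effective, so $\mathcal{M}_\varphi$ is produced algorithmically from~$\varphi$.

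Next, I would invoke Theorem~\ref{Theorem quant equal}, which states that the quantitative language-equivalence problem is decidable for wBa over $K_1$, $K_2$, or $K_3$ and a finite alphabet. Applying this decision procedure to the pair $(\mathcal{M}_\varphi,\mathcal{M})$, which are both wBa over the same weight structure and the same alphabet $\mathcal{P}(AP)$, yields a decision on whether $\|\mathcal{M}_\varphi\|=\|\mathcal{M}\|$, i.e., whether $\|\varphi\|=\|\mathcal{M}\|$. Combining the two steps gives the required algorithm for each of the three parts.

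Since both Corollary~\ref{Corollary} and Theorem~\ref{Theorem quant equal} are already available from earlier sections, there is no genuine obstacle: the proof is a direct composition of an effective translation step with a decidability step. The only minor point to verify is that the translation from $\varphi$ to $\mathcal{M}_\varphi$ and the subsequent equivalence test are compatible, which holds because they operate over the same underlying weight monoid and alphabet, so the decision procedure of Theorem~\ref{Theorem quant equal} applies to $\mathcal{M}_\varphi$ and $\mathcal{M}$ without further adjustment.
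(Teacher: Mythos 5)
Your proposal is correct and follows exactly the paper's own argument: the paper derives this theorem by combining Corollary~\ref{Corollary} (effective translation of the formula to an equivalent wBa) with Theorem~\ref{Theorem quant equal} (decidability of quantitative language equivalence for wBa over $K_{1}$, $K_{2}$, $K_{3}$). Your identification of which item of Corollary~\ref{Corollary} applies to each part matches the paper's intent.
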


	\noindent Let now $K$ be a totally ordered set with a minimum, and a maximum element, $\mathbf{0},\mathbf{1}$ respectively, and $M_{1}, M_{2}, M_{3}$ be the TGP-$\omega$-valuation monoids as defined in Remark $\ref{Remark tos} $. Following the arguments of the proofs of Lemmas $\ref{Lemma 1}$, $\ref{Lemma 2}$, $\ref{Lemma 2 inf}$, and Theorems $\ref{quant incl}$, $\ref{Theorem quant equal}$ we obtain corresponding results for $M_{1}, M_{2}, M_{3}$. By these and Corollary \ref{Corollary} we obtain the following corollary.
	
	\begin{corollary}
		\begin{enumerate}[label=(\roman*)]
		 	\item Let $k\in M_{1}\backslash \left\{ \mathbf{0} ,\mathbf{1} \right\} .$ If $		\varphi \in k$-$t$-$RULTL\left( M_{1},AP\right) ,$ and $\mathcal{M}$ is a wBa over $M_{1}$ and $\mathcal{P}\left( AP\right) ,$ it is decidable whether 		$\left\Vert \varphi \right\Vert =\left\Vert \mathcal{M}\right\Vert .$
		
			\item Let $k\in M_{2}\backslash \left\{ \mathbf{0} ,\mathbf{1}\right\} .$ If $		\varphi \in k$-$\vee $-$t$-$RULTL\left( M_{2},AP\right) ,$ and $\mathcal{M}$		is a wBa over $M_{2}$ and $\mathcal{P}\left( AP\right) ,$ it is decidable 		whether $\left\Vert \varphi \right\Vert =\left\Vert \mathcal{M}\right\Vert .$
		
			\item Let $k\in M_{3}\backslash \left\{ \mathbf{0} ,\mathbf{1}\right\} .$ If $		\varphi \in k$-$\vee $-$t$-$RULTL\left( M_{3},AP\right) ,$ and $\mathcal{M}$		is a wBa over $M_{3}$ and $\mathcal{P}\left( AP\right) ,$ it is decidable		whether $\left\Vert \varphi \right\Vert =\left\Vert \mathcal{M}\right\Vert .$
		\end{enumerate}
	\end{corollary}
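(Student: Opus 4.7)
The plan is to reduce the problem, in each of the three cases, to the quantitative language-equivalence problem for weighted B\"uchi automata over the corresponding monoid $M_{i}$, and then to adapt the decidability argument developed for $K_{1},K_{2},K_{3}$. First, by Corollary \ref{Corollary}(i) for $M_{1}$ and Corollary~\ref{Corollary}(ii) for $M_{2}$ and $M_{3}$, I can effectively construct, from the given formula $\varphi$, a wBa $\mathcal{A}_{\varphi}$ over $\mathcal{P}(AP)$ and $M_{i}$ with $\lVert \mathcal{A}_{\varphi}\rVert=\lVert\varphi\rVert$. Deciding $\lVert\varphi\rVert=\lVert\mathcal{M}\rVert$ is therefore equivalent to deciding $\lVert\mathcal{A}_{\varphi}\rVert=\lVert\mathcal{M}\rVert$.

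To settle the latter, the plan is to transfer the chain of arguments \ref{Lemma 1}--\ref{Theorem quant equal} to the structures $M_{1},M_{2},M_{3}$. For $M_{3}$, whose $\omega$-valuation function $\sup_{\mathbf{0}}$ plays the same role as $\sup_{-\infty}$ for $K_{3}$, I will repeat the construction of Lemma~\ref{Lemma 1} verbatim, replacing $\infty$ by $\mathbf{1}$ and $-\infty$ by $\mathbf{0}$; the argument uses only the definition of $\sup_{\mathbf{0}}$, the total order on $M_{3}$, and the finiteness of $C=\mathrm{Im}(wt)\setminus\{\mathbf{1}\}$, all of which remain available, so an equivalent wBa over $M_{2}$ will be produced. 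For $M_{2}$ (and the $M_{3}$ case after this reduction), I will mimic the proof of Lemma~\ref{Lemma 2}: given a normalized wBa $\mathcal{N}$ over $M_{2}$ and a threshold $v\in M_{2}$, I will build the Muller automaton $\mathcal{M}^{\geq v}$ exactly as in Lemma~\ref{Lemma 2}, converting it to a B\"uchi automaton by Theorem~\ref{B-M-R}; the only ingredients the construction uses are the total ordering and the special role of the maximum element $\mathbf{1}$, both of which are guaranteed by Remark~\ref{Remark tos}. The analogous transfer of Lemma~\ref{Lemma 2 inf} to $M_{1}$ (using $\mathrm{liminf}_{\leq}$ in place of $\mathrm{liminf}$) produces a Rabin automaton recognizing $\{w\in \mathcal{P}(AP)^{\omega}\mid(\lVert \mathcal{N}\rVert,w)\geq v\}$, which again is translated to a B\"uchi automaton by Theorem~\ref{B-M-R}.

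With these per-threshold constructions in hand, I will adapt Theorem~\ref{quant incl}. Given two wBa $\mathcal{A}_{\varphi}$ and $\mathcal{M}$ over $M_{i}$ and $\mathcal{P}(AP)$, the set $C=\mathrm{Im}(wt_{\mathcal{A}_{\varphi}})\cup\mathrm{Im}(wt_{\mathcal{M}})$ is finite and computable (the weight functions have finite domains), and every value $(\lVert\mathcal{A}_{\varphi}\rVert,w)$ and $(\lVert\mathcal{M}\rVert,w)$ lies in $C\cup\{\mathbf{0},\mathbf{1}\}$; hence
\begin{equation*}
\lVert\mathcal{A}_{\varphi}\rVert=\lVert\mathcal{M}\rVert\quad\Longleftrightarrow\quad \lVert\mathcal{A}_{\varphi}^{\geq v}\rVert=\lVert\mathcal{M}^{\geq v}\rVert\text{ for every }v\in C\cup\{\mathbf{0},\mathbf{1}\}.
\end{equation*}
Each of these is an ordinary equivalence of B\"uchi-recognizable languages, which is decidable by the classical theory of $\omega$-automata referenced via \cite{Pe-In}. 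Combining the effective translation of $\varphi$ to $\mathcal{A}_{\varphi}$ with this decision procedure yields the three desired decidability statements.

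The main obstacle in carrying out this plan is verifying that the Muller/Rabin construction of Lemmas~\ref{Lemma 2} and \ref{Lemma 2 inf} genuinely depends only on the order-theoretic features shared by $K_{1},K_{2},K_{3}$ and $M_{1},M_{2},M_{3}$, rather than on any arithmetic specific to $\mathbb{Q}\cup\{\infty,-\infty\}$; a careful inspection of the acceptance conditions, which only compare weights to the threshold $v$ and distinguish the maximum element, shows that the argument is indeed purely order-theoretic and that the finiteness of the weight image is enough to enumerate the thresholds to be tested.
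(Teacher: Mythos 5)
Your proposal is correct and follows essentially the same route as the paper: the paper likewise obtains this corollary by invoking Corollary \ref{Corollary} for the effective translation of the formula to a wBa and by transferring the constructions of Lemmas \ref{Lemma 1}, \ref{Lemma 2}, \ref{Lemma 2 inf} and Theorems \ref{quant incl}, \ref{Theorem quant equal} to $M_{1},M_{2},M_{3}$, noting that those arguments depend only on the total order, the roles of $\mathbf{0}$ and $\mathbf{1}$, and the finiteness of the weight images. Your explicit justification that the constructions are purely order-theoretic, and your symmetric phrasing of the threshold reduction over $C\cup\{\mathbf{0},\mathbf{1}\}$, merely spell out details the paper leaves implicit.
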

	
	We are ready now to present the following algorithms. These can be adopted in a straightforward way for $ M_{2}, M_{3}, M_{1}$ respectively.
	
	\begin{algorithm}[ht!]
		\caption{Decision procedure for $K_{2}$}
		\KwIn{wBa $\mathcal{M=}\left( Q,wt_{%
			\mathcal{M}},q_{0},F\right) $ and $k$-safe $\vee$-totally restricted $U$-nesting weighted $LTL$ formula $\varphi$}
		\KwOut{"yes" if $\left\Vert \mathcal{M}\right\Vert =\left\Vert
			\varphi \right\Vert ,$ otherwise "no"}
		
		Determine $\widetilde{\varphi } \in $$\vee$-$t$-$RULTL\left( K_{2},AP\right) $ with $\left\Vert \varphi \right\Vert =\left\Vert \widetilde{%
			\varphi }\right\Vert $
		
		Construct a wBa $\mathcal{A}_{\widetilde{\varphi }}=\left( S,wt_{\mathcal{A}%
			\widetilde{_{\varphi }}},p_{0},P\right) $ such that $\left\Vert \mathcal{A}_{%
			\widetilde{\varphi }}\right\Vert =\left\Vert \varphi \right\Vert $
		
		Compute $Im\left( wt_{\mathcal{A}\widetilde{_{\varphi }}}\right) $
		
		Compute $Im\left( wt_{\mathcal{M}}\right) $

			For every $v\in Im\left( wt_{\mathcal{A}\widetilde{_{\widetilde{\varphi }}}%
		}\right) $
		
		\ \ \ \ \ construct the Ba $\mathcal{A}_{\widetilde{\varphi }}^{\geq v}$
		
		$\ \ \ \ \ $construct the Ba $\mathcal{M}^{\geq v}$
		
		\ \ \ If $\left\Vert \mathcal{A}_{\widetilde{\varphi }}^{\geq v}\right\Vert
		\nsubseteq \left\Vert \mathcal{M}^{\geq v}\right\Vert $
		
		\ \ \ \ \ \ \ \ \ return "no"
		
		end for
		\end{algorithm}

\begin{algorithm}[ht]
		For every $v\in Im\left( wt_{\mathcal{M}}\right) $
		
		\ \ \ \ \ construct the Ba $\mathcal{A}_{\widetilde{\varphi }}^{\geq v}$
		
		$\ \ \ \ \ $construct the Ba $\mathcal{M}^{\geq v}$
		
		\ \ \ If $\left\Vert \mathcal{M}^{\geq v}\right\Vert \nsubseteq \left\Vert
		\mathcal{A}_{\widetilde{\varphi }}^{\geq v}\right\Vert $
		
		\ \ \ \ \ \ \ \ \ return "no"
		
		end for
		
		return "yes"

	\end{algorithm}

	\begin{algorithm}[h!]
	\caption{Decision procedure for $K_{3}$}
	\KwIn{wBa $\mathcal{M=}\left( Q,wt_{%
			\mathcal{M}},q_{0},F\right) $ and $k$-safe $\vee $-totally\ restricted $U$-nesting weighted $LTL$-formula $\varphi $}
	\KwOut{"yes" if $\left\Vert \mathcal{M}\right\Vert =\left\Vert
		\varphi \right\Vert ,$ otherwise "no"}
	
	Determine $\widetilde{%
		\varphi } \in $ $\vee$-$t$-$RULTL\left( K_{3},AP\right) $ with $\left\Vert \varphi \right\Vert =\left\Vert \widetilde{%
		\varphi }\right\Vert $
	
	Construct a wBa $\mathcal{A}_{\widetilde{\varphi }}=\left( S,wt_{\mathcal{A}%
		\widetilde{_{\varphi }}},p_{0},P\right) $ over $K_{3}$ such that $\left\Vert \mathcal{A}_{%
		\widetilde{\varphi }}\right\Vert =\left\Vert \varphi \right\Vert $
	
	Construct the wBa $\widehat{\mathcal{A}_{\widetilde{\varphi }}},$ $\widehat{%
		\mathcal{M}}$ over $K_{2}$ with
	
	$\left\Vert \widehat{\mathcal{A}_{\widetilde{\varphi }}}\right\Vert
	=\left\Vert \mathcal{A}_{\widetilde{\varphi }}\right\Vert ,\left\Vert
	\widehat{\mathcal{M}}\right\Vert =\left\Vert \mathcal{M}\right\Vert $
	
	Compute $Im\left( wt_{\widehat{\mathcal{A}_{\widetilde{\varphi }}}}\right) $
	
	Compute $Im\left( wt_{\widehat{\mathcal{M}}}\right) $
	
	For every $v\in Im\left( wt_{\widehat{\mathcal{A}_{\widetilde{\varphi }}}%
	}\right) $
	
	\ \ \ \ \ construct the Ba $\widehat{\mathcal{A}_{\widetilde{\varphi }}}%
	^{\geq v}$
	
	$\ \ \ \ \ $construct the Ba $\widehat{\mathcal{M}}^{\geq v}$
	
	\ \ \ If $\left\Vert \widehat{\mathcal{A}_{\widetilde{\varphi }}}^{\geq
		v}\right\Vert \nsubseteq \left\Vert \widehat{\mathcal{M}}^{\geq
		v}\right\Vert $
	
	\ \ \ \ \ \ \ \ \ return "no"
	
	end for
	
	For every $v\in Im\left( wt_{\widehat{\mathcal{M}}}\right) $
	
	\ \ \ \ \ construct the Ba $\widehat{\mathcal{A}_{\widetilde{\varphi }}}%
	^{\geq v}$
	
	$\ \ \ \ \ $construct the Ba $\widehat{\mathcal{M}}^{\geq v}$
	
	\ \ \ If $\left\Vert \widehat{\mathcal{M}}^{\geq v}\right\Vert \nsubseteq
	\left\Vert \widehat{\mathcal{A}_{\widetilde{\varphi }}}^{\geq v}\right\Vert $
	
	\ \ \ \ \ \ \ \ \ return "no"
	
	end for
	
	return "yes"
\end{algorithm}

		\begin{algorithm}[h!]
		\caption{Decision procedure for $K_{1}$}
		\KwIn{wBa $\mathcal{M=}\left( Q,wt_{%
				\mathcal{M}},q_{0},F\right) $ and $k$-safe totally restricted $U$-nesting weighted $LTL$-formula $\varphi $}
		\KwOut{"yes" if $\left\Vert \mathcal{M}\right\Vert =\left\Vert
			\varphi \right\Vert ,$ otherwise "no"}
		
		Determine $\widetilde{\varphi } \in$ $t$-$RULTL\left( K_{1},AP\right) $
		with $\left\Vert \varphi \right\Vert =\left\Vert \widetilde{\varphi }%
		\right\Vert $
		
		Construct a wBa $\mathcal{A}_{\widetilde{\varphi }}=\left( S,wt_{\mathcal{A}%
			\widetilde{_{\varphi }}},p_{0},P\right) $ such that $\left\Vert \mathcal{A}_{%
			\widetilde{\varphi }}\right\Vert =\left\Vert \varphi \right\Vert $
		
		Compute $Im\left( wt_{\mathcal{A}_{\widetilde{\varphi }}}\right) $
		
		Compute $Im\left( wt_{\mathcal{M}}\right) $
		
		For every $v\in Im\left( wt_{\mathcal{A}_{\widetilde{\varphi }}}\right) $
		
		\ \ \ \ \ construct the Ba $\mathcal{A}_{\widetilde{\varphi }}^{\geq v}$
		
		$\ \ \ \ \ $construct the Ba $\mathcal{M}^{\geq v}$
		
		\ \ \ If $\left\Vert \mathcal{A}_{\widetilde{\varphi }}^{\geq v}\right\Vert
		\nsubseteq \left\Vert \mathcal{M}^{\geq v}\right\Vert $
		
		\ \ \ \ \ \ \ \ \ return "no"
		
		end for
		
		For every $v\in Im\left( wt_{\mathcal{M}}\right) $
		
		\ \ \ \ \ construct the Ba $\mathcal{A}_{\widetilde{\varphi }}^{\geq v}$
		
		$\ \ \ \ \ $construct the Ba $\mathcal{M}^{\geq v}$
		
		\ \ \ If $\left\Vert \mathcal{M}^{\geq v}\right\Vert \nsubseteq \left\Vert
		\mathcal{A}_{\widetilde{\varphi }}^{\geq v}\right\Vert $
		
		\ \ \ \ \ \ \ \ \ return "no"
		
		end for
		
		return "yes"
	\end{algorithm}

	\section{Conclusion\label{Conclusion}}
	
	In this work we have introduced the notion of $k$-safe infinitary series
	over idempotent ordered TGP-$\omega $-valuation
	monoids $\left( K,+,\cdot ,Val^{\omega },\mathbf{0},\mathbf{1}\right) $ that
	satisfy specific properties, where $k\in K\backslash \left\{ \mathbf{0},%
	\mathbf{1}\right\} $. For every $k\in K\backslash \left\{ \mathbf{0},\mathbf{%
		1}\right\} ,$ we define two syntactic fragments of the weighted $LTL$, the fragment of $k$-safe
	totally restricted $U$-nesting weighted $LTL$ formulas, and the fragment of $%
	k$-safe $\vee $-totally restricted $U$-nesting weighted $LTL$ formulas (the later is a subclass of the former) with
	the property that the semantics of the formulas in these fragments are $k$%
	-safe infinitary series. Whether a $k$-safe infinitary series over a TGP-$\omega$-valuation monoid $K$ definable by the weighted $LTL$ over~$K$ is also definable by a formula in these fragments remains open. Moreover, for every formula in these fragments, we
	can effectively construct a wBa over $K$ recognizing its semantics. We
	present an example of how $k$-safe weighted $LTL$ formulas can be used to
	describe quantitative properties of a weighted transition system, and how we
	can verify that the system has the desired properties by translating it to a
	wBa that simulates its runs, and proving that the behavior of the automaton
	coincides with the semantics of the formula. Finally, for two specific
	idempotent ordered TGP-$\omega $-valuation monoids
	(resp. a specific idempotent ordered GP-$\omega $-valuation
	monoid), we present decision procedures that given a wBa and a $k$-safe $\vee$-totally restricted $U$-nesting weighted $%
	LTL $ formula (resp. $k$-safe totally restricted $U$-nesting weighted $LTL$
	formula), decide if the semantics of the formula coincides with the behavior
	of the automaton. These results are generalized for specific families of idempotent ordered TGP-$\omega$-valuation monoids, and idempotent ordered GP-$\omega$-valuation monoids.
	
	The study of the complexity of the proposed algorithms, and
	the study of decidability questions related to the
	formulas in the proposed fragments is a work in progress. Another possible
	road is the study of the use of $k$-safe weighted $LTL$ formulas introduced,
	in the definition of algorithms that given a weighted transition system, a
	boolean safety formula, and an $\omega $-valuation function, decide whether
	the words defined by runs of the weighted transition system satisfy the
	specification, and at the same time corresponding infinite sequences of
	weights are valuated through the $\omega $-valuation function to a price
	above a given threshold. A natural theoretical next step is the definition
	and study of other kinds of quantitative properties of infinitary series
	over TGP-$\omega $-valuation monoids, like weighted liveness, in the framework of the weighted $LTL$ introduced in this work.
	The use of temporal logic and model checking techniques in the description
	and verification of properties of biological systems has been proposed, and
	is being studied in the literature~\cite{Fi-Pi}. We see the use of the weighted $LTL$ and of $k
	$-safe weighted $LTL$ formulas in the description, and verification of
	quantitative properties of such systems as a challenging perspective.

		\newpage
		\section*{Appendix}\label{appendix}
		
		\begin{proof}[Proof of Example \protect\ref{Example 2}]
			
			We will prove first that $K_{2}$ is an idempotent ordered TGP-$%
			\omega $-valuation monoid that satisfy Properties \ref{Property 1}, \ref{Property 3}, \ref{Property 5}, and \ref{Property 6} . Property \ref{Property 3} follows directly from
			the definition of $K_{2}$. First, we prove Property \ref{Property Dist} for TGP-$\omega $-valuation monoids.
			
			We consider $L\subseteq _{fin}K_{2},$ finite index sets $I_{j}$($j\geq 0$), and $k_{i_{j}}\in
			L\left( i_{j}\in I_{j}\right) $ such that for all $j\geq 0,$ it holds $%
			k_{i_{j}}\in L\backslash \left\{ \infty,-\infty \right\} $ for all $i_{j}\in
			I_{j},$ or $k_{i_{j}}\in \left\{ \infty,-\infty\right\} $ for all $i_{j}\in
			I_{j}$. We will prove that
			\begin{equation*}
				\text{limsup}\left( \underset{i_{j}\in I_{j}}{\sup }k_{i_{j}}\right) _{j\in
					\mathbb{N}
				}=\underset{\left( i_{j}\right) _{j}\in I_{0}\times I_{1}\times \ldots }{%
					\sup }\left( \text{limsup}\left( k_{i_{j}}\right) _{j\in
					\mathbb{N}
				}\right) .
			\end{equation*}%
			We set $A=$limsup$\left( \underset{i_{j}\in I_{j}}{\sup }k_{i_{j}}\right)
			_{j\in
				\mathbb{N}
			},$ and $B=\underset{\left( i_{j}\right) _{j}\in I_{0}\times I_{1}\times
				\ldots }{\sup }\left( \text{limsup}\left( k_{i_{j}}\right) _{j\in
				\mathbb{N}
			}\right) .$
			
			Assume that there exists an $l\geq 0,$ such that $k_{i_{l}}=-\infty $ for all
			$i_{l}\in I_{l}$. Then, for all $\left( i_{j}\right) _{j}\in I_{0}\times
			I_{1}\times \ldots ,$ limsup$\left( k_{i_{j}}\right) _{j\in
				\mathbb{N}
			}=-\infty ,$ i.e., $B=-\infty .$ Moreover, $\underset{i_{l}\in I_{l}}{\sup }%
			k_{i_{l}}=-\infty ,$ and thus $A=-\infty $ as desired.
			\\Otherwise, we point out the following cases:
			
			\begin{enumerate}[label=\textnormal{(\Roman*)}]
			\item Assume that for all $j\geq 0$ there exists $i_{j}\in I_{j}$ such that $k_{i_{j}}=\infty .$ Then,
			there exist $\left( i_{j}\right) _{j}\in I_{0}\times I_{1}\times \ldots $
			such that limsup$\left( k_{i_{j}}\right) _{j\in
				\mathbb{N}
			}=\infty $ which implies that $B=\infty .$ In addition, we get that $%
			\underset{i_{j}\in I_{j}}{\sup }k_{i_{j}}=\infty $ for all $j\geq 0,$ i.e., $%
			A=\infty $ as well.
			
			\item Assume that there exists finitely many $j\geq 0,$ such that $%
			k_{i_{j}}\neq \infty $ for all $i_{j}\in I_{j},$ then $\underset{i_{j}\in
				I_{j}}{\sup }k_{i_{j}}\neq \infty $ for only a finite number of $j\geq 0,$
			which implies that
			
			\begin{equation*}
				A=\sup \left\{ \underset{i_{j}\in I_{j}}{\sup }k_{i_{j}}\mid j\geq 0\text{
					with}\underset{i_{j}\in I_{j}}{\sup }k_{i_{j}}\neq \infty \right\} .
			\end{equation*}
			
			Moreover,
			\\
			
			$B=\underset{k_{i_{j}}\neq -\infty ,j\geq 0}{\underset{\left( i_{j}\right)
					_{j}\in I_{0}\times I_{1}\times \ldots }{\sup }}\left( \text{limsup}\left(
			k_{i_{j}}\right) _{j\in
				\mathbb{N}
			}\right)\\ =\underset{k_{i_{j}}\neq -\infty ,j\geq 0}{\underset{\left(
					i_{j}\right) _{j}\in I_{0}\times I_{1}\times \ldots }{\sup }}\left( \sup
			\left\{ k_{i_{j}}\mid j\geq 0\text{ with }k_{i_{j}}\neq \infty \text{ }%
			\right\} \right) \newline
			=\sup \left\{ \underset{i_{j}\in I_{j}}{\sup }k_{i_{j}}\mid j\geq 0\text{
				with}\underset{i_{j}\in I_{j}}{\sup }k_{i_{j}}\neq \infty \right\} =A$ \\\\where
			the second equality holds since for all $j\geq 0,$ it holds $k_{i_{j}}\in
			L\backslash \left\{ \infty ,-\infty \right\} $ for all $i_{j}\in I_{j},$ or $%
			k_{i_{j}}\in \left\{ \infty ,-\infty \right\} $ for all $i_{j}\in I_{j}$.
			
			\item Assume that there exists infinitely many $j\geq 0$ such that $%
			k_{i_{j}}\neq \infty $ for all $i_{j}\in I_{j}.$ Since $I_{k}$ are finite
			for every $h\geq 0,$ it holds $\underset{i_{h}\in I_{h}}{\sup }\left(
			k_{i_{h}}\right) \in \left\{ k_{i_{h}}\mid i_{h}\in I_{h}\right\} $ for
			every $h\geq 0,$ hence there exist a sequence $\left( i_{j}\right) _{j}\in
			I_{0}\times I_{1}\times I_{2}\times \ldots $ such that $A=$limsup$\left(
			k_{i_{j}}\right) _{j\geq 0},$ and thus $A\leq B.$
			
			Let now $\left( i_{j}\right) _{j}\in I_{0}\times I_{1}\times I_{2}\times
			\ldots $,\footnote{Without any loss we assume that $k_{i_{j}}\neq -\infty$ for all $j\geq 0$.} then
			\begin{eqnarray*}
				\text{limsup}\left( k_{i_{j}}\right) _{j\in
					\mathbb{N}
				} &=&\underset{j\geq 0}{\text{inf}}\left( \sup \left\{ k_{i_{l}}\mid
				l\geq j,k_{i_{l}}\neq \infty \right\} \right) \\
				&\leq &\underset{j\geq 0}{\text{inf}}\left( \sup \left\{ \underset{i_{j}\in I_{j}}{\sup }%
				k_{i_{l}}\mid l\geq j, \underset{i_{l}\in I_{l}}{\sup }%
				k_{i_{l}}\neq \infty \right\} \right) \\
				&=&\text{limsup}\left( \underset{i_{j}\in I_{j}}{\sup }k_{i_{j}}\right)
				_{j\in
					\mathbb{N}
				}=A
			\end{eqnarray*}%
			where the inequality holds due to the following: for every $j\geq 0,$ it holds $k_{i_{j}}\in
			L\backslash \left\{ \infty ,-\infty \right\} $ for all $i_{j}\in I_{j},$ or $%
			k_{i_{j}}\in \left\{ \infty ,-\infty \right\} $ for all $i_{j}\in I_{j}$ , thus for
			every $j\geq 0,$ $k_{i_{j}}\neq \infty ,$ implies
			$\underset{i_{j}\in I_{j}}{\sup }k_{i_{j}}\neq \infty ,$ and vice-versa. So, $A=B$ as desired.
		\end{enumerate}
			
		\noindent 	We prove now Property \ref{Property 1}. We will prove that for all $k,k_{i}\left( i\geq
			1\right) \in \overline{%
				\mathbb{Q}
			},$%
			\begin{equation*}
				\text{limsup}\left( \infty \mathbf{,}k_{1},k_{2},\ldots \right) =\text{limsup%
				}\left( k_{1},k_{2},\ldots \right) .
			\end{equation*}
			First we assume that $\exists i\geq 1,$ such that $k_{i}=-\infty. $ Then,
			
			\begin{equation*}
				\text{ limsup}%
				\left( \infty \mathbf{,}k_{1},k_{2},\ldots \right) =\text{limsup}\left(
				k_{1},k_{2},\ldots \right) =-\infty .
			\end{equation*}
			Otherwise we point out the following
			cases.
			\begin{enumerate}[label=$-$]
			\item If $\forall i\geq 1,k_{i}=\infty ,$ limsup$\left( \infty \mathbf{,}%
			k_{1},k_{2},\ldots \right) =$limsup$\left( k_{1},k_{2},\ldots \right)
			=\infty .$
			\item If there exist infinitely many $i\geq 1$ such that $k_{i}\neq \infty ,$
			then \linebreak limsup$\left( \infty \mathbf{,}k_{1},\ldots \right) =\underset{%
				j\geq 1}{\text{inf}}\left( \sup \left\{ k_{i_{l}}\mid l\geq
			j,k_{i_{l}}\neq \infty \right\} \right) =$limsup$\left( k_{1},\ldots
			\right) $.
			\item Finally, if there exist finitely many $i\geq 1$ such that $k_{i}\neq
			\infty ,$ then limsup$\left( \infty \mathbf{,}k_{1},k_{2},\ldots \right)
			=\sup \left\{ k_{i_{j}}\mid j\geq 1,k_{i_{j}}\neq \infty \right\} =$limsup$%
			\left( k_{1},k_{2},\ldots \right) .$
		\end{enumerate}

			\bigskip We prove now Property \ref{Property 5}. By definition it holds $k\leq \infty $ for
			every $k\in \overline{%
				\mathbb{Q}
			}.$
			
			\bigskip
			We prove now Property \ref{Property 6}. Let $k,$ $k_{i}\in  \overline{%
				\mathbb{Q}
			}$ ($i\geq 0$) such that for every $ i\geq 0, k_{i}\geq
			k .$ We point out the cases $k=\infty ,k=-\infty $ or $%
			k\neq \infty ,-\infty .$ If $k=\infty ,$ then for all $i\geq 0,$ it holds $%
			k_{i}=\infty ,$ which implies that
			limsup$\left( k_{0},k_{1},k_{2},\ldots \right) =\infty \geq k$ as
			wanted. Assume now that $k=-\infty ,$ then it trivially holds limsup$\left(
			k_{0},k_{1},k_{2},\ldots \right) \geq -\infty .$ Finally, if $k\neq \infty
			,-\infty ,$ then $k_{i}\neq -\infty $ for all $i\geq 0,$ and thus limsup$%
			\left( k_{0},k_{1},k_{2},\ldots \right) =k_{i}\geq k$ for some $i\geq 0.$%
			\footnote{Observe that limsup is defined over $\left(\overline{\mathbb{Q}}_{fin}\right)^{\omega}$, i.e., for some finite subset $B$ of $\overline{\mathbb{Q}},$ $k_{i} \in B,$ for every $i\geq 0$.}
			
		\end{proof}
		
	\bigskip
		
		\begin{proof} [Proof of Example \protect\ref{Example 3}]
			\bigskip
			
			We will prove now that $K_{3}$ is an idempotent TGP-$%
			\omega $-valuation monoid that satisfy Properties \ref{Property 1}, \ref{Property 3}, \ref{Property 5}, and \ref{Property 6}. Property \ref{Property 3} follows directly from the definition of $K_{3}$.
			
			We will first prove Property \ref{Property Dist} for TGP-$\omega $%
			-valuation monoids. Let $L\subseteq
			_{fin}\overline{%
				\mathbb{Q}
			},$ finite index sets $I_{j},$ and $k_{i_{j}}\in L\left( i_{j}\in
			I_{j}\right) $ such that for all $j\geq 0,$ it holds $k_{i_{j}}\in
			L\backslash \left\{ \infty ,\mathbf{-\infty }\right\} $ for all $i_{j}\in
			I_{j},$ or $k_{i_{j}}\in \left\{ \mathbf{\infty },\mathbf{-\infty }\right\} $
			for all $i_{j}\in I_{j}$. We will prove that
			\begin{equation*}
				\sup_{-\infty }\left( \underset{i_{j}\in I_{j}}{\sup }k_{i_{j}}\right) =%
				\underset{\left( i_{j}\right) _{j}\in I_{0}\times I_{1}\times \ldots }{\sup }%
				\left( \sup_{-\infty }\left( k_{i_{j}}\right) _{j\in
					\mathbb{N}
				}\right) .
			\end{equation*}%
			We set $A=\sup_{-\infty }\left( \underset{i_{j}\in I_{j}}{\sup }%
			k_{i_{j}}\right) ,$ and $B=\underset{\left( i_{j}\right) _{j}\in I_{0}\times
				I_{1}\times \ldots }{\sup }\left( \sup_{-\infty }\left( k_{i_{j}}\right)
			_{j\in
				\mathbb{N}
			}\right) .$ We will prove that $A=B.$
			
			- Assume that there exists a $j\geq 0,$ such that $k_{i_{j}}=-\infty $ for
			all $i_{j}\in I_{j}.$ Then, $A=B=-\infty .$
			
			- Assume that for all $j\geq 0,$ $\exists k_{i_{j}}\in I_{j\text{ }}$ with $%
			k_{i_{j}}\neq -\infty ,$ and for all $j\geq 0$ there exists an $i_{j}\in
			I_{j}$ such that $k_{i_{j}}=\infty .$ Then, using the same arguments as in
			(I) in the previous proof we get that $A=B=\infty .$
			
			- Finally we assume that for all $j\geq 0,$ $\exists k_{i_{j}}\in I_{j\text{
			}}$ with $k_{i_{j}}\neq -\infty ,$ and there exist at least one $j\geq 0$
			with $k_{i_{j}}\neq \infty $ for all $i_{j}\in I_{j}.$ Then, $%
			\sup_{i_{j}}\left( k_{i_{j}}\right) \neq -\infty $ for all $j\geq 0,$ and
			since there exists $j\geq 0$ with $\sup_{i_{j}\in I_{j}}\left(
			k_{i_{j}}\right) \neq \infty $ we~have
			
			\begin{equation*}
				A=\sup \left\{ \underset{i_{j}\in
					I_{j}}{\sup }k_{i_{j}}\mid j\geq 0,\underset{i_{j}\in I_{j}}{\sup }%
				k_{i_{j}}\neq \infty \right\} ,
			\end{equation*}
			and
			\begin{equation*}
				B=\underset{\left( i_{j}\right)
					_{j}\in I_{0}\times I_{1}\times \ldots }{\sup }\left( \sup \left\{
				k_{i_{j}}\mid j\geq 0,k_{i_{j}}\neq \infty \right\} \right) .
			\end{equation*}
			With arguments similar with the ones used in case (III) in the previous proof we get that it holds $%
			A=B.$\\
			
			Property \ref{Property 1} is proved with similar arguments used in the proof of the corresponding property for Example \ref{Example 2}.
			We prove now Property \ref{Property 5}. By definition it holds $k\leq \infty$ for
			every $k\in \overline{%
				\mathbb{Q}
			}.$
			
			Finally, we prove Property \ref{Property 6}. Let $k,$ $k_{i}\in \overline{%
				\mathbb{Q}
			} \left( i\geq 0\right)$ such that for every $i\geq 0 $ $k_{i}\geq
			k .$ We point out the cases $k=\infty ,k=-\infty $ or $%
			k\neq \infty ,-\infty .$ If $k=\infty ,$ then for all $i\geq 0,$ it holds $%
			k_{i}=\infty ,$ which implies that $\sup_{-\infty }\left(
			k_{0},k_{1},k_{2},\ldots \right) =\infty \geq k$ as desired. Assume now that $%
			k=-\infty ,$ then it trivially holds $\sup_{-\infty }\left(
			k_{0},k_{1},k_{2},\ldots \right) \geq -\infty .$ Finally, if $k\neq \infty
			,-\infty ,$ then $k_{i}\neq -\infty $ for all $i\geq 0,$ and thus $%
			\sup_{-\infty }\left( k_{0},k_{1},k_{2},\ldots \right) =k_{i}\geq k$ for
			some $i\geq 0.$
		\end{proof}
		
		\bigskip
		\bigskip
		
		\begin{proof}[Proof of Theorem \protect\ref{From formulas to automata-tr copy(1)}]
			We follow the inductive proof of \cite[Theorem~1]{Ma-At}, and then conclude our claim by the generalization of \cite[Lemma 5]{Ma-At} for idempotent ordered TGP-$\omega$-valuation monoids. We only present
			some clarifications on why the arguments in the inductive proof of \cite[Theorem~1]{Ma-At} can be adopted.

			In the proof of \cite[Lemma 8, page 253]{Ma-At}, we justify the
			equality
			\begin{eqnarray*}
				&&Val^{\omega }\left( k_{1}+k_{2},wt\left( B_{\varphi _{re}^{1}},\pi
				_{1},B_{\varphi _{2}}\right) ,wt\left( B_{\varphi _{re}^{2}},\pi
				_{2},B_{\varphi ^{2}}\right) ,\ldots \right)  \\
				&=&Val^{\omega }\left( k_{1},wt\left( B_{\varphi _{re}^{1}},\pi
				_{1},B_{\varphi _{2}}\right) ,wt\left( B_{\varphi _{re}^{2}},\pi
				_{2},B_{\varphi ^{2}}\right) ,\ldots \right)  \\
				&&+Val^{\omega }\left( k_{2},wt\left( B_{\varphi _{re}^{1}},\pi
				_{1},B_{\varphi _{2}}\right) ,wt\left( B_{\varphi _{re}^{2}},\pi
				_{2},B_{\varphi ^{2}}\right) ,\ldots \right)
			\end{eqnarray*}%
			as follows: Due to the restriction on the fragment, $k_{i}\neq \mathbf{0}$
			implies $k_{i}\in K\backslash \left\{ \mathbf{0,1}\right\} $ for $i\in
			\left\{ 1,2\right\} ,$ and thus it suffices to point out the following cases:
			
			\begin{case}
				If $k_{1}=\mathbf{0,}$ and $k_{2}\neq \mathbf{0,}$ then
				\begin{equation*}
					k_{1}+k_{2}=k_{2}\text{ and }Val^{\omega }\left( k_{1},wt\left( B_{\varphi
						_{re}^{1}},\pi _{1},B_{\varphi _{2}}\right) ,wt\left( B_{\varphi
						_{re}^{2}},\pi _{2},B_{\varphi ^{2}}\right) ,\ldots \right) =\mathbf{0,}
				\end{equation*}
				which implies
				\begin{eqnarray*}
					&&Val^{\omega }\left( k_{1}+k_{2},wt\left( B_{\varphi _{re}^{1}},\pi
					_{1},B_{\varphi _{2}}\right) ,wt\left( B_{\varphi _{re}^{2}},\pi
					_{2},B_{\varphi ^{2}}\right) ,\ldots \right)  \\
					&=&Val^{\omega }\left( k_{2},wt\left( B_{\varphi _{re}^{1}},\pi
					_{1},B_{\varphi _{2}}\right) ,wt\left( B_{\varphi _{re}^{2}},\pi
					_{2},B_{\varphi ^{2}}\right) ,\ldots \right)  \\
					&=&Val^{\omega }\left( k_{1},wt\left( B_{\varphi _{re}^{1}},\pi
					_{1},B_{\varphi _{2}}\right) ,wt\left( B_{\varphi _{re}^{2}},\pi
					_{2},B_{\varphi ^{2}}\right) ,\ldots \right)  \\
					&&+Val^{\omega }\left( k_{2},wt\left( B_{\varphi _{re}^{1}},\pi
					_{1},B_{\varphi _{2}}\right) ,wt\left( B_{\varphi _{re}^{2}},\pi
					_{2},B_{\varphi ^{2}}\right) ,\ldots \right)
				\end{eqnarray*}
			\end{case}
			
			\begin{case}
				If $k_{1}\neq \mathbf{0,}$ and $k_{2}=\mathbf{0,}$ we prove the equality
				with the same arguments as in the previous case.
			\end{case}
			
			\begin{case}
				If $k_{1}=k_{2}=\mathbf{0,}$ then both parts of the equality are equal to $%
				\mathbf{0.}$
			\end{case}
			
			\begin{case}
				If  $k_{1},k_{2}\in K\backslash \left\{ \mathbf{0,1}\right\} ,$ the equality
				is obtained by the distributivity of $Val^{\omega }$ over finite sums for
				TGP-$\omega $-valuation monoids (Property 1).
			\end{case}
			
			Moreover, in the proof of \cite[Lemma 8, page 254]{Ma-At}, we obtain
			the inequality
			\begin{equation*}
				weight_{\mathcal{A}_{\varphi}}\left(P_{w}\right)\geq weight_{\mathcal{A}_{\psi}}\left(P^{1}_{w}\right)
			\end{equation*}%
			by Lemma \ref{Valuation inequality} (of the current work). The use of Lemma \ref{Valuation inequality} is possible since the
			following hold: In the case $\psi^{1}\in next\left(M_{B^{\prime}_{\varphi},\xi}\right)$, due to the restrictions on the fragment it holds : $%
			wt_{1}\left( B_{\psi }^{\prime },\pi _{0},B_{\psi ^{1}}\right) \neq \mathbf{0%
			}$ implies $wt_{1}\left( B_{\psi }^{\prime },\pi _{0},B_{\psi ^{1}}\right)
			\in K\backslash \left\{ \mathbf{0,1}\right\} ,$ and  $v_{M_{B_{\varphi
						^{\prime }},\xi} }\left( \xi ^{1}\right) \neq \mathbf{0}$ implies $%
			v_{M_{B_{\varphi ^{\prime }},\xi} }\left( \xi ^{1}\right) \in K\backslash
			\left\{ \mathbf{0,1}\right\} .$ Also, since $K$ is totally ordered, it holds $%
			wt_{1}\left( B_{\psi }^{\prime },\pi _{0},B_{\psi ^{1}}\right)
			+v_{M_{B_{\varphi ^{\prime }},\xi }}\left( \xi ^{1}\right) \in \left\{
			wt_{1}\left( B_{\psi }^{\prime },\pi _{0},B_{\psi ^{1}}\right)
			,v_{M_{B_{\varphi ^{\prime }},\xi} }\left( \xi ^{1}\right) \right\} .$
			
			In the proofs of \cite[Lemmas 9 and 13]{Ma-At}, we use Lemma \ref{Valuation
				inequality} (of the current work), instead of of \cite[ Lemma 3ii]{Ma-At}.
			
			From the constructive proofs of \cite[Lemma 12]{Ma-At} (resp. \cite[Lemma 14]{Ma-At}),
			we can further derive that for $\psi \in bLTL\left( K,AP\right) ,$ and $\xi
			\in r$-$stLTL\left( K,AP\right) ,$ or $\xi =\lambda U\zeta ,$ or $\xi
			=\square \zeta $ with $\lambda ,\zeta \in r$-$stLTL\left( K,AP\right) ,$ it holds $%
			v_{B_{\varphi}}\left( \varphi ^{\prime }\right) =v_{B_{\psi}}\left( \psi^{\prime
			}\right) \cdot v_{B_{\xi}}\left( \xi^{\prime}\right) =\mathbf{1},$
			or ($v_{B_{\varphi}}\left( \varphi ^{\prime }\right) \neq \mathbf{1}$ and $v_{B_{\psi
			}}\left( \psi ^{\prime }\right) \cdot v_{B_{\xi }}\left( \xi
			^{\prime }\right) \neq \mathbf{1})$ (resp. for $\psi ,\xi \in r$-$stLTL\left(
			K,AP\right) $, we can further conclude that $v_{B_{\varphi }}\left( \varphi
			^{\prime }\right) =v_{B_{\psi }}\left( \psi ^{\prime }\right) \cdot \underset%
			{1\leq j\leq k}{\prod }v_{B_{\xi _{j}}}\left( \xi _{j}^{\prime }\right) =%
			\mathbf{1},$ or ($v_{B_{\varphi }}\left( \varphi ^{\prime }\right) \neq
			\mathbf{1}$ and $v_{B_{\psi }}\left( \psi ^{\prime }\right) \cdot \underset{%
				1\leq j\leq k}{\prod }v_{B_{\xi _{j}}}\left( \xi _{j}^{\prime }\right) \neq
			\mathbf{1}$).
			
			In the proof of \cite[Lemma 16, page 271]{Ma-At}, the last equality
			is true because of the following: For every $i\geq 0$, it holds
			\begin{eqnarray*}
				&&\underset{k_{i}\in pri_{\mathcal{A}_{\xi }}\left( w_{\geq i}\right) }{%
					\underset{k_{j}\in pri_{\mathcal{A}_{\psi }}\left( w_{\geq j}\right) \left(
						0\leq j<i\right) }{\sum }}Val^{\omega }\left( k_{0},\ldots ,k_{i-1},k_{i},%
				\mathbf{1,1,1,\ldots }\right) \\
				&=&\underset{k_{i}\in pri_{\mathcal{A}_{\xi }}\left( w_{\geq i}\right)
					\backslash \left\{ \mathbf{0}\right\} }{\underset{k_{j}\in pri_{\mathcal{A}%
							_{\psi }}\left( w_{\geq j}\right) \backslash \left\{ \mathbf{0}\right\}
						\left( 0\leq j<i\right) }{\sum }}Val^{\omega }\left( k_{0},\ldots
				,k_{i-1},k_{i},\mathbf{1,1,1,\ldots }\right) \\
				&=&Val^{\omega }\left(
				\begin{array}{c}
					\underset{k_{0}\in pri_{\mathcal{A}_{\psi }}\left( w_{\geq 0}\right)
						\backslash \left\{ \mathbf{0}\right\} }{\sum }k_{0},\ldots ,\underset{%
						k_{i-1}\in pri_{\mathcal{A}_{\psi }}\left( w_{\geq i-1}\right) \backslash
						\left\{ \mathbf{0}\right\} }{\sum }k_{i-1}, \\
					\underset{k_{i}\in pri_{\mathcal{A}_{\xi }}\left( w_{\geq i}\right)
						\backslash \left\{ \mathbf{0}\right\} }{\sum }k_{i},\mathbf{1,1,1,\ldots }%
				\end{array}%
				\right) \\
				&=&Val^{\omega }\left(
				\begin{array}{c}
					\underset{k_{0}\in pri_{\mathcal{A}_{\psi }}\left( w_{\geq 0}\right) }{\sum }%
					k_{0},\ldots ,\underset{k_{i-1}\in pri_{\mathcal{A}_{\psi }}\left( w_{\geq
							i-1}\right) }{\sum }k_{i-1}, \\
					\underset{k_{i}\in pri_{\mathcal{A}_{\xi }}\left( w_{\geq i}\right) }{\sum }%
					k_{i},\mathbf{1,1,1,\ldots }%
				\end{array}%
				\right)
			\end{eqnarray*}%
			where the first equality is obtained by the fact that $Val_{\omega}\left(\left(d_{i}\right)_{i\geq 0}\right)=\mathbf{0}$ whenever $d_{i}=
			\mathbf{0}$ for some $i\geq 0$, and by the completeness axioms of the monoid. The second equality is now obtained by the distributivity of $%
			Val^{\omega }$ over finite sums for TGP-$\omega $-valuation monoids
			(Property 1). Observe that since $\psi ,\xi \in r$-$stLTL\left(
			K,AP\right) $ it holds: (i) $k_{j}\in pri_{\mathcal{A}_{\psi }}\left(
			w_{\geq j}\right) \backslash \left\{ \mathbf{0}\right\} $ implies $k_{j}\in
			K\backslash \left\{ \mathbf{0,1}\right\} $ for every $0\leq j<i,$ and (ii) $%
			k_{i}\in pri_{\mathcal{A}_{\xi }}\left( w_{\geq i}\right) \backslash \left\{
			\mathbf{0}\right\} $ implies $k_{i}\in K\backslash \left\{ \mathbf{0,1}%
			\right\} .$
			
			In the proof of \cite[Lemma 16, page 274]{Ma-At}, we use Lemma \ref{Valuation inequality} of the
			current work instead of  of \cite[Lemma 3]{Ma-At}. The use of Lemma \ref{Valuation inequality} is
			possible because of the following: By the fact that $\psi ,\xi \in r$-$%
			stLTL(K,AP)$, and since $K$ is totally ordered, we can further conclude that
			the sets $\left\{ wt_{1}\left( B_{\psi _{re}^{\left( m,0\right) }},\pi
			_{m},B_{\psi ^{\left( m,1\right) }}\right) ,wt\left( B_{\varphi
				_{re}^{m}},\pi _{m},B_{\varphi ^{m+1}}\right) \right\} $ (where $ 0\leq m<l$) and $\left\{
			wt_{2}\left( B_{\xi _{re}^{0}},\pi _{l},B_{\xi ^{1}}\right) ,wt\left(
			B_{\varphi _{re}^{l}},\pi _{l},B_{\varphi ^{l+1}}\right) \right\} $ are
			subsets of $K\backslash \left\{ \mathbf{0,1}\right\}$, and since $%
			\varphi _{re}^{k}$ is boolean for every $k>l$ , we also get that $wt\left(
			B_{\varphi _{re}^{k}},\pi _{k},B_{\varphi ^{k+1}}\right) =\mathbf{1.}$
			In the proof of \cite[Lemma 16, page 275]{Ma-At}, the second implication is justified with the same arguments we used for the justification of the equality in \cite[page 275]{Ma-At}.
			
			In the proof of \cite[Lemma 17, page 276]{Ma-At}, we use Lemma \ref{Valuation inequality}, instead of
			\cite[Lemma 3ii]{Ma-At}, since $\psi \in r$-$stLTL\left( K,AP\right)$ and $K$ is totally ordered, we have
			that $pri_{\mathcal{A}_{\psi }}\left(
			w_{\geq i}\right) $ is finite for every $i\geq 0,$ and $\underset{k_{i}\in pri_{\mathcal{A}_{\psi }}\left( w_{\geq
					i}\right) }{\sum }k_{i}\neq \mathbf{1}$ with $\underset{k_{i}\in pri_{\mathcal{A}_{\psi }}\left( w_{\geq
					i}\right) }{\sum }k_{i}=\underset{k_{i}\in pri_{\mathcal{A}_{\psi }}\left(
				w_{\geq i}\right) \backslash \left\{ \mathbf{0}\right\} }{\sum k_{i}}\in
			pri_{\mathcal{A}_{\psi }}\left( w_{\geq i}\right) \backslash \left\{ \mathbf{%
				0}\right\} $. We use similar arguments to justify the use of Lemma \ref{Valuation inequality} (of the current work) and of the Distributivity of $Val^{\omega}$ over finite sums for TGP-$\omega$-valuation monoids (Property 1) in the rest of the proof.
			
			For the Definitions of $v_{B_{\varphi }}\left( \varphi \right) ,$ and $pri_{%
				\mathcal{M}}\left( w\right) $, where ($\varphi \in LTL\left( K,AP\right) ,$ $%
			\mathcal{M}$ is an $\epsilon$-wgBa, $w\in \left( \mathcal{P}\left( AP\right) \right)
			^{\omega }$), we refer the reader to \cite{Ma-At}.
		\end{proof}
		\bigskip
		
		\begin{proof} [Proof of Lemma \protect\ref{Lemma 2}]
			
			Let $v\in Q,$ and $\mathcal{M}^{\geq v}=\left(
			Q^{\geq v},A,q_{0}^{\geq v},\Delta ^{\geq v},F^{\geq v}\right) $ where
			
			\begin{itemize}
				\item $Q^{\geq v}=\left( Q\times \left\{ 0,1,3,4\right\} \times \left\{
				B,C\right\} \right) \cup \left( \widehat{Q}\times \left\{ 2,5,6\right\}
				\times \left\{ B,C\right\} \right) $
				
				where $\widehat{Q}=\left\{ s_{q}\mid q\in Q\right\} $ is a set of copies of
				the elements of $Q$
				
				\item $q_{0}^{\geq v}=\left( q_{0},0,C\right) $
				
				\item For all $\left( \left( q,i,k\right) ,a,\left( \overline{q}%
				,j,l\right) \right) \in Q^{\geq v}\times A\times Q^{\geq v}$, it holds $%
				\left( \left( q,i,k\right) ,a,\left( \overline{q},j,l\right) \right)\\ \in
				\Delta ^{\geq v}$ iff
				
				\begin{enumerate}[label=\arabic*.]
				\item $i=0$ and $j=1,$ $k=l=C,$ and $q=q_{0},\overline{q}\in Q\backslash F$ and $%
				wt\left( q,a,\overline{q}\right) <v$ with $wt\left( q,a,\overline{q}\right)
				\neq -\infty ,$ or
				
				\item  $i=0$ and $j=1,$ $k=C,l=B$ and $q=q_{0},\overline{q}\in F$ and $wt\left( q,a,%
				\overline{q}\right) <v$ with $wt\left( q,a,\overline{q}\right) \neq -\infty
				, $ or
				
				\item  $i=0$ and $j=3,k=l=C,$ and $q=q_{0},\overline{q}\in Q\backslash F,wt\left(
				q,a,\overline{q}\right) \geq v,$ and $wt\left( q,a,\overline{q}\right) \neq
				\infty ,$ or
				
				\item  $i=0$ and $j=3,k=C,l=B$, and $q=q_{0},\overline{q}\in F,wt\left( q,a,%
				\overline{q}\right) \geq v,$ and $wt\left( q,a,\overline{q}\right) \neq
				\infty ,$ or
				
				\item  $i=0$ and $j=4,k=l=C,$ and $q=q_{0},\overline{q}\in Q\backslash F,wt\left(
				q,a,\overline{q}\right) =\infty ,$ or
				
				\item  $i=0$ and $j=4,k=C,l=B,$ and $q=q_{0},\overline{q}\in F,wt\left( q,a,%
				\overline{q}\right) =\infty ,$ or
				
				\item  $i=0$ and $j=2,$ $k=l=C,$ and $q=q_{0},\overline{q}=s_{\widetilde{q}}\in
				\widehat{Q},\widetilde{q}\notin F,wt\left( q_{0},a,\widetilde{q}\right)
				=\infty ,$ or
				
				\item  $i=0$ and $j=2,$ $k=C,l=B,$ and $q=q_{0},\overline{q}=s_{\widetilde{q}}\in
				\widehat{Q},\widetilde{q}\in F,wt\left( q_{0},a,\widetilde{q}\right) =\infty
				,$ or
				
				\item  $i=2$ and $j=2,$ $k\in \left\{ C,B\right\} ,l=B,$ and $q=s_{q^{\prime \prime
				}},\overline{q}=s_{\widetilde{q}},\widetilde{q}\in F,$ $wt\left( q^{\prime
					\prime },a,\widetilde{q}\right) =\infty ,$ or
				
				\item  $i=2$ and $j=2,$ $k\in \left\{ C,B\right\} ,l=C,$ and $q=s_{q^{\prime \prime
				}},\overline{q}=s_{\widetilde{q}},\widetilde{q}\notin F,$ $wt\left(
				q^{\prime \prime },a,\widetilde{q}\right) =\infty ,$ or
				
				\item  $i=1,3,4$ and $j=3,$ $k\in \left\{ C,B\right\} ,l=B,q\in Q,\overline{q}\in
				F, $ and \ $wt\left( q,a,\overline{q}\right) \geq v,$ and $wt\left( q,a,%
				\overline{q}\right) \neq \infty ,$ or
				
				\item  $i=1,3,4$ and $j=3,$ $k\in \left\{ C,B\right\} ,l=C,q\in Q,\overline{q}\in
				Q\backslash F,$ and \ $wt\left( q,a,\overline{q}\right) \geq v,$ and $%
				wt\left( q,a,\overline{q}\right) \neq \infty ,$ or
				
				\item  $i=1,3,4$ and $j=1,$ $k\in \left\{ C,B\right\} ,l=B,q\in Q,\overline{q}\in
				F, $ and $wt\left( q,a,\overline{q}\right) <v,wt\left( q,a,\overline{q}%
				\right) \neq -\infty ,$ or
				
				\item  $i=1,3,4$ and $j=1,$ $k\in \left\{ C,B\right\} ,l=C,q\in Q,\overline{q}\in
				Q\backslash F,$ and $wt\left( q,a,\overline{q}\right) <v,wt\left( q,a,%
				\overline{q}\right) \neq -\infty ,$ or
				
				\item  $i=1,3,4$ and $j=4,$ $k\in \left\{ C,B\right\} ,l=C,q\in Q,\overline{q}\in
				Q\backslash F,$ and $wt\left( q,a,\overline{q}\right) =\infty ,$ or
				
				\item  $i=1,3,4$ and $j=4,$ $k\in \left\{ C,B\right\} ,l=B,q\in Q,\overline{q}\in
				F, $ and $wt\left( q,a,\overline{q}\right) =\infty ,$ or
				
				\item  $i=3,$ and $j=5,$\ $k\in \left\{ C,B\right\} ,l=C,$ and $q\in Q,\overline{q}%
				=s_{\widetilde{q}}\in \widehat{Q},\widetilde{q}\notin F,$ $wt\left( q,a,%
				\widetilde{q}\right) \neq -\infty ,$ or
				
				\item  $i=3,$ and $j=5,$\ $k\in \left\{ C,B\right\} ,l=B,$ and $q\in Q,\overline{q}%
				=s_{\widetilde{q}}\in \widehat{Q},\widetilde{q}\in F,$ $wt\left( q,a,%
				\widetilde{q}\right) \neq -\infty ,$ or
				
				\item  $i=5,$ and $j=5,$ \ $k\in \left\{ C,B\right\} ,l=C,$ and $q,\overline{q}\in
				\widehat{Q},q=s_{q^{\prime \prime }},\overline{q}=s_{\widetilde{q}},%
				\widetilde{q}\notin F,wt\left( q^{\prime \prime },a,\widetilde{q}\right)
				\neq -\infty ,$ or
				
				\item  $i=5,$ and $j=5,$ \ $k\in \left\{ C,B\right\} ,l=B,$ and $q,\overline{q}\in
				\widehat{Q},q=s_{q^{\prime \prime }},\overline{q}=s_{\widetilde{q}},%
				\widetilde{q}\in F,wt\left( q^{\prime \prime },a,\widetilde{q}\right) \neq
				-\infty ,$ or
				
				\item  $i=5,$ and $j=6,$ \ $k\in \left\{ C,B\right\} ,l=C,$ and $q=s_{q^{\prime
						\prime }},\overline{q}=s_{\widetilde{q}},\widetilde{q}\notin F,wt\left(
				q^{\prime \prime },a,\widetilde{q}\right) =\infty ,$ or
				
				\item  $i=6,$ and $j=6,$ \ $k\in \left\{ C,B\right\} ,l=C,$ and $q=s_{q^{\prime
						\prime }},\widetilde{q}\notin F,\overline{q}=s_{\widetilde{q}},wt\left(
				q^{\prime \prime },a,\widetilde{q}\right) =\infty ,$ or
				
				\item  $i=5,$ and $j=6,$ \ $k\in \left\{ C,B\right\} ,l=B,$ and $q=s_{q^{\prime
						\prime }},\overline{q}=s_{\widetilde{q}},\widetilde{q}\in F,wt\left(
				q^{\prime \prime },a,\widetilde{q}\right) =\infty ,$ or
				
				\item  $i=6,$ and $j=6,$ \ $k\in \left\{ C,B\right\} ,l=B,$ and $q=s_{q^{\prime
						\prime }},\overline{q}=s_{\widetilde{q}},\widetilde{q}\in F,wt\left(
				q^{\prime \prime },a,\widetilde{q}\right) =\infty .$
			\end{enumerate}
				
				\item The set of final sets is defined as follows
				
				$F^{\geq v}=\left\{ \widehat{F}\mid \widehat{F}\subseteq F_{1}\cup F_{2}\cup
				F_{3},F_{1}\cap \widehat{F}\neq \emptyset ,F_{2}\cap \widehat{F}\neq
				\emptyset \right\} $
				
				$F_{1}=\left( Q\times \left\{ 3\right\} \times \left\{ C,B\right\} \right)
				\cup \left( \widehat{Q}\times \left\{ 2\right\} \times \left\{ C,B\right\}
				\right) \cup \left( \widehat{Q}\times \left\{ 6\right\} \times \left\{
				C,B\right\} \right) $
				
				$F_{2}=\left( Q\times \left\{ 0,1,3,4\right\} \times \left\{ B\right\}
				\right) \cup \left( \widehat{Q}\times \left\{ 2,5,6\right\} \times \left\{
				B\right\} \right) $
				
				$F_{3}=Q\times \left\{ 1,4\right\} \times \left\{ B,C\right\} $
			\end{itemize}
			
			We prove now that $\left\Vert \mathcal{M}^{\geq v}\right\Vert =\left\{ w\in
			A^{\omega }\mid \left( \left\Vert \mathcal{N}\right\Vert ,w\right) \geq
			v\right\} .$ We let first $w=a_{0}a_{1}a_{2}\ldots \in \left\{ w\in
			A^{\omega }\mid \left( \left\Vert \mathcal{N}\right\Vert ,w\right) \geq
			v\right\} .$ This implies that%
			\begin{equation*}
				\left( \left\Vert \mathcal{N}\right\Vert ,w\right) =\underset{P_{w}\in succ_{%
						\mathcal{N}}\left( w\right) }{\sup }\left( weight_{\mathcal{N}}\left(
				P_{w}\right) \right) =\infty
			\end{equation*}%
			or%
			\begin{equation*}
				\left( \left\Vert \mathcal{N}\right\Vert ,w\right) =\underset{P_{w}\in succ_{%
						\mathcal{N}}\left( w\right) }{\sup }\left( weight_{\mathcal{N}}\left(
				P_{w}\right) \right) =v^{\prime \prime }\in
				\mathbb{Q}
				\text{ with }v^{\prime \prime }\geq v.
			\end{equation*}%
			If the first case is true there exists a successful path $P_{w}=\left(
			q_{i},a_{i},q_{i+1}\right) _{i\geq 0}$ of $\mathcal{N}$ over $w$ such that $%
			wt\left( q_{i},a_{i},q_{i+1}\right) =\infty $ for all $i\geq 0.$ We consider
			the path%
			\begin{equation*}
				P_{w}^{\prime \prime }=\left( \left( q_{0},0,C\right) ,a_{0},\left(
				s_{q_{1}},2,D_{1}\right) \right) \left( \left( s_{q_{i}},2,D_{i}\right)
				,a_{i},\left( s_{q_{i+1}},2,D_{i+1}\right) \right) _{i\geq 1}
			\end{equation*}%
			\noindent where $D_{i}=B$ if $q_{i}\in F,$ and $D_{i}=C$ if $q_{i}\notin F$ \ for
			every $i\geq 1.$ (We shall call such a path $P_{w}^{\prime \prime }$ a path
			of form A). Clearly, $P_{w}^{\prime \prime }$ is a successful path of $%
			\mathcal{M}^{\geq v}$ over $w,$ i.e., $w\in \left\Vert \mathcal{M}^{\geq
				v}\right\Vert $ as desired. Now, if\ the second case is true there exist a
			successful path $P_{w}=\left( q_{i},a_{i},q_{i+1}\right) _{i\geq 0}$ of $%
			\mathcal{N}$ over $w$ such that $wt\left( q_{i},a_{i},q_{i+1}\right) \neq
			\infty $ for at least one $i\geq 0.$ Assume that there exist infinitely many
			$i\geq 0$ such that $wt\left( q_{i},a_{i},q_{i+1}\right) \neq \infty .$
			Then, there are also infinitely many $i\geq 0$ with $wt\left(
			q_{i},a_{i},q_{i+1}\right) =v^{\prime \prime }\geq v,$ and thus we can
			define a successful path $P_{w}^{\prime \prime }$ of $\mathcal{M}^{\geq v}$
			over $w$ of the form $\left( \left( q_{i},j_{i},D_{i}\right) ,a_{i},\left(
			q_{i+1},j_{i+1},D_{i+1}\right) \right) _{i\geq 0}$ with $\left(
			q_{0},j_{0},D_{0}\right) =\left( q_{0},0,C\right) ,$\ and for all $i\geq 1,$
			$j_{i}=1,$ if $j_{i-1}=0,1,3,4$, and $wt\left( q_{i-1},a_{i-1},q_{i}\right)
			<v,$ or $j_{i}=3$ if $j_{i-1}=0,1,3,4$, and $wt\left(
			q_{i-1},a_{i-1},q_{i}\right) \geq v,$ or $j_{i}=4$ if $j_{i-1}=0,1,3,4$, and
			$wt\left( q_{i-1},a_{i-1},q_{i}\right) =\infty ,\ $and $D_{i}=C$ if $%
			q_{i}\notin F,$ and $D_{i}=B$ if $q_{i}\in F.$ Next, we assume that there
			exist only a finite number of $i\geq 0$ with $wt\left(
			q_{i},a_{i},q_{i+1}\right) \neq \infty .$ Then, there exist an $h\geq 0$
			with $wt\left( q_{h},a_{h},q_{h+1}\right) =v^{\prime \prime }\geq v,$ and a $%
			j>h$ with $wt\left( q_{k},a_{k},q_{k+1}\right) =\infty $ for all $k\geq j.$
			Then, we can define a successful path $P_{w}^{\prime \prime }$ of $\mathcal{M%
			}^{\geq v}$ over $w$ as follows%
			\bigskip
			
			$P_{w}^{\prime \prime } =\left( \left( q_{i},j_{i},D_{i}\right)
			,a_{i},\left( q_{i+1},j_{i+1},D_{i+1}\right) \right) _{0\leq i\leq h-1}$
			
			$\left( \left( q_{h},j_{h},D_{h}\right) ,a_{h},\left(
			s_{q_{h+1}},5,D_{h+1}\right) \right) $
			
			$\left( \left( s_{q_{i}},5,D_{i}\right)
			,a_{i},\left( s_{q_{i+1}},5,D_{i+1}\right) \right) _{h+1\leq i\leq j-1}$
			
			$\left( \left( s_{q_{j}},5,D_{j}\right) ,a_{j},\left(
			s_{q_{j+1}},6,D_{j+1}\right) \right)$
			
			$\left( \left( s_{q_{k}},6,D_{k}\right)
			,a_{k},\left( s_{q_{k+1}},6,D_{k+1}\right) \right) _{k\geq j+1},$
			
			\bigskip
			\noindent where $\left( q_{0},j_{0},D_{0}\right) =\left( q_{0},0,C\right) .$ Moreover,
			for all $0\leq i\leq h,$ $j_{i}=1,$ if $j_{i-1}=0,1,3,4$, and $wt\left(
			q_{i-1},a_{i-1},q_{i}\right) <v,$ or $j_{i}=3$ if $j_{i-1}=0,1,3,4$, and $%
			wt\left( q_{i-1},a_{i-1},q_{i}\right) \geq v,$ or $j_{i}=4$ if $%
			j_{i-1}=0,1,3,4$, and $wt\left( q_{i-1},a_{i-1},q_{i}\right) =\infty .$
			Finally, $D_{l}=C$ if $q_{l}\notin F,$ and $D_{l}=B$ if $q_{l}\in F$ for
			every $l\geq 1.$ (We shall call such a path $P_{w}^{\prime \prime }$, a path
			of form B). Hence, $w\in \left\Vert \mathcal{M}^{\geq v}\right\Vert $.
			
			We proved that $\left\{ w\in A^{\omega }\mid \left( \left\Vert \mathcal{N}%
			\right\Vert ,w\right) \geq v\right\} \subseteq \left\Vert \mathcal{M}^{\geq
				v}\right\Vert .$ In order to prove the converse inclusion, we let $%
			P_{w}^{\prime \prime }=\left( \left( q_{i},j_{i},D_{i}\right) ,a_{i},\left(
			q_{i+1},j_{i+1},D_{i+1}\right) \right) _{i\geq 0}$ with $\left(
			q_{0},j_{0},D_{0}\right) =\left( q_{0},0,C\right) $ be a\ successful path of
			$\mathcal{M}^{\geq v}$ over $w$. This implies that $In^{Q}\left(
			P_{w}^{\prime \prime }\right) =\widehat{F}\in F^{\geq v},$ i.e., $\widehat{F}%
			\cap F_{1}\neq \emptyset ,$ and $\widehat{F}\cap F_{2}\neq \emptyset .$
			Moreover, by definition of $\Delta ^{\geq v}$ we have that $F_{1}\cap
			\widehat{F}\subseteq Q\times \left\{ 3\right\} \times \left\{ C,B\right\} $
			(case 1), or $F_{1}\cap \widehat{F}\subseteq \widehat{Q}\times \left\{
			2\right\} \times \left\{ C,B\right\} $ (case 2), or $F_{1}\cap \widehat{F}%
			\subseteq \widehat{Q}\times \left\{ 6\right\} \times \left\{ C,B\right\} $
			(case 3), and $F_{2}\cap \widehat{F}\subseteq Q\times \left\{
			0,1,3,4\right\} \times \left\{ B\right\} ,$ or $\widehat{F}\cap
			F_{2}\subseteq \widehat{Q}\times \left\{ 2,5,6\right\} \times \left\{
			B\right\} .$
			
			Moreover,%
			\begin{equation*}
				\left( \left( \widehat{Q}\times \left\{ 6\right\} \times \left\{ C,B\right\}
				\right) \cap \widehat{F}\neq \emptyset ,\text{ or }\left( \widehat{Q}\times
				\left\{ 2\right\} \times \left\{ C,B\right\} \right) \cap \widehat{F}\neq
				\emptyset \right) ,
			\end{equation*}%
			implies
			\begin{equation*}
				\widehat{F}\cap \left( \widehat{Q}\times \left\{ 2,5,6\right\} \times
				\left\{ B\right\} \right) \neq \emptyset ,
			\end{equation*}
			and vice-versa. We assume first that for $P_{w}^{\prime \prime }$ case (2),
			or case (3) holds, then $P_{w}^{\prime \prime }$ is either of form (A), or
			of form (B). If $P_{w}^{\prime \prime }$ is of form (A), then $P_{w}=\left(
			q_{i},a_{i},q_{i+1}\right) _{i\geq 0}\in succ_{\mathcal{N}}\left( w\right) $
			with $weight_{\mathcal{N}}\left( P_{w}\right) =\infty \geq v.$ If $%
			P_{w}^{\prime \prime }$ is of form (B),\ then $P_{w}=\left(
			q_{i},a_{i},q_{i+1}\right) _{i\geq 0}\in succ_{\mathcal{N}}\left( w\right) $
			with $weight_{\mathcal{N}}\left( P_{w}\right) =v^{\prime }\geq v,v^{\prime
			}\in
			\mathbb{Q}
			.$ If case (1) holds, then $P_{w}^{\prime \prime }$ is of form (A), and $P_{w}=\left(
			q_{i},a_{i},q_{i+1}\right) _{i\geq 0}$ is a successful path of $%
			\mathcal{N}$ over $w$ with $weight_{\mathcal{N}}\left( P_{w}\right)
			=v^{\prime }\geq v,v^{\prime }\in
			\mathbb{Q}
			.$ Thus, $\left\Vert \mathcal{M}^{\geq v}\right\Vert \subseteq \left\{ w\in
			A^{\omega }\mid \left( \left\Vert \mathcal{N}\right\Vert ,w\right) \geq
			v\right\} $.
			
		\end{proof}

		\bigskip
		
		\begin{proof}[Proof of Lemma \protect\ref{Lemma 2 inf}]
			First, we assume that $v\in Q.$ We let $\mathcal{M}^{\geq v}=\left(
			Q^{\geq v},A,q_{0}^{\geq v}\Delta ^{\geq v},\mathcal{R}^{\geq v}\right) $
			where
			
			\begin{itemize}
				\item $Q^{\geq v}=\left( Q\times \left\{ 0,1,3,4\right\} \times \left\{
				B,C\right\} \right) \cup \left( \widehat{Q}\times \left\{ 2,5,6\right\}
				\times \left\{ B,C\right\} \right) $
				
				where $\widehat{Q}=\left\{ s_{q}\mid q\in Q\right\} $ is a set of copies of
				the elements of $Q$
				
				\item $q_{0}^{\geq v}=\left( q_{0},0,C\right) $
				
				\item For all $\left( \left( q,i,k\right) ,a,\left( \overline{q}%
				,j,l\right) \right) \in Q^{\geq v}\times A\times Q^{\geq v}$, it holds $%
				\left( \left( q,i,k\right) ,a,\left( \overline{q},j,l\right) \right) \\\in
				\Delta ^{\geq v}$ iff
				\begin{enumerate}[label=\arabic*.]				
				\item $i=0$ and $j=1,$ $k=l=C,$ and $q=q_{0},\overline{q}\notin F$ and $wt\left(
				q,a,\overline{q}\right) <v$ with $wt\left( q,a,\overline{q}\right) \neq
				-\infty ,$ or
				
				\item $i=0$ and $j=1,$ $k=C,l=B$ and $q=q_{0},\overline{q}\in F$ and $wt\left( q,a,%
				\overline{q}\right) <v$ with $wt\left( q,a,\overline{q}\right) \neq -\infty
				, $ or
				
				\item $i=0$ and $j=3,k=l=C,$ and $q=q_{0},\overline{q}\notin F,wt\left( q,a,%
				\overline{q}\right) \geq v,$ and $wt\left( q,a,\overline{q}\right) \neq
				\infty ,$ or
				
				\item $i=0$ and $j=3,k=C,l=B$, and $q=q_{0},\overline{q}\in F,wt\left( q,a,%
				\overline{q}\right) \geq v,$ and $wt\left( q,a,\overline{q}\right) \neq
				\infty ,$ or
				
				\item $i=0$ and $j=4,k=l=C,$ and $q=q_{0},\overline{q}\notin F,wt\left( q,a,%
				\overline{q}\right) =\infty ,$ or
				
				\item $i=0$ and $j=4,k=C,l=B,$ and $q=q_{0},\overline{q}\in F,wt\left( q,a,%
				\overline{q}\right) =\infty ,$ or
				
				\item $i=0$ and $j=2,$ $k=l=C,$ and $q=q_{0},\overline{q}=s_{\widetilde{q}}\in
				\widehat{Q},\widetilde{q}\notin F,wt\left( q_{0},a,\widetilde{q}\right)
				=\infty ,$ or
				
				\item $i=0$ and $j=2,$ $k=C,l=B,$ and $q=q_{0},\overline{q}=s_{\widetilde{q}}\in
				\widehat{Q},\widetilde{q}\in F,wt\left( q_{0},a,\widetilde{q}\right) =\infty
				,$ or
				
				\item $i=2$ and $j=2,$ $k\in \left\{ C,B\right\} ,l=B,$ and $q=s_{q^{\prime \prime
				}},\overline{q}=s_{\widetilde{q}},\widetilde{q}\in F,$ $wt\left( q^{\prime
					\prime },a,\widetilde{q}\right) =\infty ,$ or
				
				\item $i=2$ and $j=2,$ $k\in \left\{ C,B\right\} ,l=C,$ and $q=s_{q^{\prime \prime
				}},\overline{q}=s_{\widetilde{q}},\widetilde{q}\notin F,$ $wt\left(
				q^{\prime \prime },a,\widetilde{q}\right) =\infty ,$ or
				
				\item $i=1,3,4$ and $j=3,$ $k\in \left\{ C,B\right\} ,l=B,q\in Q,\overline{q}\in
				F, $ and \ $wt\left( q,a,\overline{q}\right) \geq v,$ and $wt\left( q,a,%
				\overline{q}\right) \neq \infty ,$ or
				
				\item $i=1,3,4$ and $j=3,$ $k\in \left\{ C,B\right\} ,l=C,q\in Q,\overline{q}\in
				Q\backslash F,$ and \ $wt\left( q,a,\overline{q}\right) \geq v,$ and $%
				wt\left( q,a,\overline{q}\right) \neq \infty ,$ or
				
				\item $i=1,3,4$ and $j=1,$ $k\in \left\{ C,B\right\} ,l=B,q\in Q,\overline{q}\in
				F, $ and $wt\left( q,a,\overline{q}\right) <v,wt\left( q,a,\overline{q}%
				\right) \neq -\infty ,$ or
				
				\item $i=1,3,4$ and $j=1,$ $k\in \left\{ C,B\right\} ,l=C,q\in Q,\overline{q}\in
				Q\backslash F,$ and $wt\left( q,a,\overline{q}\right) <v,wt\left( q,a,%
				\overline{q}\right) \neq -\infty ,$ or
				
				\item $i=1,3,4$ and $j=4,$ $k\in \left\{ C,B\right\} ,l=C,q\in Q,\overline{q}\in
				Q\backslash F,$ and $wt\left( q,a,\overline{q}\right) =\infty ,$ or
				
				\item $i=1,3,4$ and $j=4,$ $k\in \left\{ C,B\right\} ,l=B,q\in Q,\overline{q}\in
				F, $ and $wt\left( q,a,\overline{q}\right) =\infty ,$ or
				
				\item $i=0,$ and $j=5,$\ $k\in \left\{ C,B\right\} ,l=C,$ and $q=q_{0},\overline{q}%
				=s_{\widetilde{q}}\in \widehat{Q},\widetilde{q}\notin F,$ $wt\left(
				q_{0},a,\widetilde{q}\right) \geq v,$ or

                \item $i=0,$ and $j=5,$\ $k\in \left\{ C,B\right\} ,l=B,$ and $q=q_{0},\overline{q}%
				=s_{\widetilde{q}}\in \widehat{Q},\widetilde{q}\in F,$ $wt\left(
				q_{0},a,\widetilde{q}\right) \geq v,$ or

				\item $i=5,$ and $j=5,$ \ $k\in \left\{ C,B\right\} ,l=C,$ and $q,\overline{q}\in
				\widehat{Q},q=s_{q^{\prime \prime }},\overline{q}=s_{\widetilde{q}},%
				\widetilde{q}\notin F,wt\left( q^{\prime \prime },a,\widetilde{q}\right)
				\geq v,$ or
				
				\item $i=5,$ and $j=5,$ \ $k\in \left\{ C,B\right\} ,l=B,$ and $q,\overline{q}\in
				\widehat{Q},q=s_{q^{\prime \prime }},\overline{q}=s_{\widetilde{q}},%
				\widetilde{q}\in F,wt\left( q^{\prime \prime },a,\widetilde{q}\right) \geq
				v. $
			\end{enumerate}
				\item The family $\mathcal{R}^{\geq v}$ is defined as follows
				
				$\mathcal{R}^{\geq v}=\left\{ \left( \widehat{S},\widehat{F}\right) \mid
				\widehat{F}\in F^{\geq v},\widehat{S}\subseteq Q\times \left\{ 1\right\}
				\times \left\{ B,C\right\} \right\} \cup \left\{ \left( \emptyset ,\overline{%
					F}\right) \mid \overline{F}\in \overline{F}^{\geq v}\right\}\\ \cup \left\{
				\left( \emptyset ,\widetilde{F}\right) \mid \widetilde{F}\in \widetilde{F}%
				^{\geq v}\right\} $ where
				
				$\widehat{F}^{\geq v}=\left\{ \widehat{F}\mid \widehat{F}\subseteq F_{1}\cup
				F_{2},F_{1}\cap \widehat{F}\neq \emptyset ,F_{2}\cap \widehat{F}\neq
				\emptyset \right\} $ with
				
				$F_{1}=Q\times \left\{ 3\right\} \times \left\{ C,B\right\} ,F_{2}=Q\times
				\left\{ 0,1,3,4\right\} \times \left\{ B\right\} ,$ and
				
				$\overline{F}^{\geq v}=\left\{ \overline{F}\mid \overline{F}\subseteq
				\overline{F}_{1}\cup \overline{F}_{2},\overline{F}_{1}\cap \overline{F}\neq
				\emptyset ,\overline{F}_{2}\cap \overline{F}\neq \emptyset \right\} $ with
				
				$\overline{F}_{1}=\widehat{Q}\times \left\{ 2\right\} \times \left\{
				C,B\right\} ,\overline{F}_{2}=\widehat{Q}\times \left\{ 2\right\} \times
				\left\{ B\right\} ,$ and
				
				$\widetilde{F}^{\geq v}=\left\{ \widetilde{F}\mid \widetilde{F}\subseteq
				\widehat{Q}\times \left\{ 5\right\} \times \left\{ B\right\} \right\} .$
			\end{itemize}
						We prove now that $\left\Vert \mathcal{M}^{\geq v}\right\Vert =\left\{ w\in
			A^{\omega }\mid \left( \left\Vert \mathcal{N}\right\Vert ,w\right) \geq
			v\right\} .$ We let first $w=a_{0}a_{1}a_{2}\ldots \in \left\{ w\in
			A^{\omega }\mid \left( \left\Vert \mathcal{N}\right\Vert ,w\right) \geq
			v\right\} .$ This implies that%
			\begin{equation*}
				\left( \left\Vert \mathcal{N}\right\Vert ,w\right) =\underset{P_{w}\in succ_{%
						\mathcal{N}}\left( w\right) }{\sup }\left( weight_{\mathcal{N}}\left(
				P_{w}\right) \right) =\infty
			\end{equation*}%
			or%
			\begin{equation*}
				\left( \left\Vert \mathcal{N}\right\Vert ,w\right) =\underset{P_{w}\in succ_{%
						\mathcal{N}}\left( w\right) }{\sup }\left( weight_{\mathcal{N}}\left(
				P_{w}\right) \right) =v^{\prime \prime }\in
				\mathbb{Q}
				\text{ with }v^{\prime \prime }\geq v.
			\end{equation*}%
			If the first case is true there exist a successful path $P_{w}=\left(
			q_{i},a_{i},q_{i+1}\right) _{i\geq 0}$ of $\mathcal{N}$ over $w$ such that $%
			wt\left( q_{i},a_{i},q_{i+1}\right) =\infty $ for all $i\geq 0.$ We consider
			the path%
			\begin{equation*}
				P_{w}^{\prime \prime }=\left( \left( q_{0},0,C\right) ,a_{0},\left(
				s_{q_{1}},2,D_{1}\right) \right) \left( \left( s_{q_{i}},2,D_{i}\right)
				,a_{i},\left( s_{q_{i+1}},2,D_{i+1}\right) \right) _{i\geq 1}
			\end{equation*}%
			where $D_{i}=B$ if $q_{i}\in F,$ and $D_{i}=C$ if $q_{i}\notin F$ \ for
			every $i\geq 1.$ (We shall call such a path $P_{w}^{\prime \prime }$ a path
			of form A). Clearly, $P_{w}^{\prime \prime }$ is a successful path of $%
			\mathcal{M}^{\geq v}$ over $w,$ i.e., $w\in \left\Vert \mathcal{M}^{\geq
				v}\right\Vert $ as desired. Now, if the second case is true there exist a
			successful path $P_{w}= \left( q_{i},a_{i},q_{i+1}\right)
			_{i\geq 0}$ of $\mathcal{N}$ over $w$ such that $wt\left(
			q_{i},a_{i},q_{i+1}\right) \neq \infty $ for at least one $i\geq 0.$ Assume
			that there exist infinitely many $i\geq 0$ such that $wt\left(
			q_{i},a_{i},q_{i+1}\right) \neq \infty .$ Then, there are also infinitely
			many $i\geq 0$ with $wt\left( q_{i},a_{i},q_{i+1}\right) =v^{\prime \prime
			}\geq v,$ and only a finite number of $i\geq 0$ with $wt\left(
			q_{i},a_{i},q_{i+1}\right) <v.$ Thus, we can define a
			successful path $P_{w}^{\prime \prime }$ of $\mathcal{M}^{\geq v}$ over $w$
			of the form $\left( \left( q_{i},j_{i},D_{i}\right) ,a_{i},\left(
			q_{i+1},j_{i+1},D_{i+1}\right) \right) _{i\geq 0}$ with $\left(
			q_{0},j_{0},D_{0}\right) =\left( q_{0},0,C\right) ,$\ and for all $i\geq 1,$
			$j_{i}=1,$ if $j_{i-1}=0,1,3,4$, and $wt\left( q_{i-1},a_{i-1},q_{i}\right)
			<v,$ or $j_{i}=3$ if $j_{i-1}=0,1,3,4$, and $wt\left(
			q_{i-1},a_{i-1},q_{i}\right) \geq v,$ or $j_{i}=4$ if $j_{i-1}=0,1,3,4$, and
			$wt\left( q_{i-1},a_{i-1},q_{i}\right) =\infty ,\ $and $D_{i}=C$ if $%
			q_{i}\notin F,$ and $D_{i}=B$ if $q_{i}\in F.$ Next, we assume that
			there exist only a finite number of $i\geq 0$ with $wt\left(
			q_{i},a_{i},q_{i+1}\right) \neq \infty .$ Then, for all $h\geq 0$ with $%
			wt\left( q_{h},a_{h},q_{h+1}\right) \neq \infty ,$ it holds $wt\left(
			q_{h},a_{h},q_{h+1}\right) =v^{\prime \prime }\geq v.$ Then, we can define a
			successful path $P_{w}^{\prime \prime }$ of $\mathcal{M}^{\geq v}$ over $w$
			as follows%
			\begin{equation*}
				P_{w}^{\prime \prime }=\left( \left( q_{0},0,C\right) ,a_{0},\left(
				s_{q_{1}},5,D_{1}\right) \right) \left( \left( q_{i},5,D_{i}\right)
				,a_{i},\left( q_{i+1},5,D_{i+1}\right) \right) _{i\geq 1}
			\end{equation*}%
			where $D_{i}=C$ if $q_{i}\notin F,$ and $D_{i}=B$ if $q_{i}\in F$ for all $%
			i\geq 1$ (We shall call such a path $P_{w}^{\prime \prime }$, a path of form
			B). Hence, $w\in \left\Vert \mathcal{M}^{\geq v}\right\Vert $.
			
			We proved that $\left\{ w\in A^{\omega }\mid \left( \left\Vert \mathcal{N}%
			\right\Vert ,w\right) \geq v\right\} \subseteq \left\Vert \mathcal{M}^{\geq
				v}\right\Vert .$ In order to prove the converse inclusion, we let $%
			P_{w}^{\prime \prime }=\left( \left( q_{i},j_{i},D_{i}\right) ,a_{i},\left(
			q_{i+1},j_{i+1},D_{i+1}\right) \right) _{i\geq 0}$ with $\left(
			q_{0},j_{0},D_{0}\right) =\left( q_{0},0,C\right) $ be a\ successful path of
			$\mathcal{M}^{\geq v}$ over $w$. This implies that $In^{Q}\left(
			P_{w}^{\prime \prime }\right) \cap \widehat{F}\neq \emptyset ,$ and $%
			In^{Q}\left( P_{w}^{\prime \prime }\right) \cap \widehat{S}=\emptyset $
			(case 1)$,$ or $In^{Q}\left( P_{w}^{\prime \prime }\right) \cap \overline{F}%
			\neq \emptyset $ (case 2)$,$ or $In^{Q}\left( P_{w}^{\prime \prime }\right)
			\cap \widetilde{F}\neq \emptyset $ (case 3)$.$
			
			We assume first that for $P_{w}^{\prime \prime }$ case (2), or case (3)
			holds, then $P_{w}^{\prime \prime }$ is either of form (A), or of form (B).
			If $P_{w}^{\prime \prime }$ is of form (A),\ and $P_{w}=\left(
			q_{i},a_{i},q_{i+1}\right) _{i\geq 0}\in succ_{\mathcal{N}}\left(
			P_{w}\right) $ with $weight_{\mathcal{N}}\left( P_{w}\right) =\infty \geq v.$
			If $P_{w}^{\prime \prime }$ is of form (B),\ then $P_{w}=\left(
			q_{i},a_{i},q_{i+1}\right) _{i\geq 0}\in succ_{\mathcal{N}}\left(
			P_{w}\right) $ with $weight_{\mathcal{N}}\left( P_{w}\right) \geq v
			.$ If case (1) holds, then $P^{\prime \prime }_{w}$ is also a successful path of
			$\mathcal{N}$ over $w$ with $weight_{\mathcal{N}}\left( P_{w}\right)
			=v^{\prime }\geq v,v^{\prime }\in
			\mathbb{Q}
			.$
			
			Thus, $\left\Vert \mathcal{M}^{\geq v}\right\Vert \subseteq \left\{ w\in
			A^{\omega }\mid \left( \left\Vert \mathcal{N}\right\Vert ,w\right) \geq
			v\right\} $, and this concludes our proof.
			
		\end{proof}

	\end{document}